\def\keywordfont{\fontsize{9}{10}\selectfont\leftskip1pc\rightskip5pc plus1fill}%
\newenvironment{jelcode}{%
	\par\addvspace{13pt plus2pt minus1pt}
	\keywordfont\noindent{\bfseries{JEL CLASSIFICATION}: }\ignorespaces%
}
\newenvironment{keywords}{%
	\par\addvspace{13pt plus2pt minus1pt}
	\keywordfont\noindent{\bfseries{KEYWORDS: }}\ignorespaces%
}
\theoremstyle{definition}
\theoremstyle{plain}
\newtheorem{theorem}{Theorem}[section]
\newtheorem{proposition}[theorem]{Proposition}
\newtheorem{lem}[theorem]{Lemma}
\theoremstyle{definition}
\newtheorem{example}{Example}
\newtheorem{asm}{Assumption}
 \newtheorem{remark}{Remark}
\newcommand{\air}{\vspace{0.1cm}} 
\DeclareMathOperator{\E}{\mathbb{E}}
\DeclareMathOperator{\var}{\mathrm{Var}}
\DeclarePairedDelimiter{\abs}{\lvert}{\rvert} 
\newcommand{\norm}[1]{\left\lVert#1\right\rVert}
\DeclareMathOperator{\I}{\mathbb{I}}
\DeclareMathOperator{\R}{\mathbb{R}}
\newcommand{\Mcal}{\mathcal{M}}
\DeclarePairedDelimiter{\curl}{\lbrace}{\rbrace}
\DeclarePairedDelimiter{\floor}{\lfloor}{\rfloor}
\newcommand\blfootnote[1]{%
	\begingroup
	\renewcommand\thefootnote{}\footnote{#1}%
	\addtocounter{footnote}{-1}%
	\endgroup
} 
\begin{document}

	\title{\vspace{-2.8cm} Inference on Extreme Quantiles of \\ Unobserved Individual Heterogeneity  }

	\author{Vladislav Morozov\footnote{\scriptsize \href{mailto:morozov@uni-bonn.de}{morozov@uni-bonn.de};  Universität Bonn, Institute of Finance and Statistics}\blfootnote{ \scriptsize Acknowledgments: I am profoundly grateful to Andrea Sy for her generosity and knowledge of the CBI data. I am indebted to  Christian Brownlees and Kirill Evdokimov  for their  support and guidance.  I am also grateful to Iván Fernández-Val, Adam Lee, Zoel Martin, Geert Mesters, Katerina Petrova,  Barbara Rossi, Alina Shirshikova, André Souza,	   Piotr Zwiernik, and the participants at the 2021 European  Winter Meeting of the Econometric Society, the 2021 Spanish Economic Association Symposium, the  2022 Royal Economic Society Annual Conference, the 2022 BSE Summer Forum,  the 27th International Panel Data Conference, the 10th ICEEE, and the 3rd CESC for comments and discussions. }
	}

	\date{March 13, 2025}

\vspace{-1.2cm}

	\maketitle

\vspace{-0.2cm}

\begin{abstract}
\noindent We develop a methodology for conducting inference on extreme quantiles of unobserved individual heterogeneity (e.g., heterogeneous coefficients, treatment effects) in panel data and meta-analysis settings. Inference is challenging in such settings: only noisy estimates of heterogeneity are available, and central limit approximations perform poorly in the tails. We derive a necessary and sufficient condition under which noisy estimates are informative about extreme quantiles, along with sufficient rate and moment conditions. Under these conditions, we establish an extreme value theorem and an intermediate order theorem for noisy estimates. These results yield simple optimization-free confidence intervals for extreme quantiles. Simulations show that our confidence intervals have favorable coverage and that the rate conditions matter for the validity of inference. We illustrate the method with an application to firm productivity differences between denser and less dense  areas.

\end{abstract}
 
\begin{keywords}
unobserved heterogeneity, panel data, heterogeneous coefficients, estimation noise, extreme value theory, feasible inference
\end{keywords}

\begin{jelcode}
	C21, C23
\end{jelcode}

 \newpage 
 
\section{Introduction}

%

Extreme quantiles of unobserved individual heterogeneity (UIH) are of interest in the analysis of economic panel data and in  meta-analysis. UIH includes heterogeneous coefficients, treatment effects, and other latent variables. For example, in the setting of \cite{Combes2012b}, one might seek to estimate the lowest level of firm-specific productivity compatible with firm survival ---   the zeroth quantile of the productivity distribution among surviving firms.\footnote{Here and in the empirical application, ``\emph{productivity}'' refers to firm-specific total factor productivity within a Cobb-Douglas production function, in line with the production function estimation literature (e.g., \cite{Ackerberg2007EconometricToolsAnalyzing}), rather than the production frontier estimation literature (e.g., \cite{Kumbhakar2020-1}). See remark \ref{remark:production-frontier}.}

However, inference on extreme quantiles is challenging as  UIH is not directly observed --- only noisy estimates derived from individual time series, studies, or clustered data are available.  It is not  a priori clear when such estimates yield useful information about the quantiles of interest. Unlike means, extreme quantiles do not benefit from noise ``averaging out''. Moreover, estimation noise is often correlated with true UIH due to dependence between UIH and covariates used in estimation (e.g., \cite{Heckman2001, Browning2007, Browning2010}), further complicating inference.

 	This paper develops a methodology for inference on extreme quantiles of UIH using noisy estimates and establishes sharp conditions under which such estimates are informative. The key requirement is pointwise asymptotic tail equivalence --- the tails of the noisy estimates' distribution must converge in a certain weak pointwise sense to the tails of the latent UIH distribution. We construct confidence intervals and hypothesis tests using self-normalizing ratios of extreme or intermediate order statistics and derive   extreme and intermediate value theorems  for noisy estimates.

  Our inference methods rely on two asymptotic approximations:  {extreme order} and  {intermediate order}. Extreme order methods exploit ratios of the highest order statistics, and we show that the limiting distributions of these ratios can be estimated via subsampling or simulation. Intermediate order methods use ratios of statistics that are asymptotically in the tail but not the most extreme. We construct a ratio statistic that is asymptotically standard normal. This ratio requires no tuning parameters, in contrast to conventional intermediate order approaches \citep[ch. 3]{DeHaan2006}. Our framework complements \cite{Jochmans2019}, who develop central order approximations methods for central quantiles of UIH. \label{page:editor:intermedite-novelty}

	We show that inference is valid if and only if tail equivalence holds, under minimal assumptions on the marginal distributions of UIH and estimation noise. The result allows for complex dependence structures between noise and true UIH, an important feature in non-experimental settings \citep{Heckman2001,Browning2007,Browning2010}.

	For a broad class of distributions, we derive sufficient conditions for tail equivalence in terms of rates on the cross-sectional and individual sample sizes $N$ and $T$. These conditions require standard moment or normality assumptions on the noise and a lower bound on the EV index. If the true UIH distribution has an infinite tail, tail equivalence holds under mild restrictions. For instance, if the noise has at least eight finite moments, inference remains valid provided $N/T^4 \to 0$, matching the central quantile inference condition in \cite{Jochmans2019}. If the noise is Gaussian, $N$ may grow almost exponentially relative to $T$. In contrast, rate conditions are stricter for distributions with finite endpoints, depending on the relative heaviness of UIH and noise tails.
 
We propose a rule of thumb for choosing inference methods. Broadly, the rule depends on  $N$ and the quantile of interest. For smaller $N$, extreme order methods are preferable for tail quantiles. In larger samples, one may also use the simpler intermediate order methods.

 Our simulation studies show if rate conditions hold, our methods offer favorable coverage properties in the tails. The rate conditions are important, and their failure may lead to  distorted inference.  We also show that our rule of thumb for method choice performs well. 
  
	We illustrate our methodology with an application to firm productivity in denser and less dense areas in the setting of \cite{Combes2012b}. Our analysis addresses two key aspects of their study. First, we examine firm selection, which is hypothesized to left-truncate the productivity distribution by imposing a minimum survival threshold. We find no evidence of such truncation, reinforcing the conclusion of \cite{Combes2012b} that any truncation must be identical across areas. Second, \cite{Combes2012b} assume that productivity distributions differ only in mean, variance, and truncation. We nonparametrically show that their tails are similar after adjusting for mean and variance, lending credence to that assumption.
 
This paper contributes to several strands of literature. First,  it relates to recent work  on  distributional properties of UIH \citep{Arellano2012IdentifyingDistributionalCharacteristics, Okui2019, Barras2021, Sasaki2022, Jochmans2019}.  \label{page:editor:lit-review}
 First, \cite{Jochmans2019} develop results for central quantiles of UIH, but their approach relies on the central limit theorem for quantiles and thus unsuitable for extremes.  Second, to the best of our knowledge, \cite{Sasaki2022} is the only paper that discusses inference on extreme quantiles of UIH. Specifically, they consider extreme quantiles of coefficients in a simple linear model and describe a high-level condition for validity of inference based on estimates. We focus on the general problem of inference using noisy estimates that covers linear, nonlinear, and nonparametric estimators. We obtain   necessary and sufficient conditions under which such estimates are informative, along with explicit rate conditions.
Second, our extreme order approximations relate to fixed-$k$ tail inference methods \citep{Muller2017, Sasaki2022}, though our approach does not require bounds the EV index or optimization.  Third, we contribute the literature on intermediate order inference \citep{DeHaan2006, Davison2015StatisticsExtremes}). We show that one can use self-normalized ratios of intermediate order statistics to obtain an asymptotically pivotal statistic that involves no tuning parameters. Such a construction is novel both in the noiseless and the noisy cases. Finally, our results contrast with work on extreme conditional quantiles and treatment effects \citep{Chernozhukov2005, Chernozhukov2011, Zhang2018}, which focus on the extreme (conditional) quantiles  of observable rather than latent heterogeneity.

The rest of the paper is organized as follows. Section \ref{section:setup} formalizes our setup and  assumptions. In section \ref{section:distributions} we lay  down the probabilistic foundations of our inference theory by proving  extreme and intermediate extreme value theorems for noisy estimates. Building on these results, in section \ref{section:inference} we discuss three approaches to inference. 
  In section \ref{section:simulations} we explore performance of our CIs in a Monte Carlo setting.
Section \ref{section:empirical} contains the empirical application.
 All proofs   are collected in the appendix. We provide additional results    in the Online Appendix, available from the author's website.

\section{Setting and Assumptions}\label{section:setup}

\subsection{Problem statement}

Suppose that unobserved individual heterogeneity $\theta_i$  is sampled in an IID fashion   from some distribution $F$.   {$\theta_i$ may be a  treatment effect, effect size, a coordinate of a vector of individual-specific coefficients,  value of a function at a point, etc.}  $i$ indexes cross-sectional units or individual studies; in what follows  we will refer to them as ``units''.

  Our goal is to estimate quantiles $F^{-1}(q)$ 
  where $q$ is close to 0 or 1, and to conduct inference on them.   {In particular, we are interested in constructing confidence intervals $F^{-1}(q)$, along with  hypothesis tests for hypotheses like $H_0: \theta_i\geq 0$, which is equivalent to  $H_0: F^{-1}(0)\geq 0$}.  Without loss of generality we focus on the right tail of $F$.  
  
  \subsection{Data generating process}

We do not observe $\theta_i$ directly; instead we only see noisy observations $\vartheta_{i, T}$:
\begin{equation}\label{equation:noisyThetaDefinition}
\vartheta_{i, T}= \theta_i + \dfrac{1}{T_i^{p}}\varepsilon_{i, T_i}, \quad i=1, \dots, N, 
\end{equation}
 where $T_i$ is the sample size available for unit $i$,  $\varepsilon_{i, T_i}$  
is the scaled estimation error, $\varepsilon_{i, T}=O_p(1)$ for all $T$,  and $p>0$ is the convergence rate the estimator.
For clarity of exposition, in what follows we assume that for all units $T_i=T$ for some common $T$; in   \textcolor{black}{the Online Appendix} we show that  our results   extend to the unbalanced case. 
$p$   is determined by the estimation method used in a given case. For estimators convergent at the parametric rate $T^{-1/2}$, $p$ is equal to $1/2$, but we allow other rates. $\E[\varepsilon_{i, T}]$ may be nonzero and need not converge to 0 as $T\to\infty$, but   bias may not diverge.

Representation \eqref{equation:noisyThetaDefinition} is   compatible with any estimation method with a known rate of convergence;  $\varepsilon_{i, T}$ can always be   defined 	 via the identity $\varepsilon_{i, T}= T^p(\vartheta_{i, T}-\theta_i)$.  
  Intuitively, $\vartheta_{i, T}$ are    estimators of $\theta_i$ that have growing precision. $\vartheta_{i, T}$ can potentially be     biased. Our setting nests the setup considered by \cite{Jochmans2019} and the typical setup of 
  meta-analysis (see e.g. \cite{Higgins2009}).  
  
  We provide several examples of how $\vartheta_{i,  T}$ may be constructed.

 \begin{example}[Unit-wise OLS and IV]
 	\label{example:IV}
 Let $y_{it} =\theta_i x_{it} + u_{it}$. First suppose that $\E[u_{it}x_{it}]=0$.   We estimate heterogeneous coefficients $\theta_i$   by unit-wise OLS: $\vartheta_{i, T}=  (T^{-1}\sum_{t=1}^T x_{it}^2  )^{-1}T^{-1}\sum_{t=1}^T x_{it}y_{it}$. Define $\varepsilon_{i,T} =  (T^{-1}\sum_{t=1}^T x_{it}^2  )^{-1}T^{-1/2}\sum_{t=1}^T x_{it}u_{it} $ to write $\vartheta_{i, T}$ as in eq. \eqref{equation:noisyThetaDefinition} for $p=1/2$.   
 Now  let $x_{it}$ be endogenous in the sense that $\E[u_{it}x_{it}]\neq 0$. Suppose a valid instrument $z_{it}$ is available: $\E[z_{it}u_{it}]=0$, $\E[z_{it}x_{it}]\neq 0$. The unit-wise IV estimators are given by $\vartheta_{i, T} = (T^{-1}\sum_{t=1}^T z_{it}x_{it}  )^{-1}T^{-1}\sum_{t=1}^T z_{it}y_{it}$.  In this case $\varepsilon_{i, T}  =  (T^{-1}\sum_{t=1}^T z_{it}x_{it}  )^{-1}T^{-1/2}\sum_{t=1}^T z_{it}u_{it}$. Note that the distribution of $\varepsilon_{i, T}$ may be strongly  dissimilar to the normal  distribution even for large $T$ if $z_{it}$ is not a strong instrument \citep{Nelson1990}.
 \end{example}

 \begin{example}[Nonparametric regression]
 	\label{example:nonparametric}
 	Let   $\theta_i\coloneqq \E[y_{it}|x_{it}=x_0]$ for some fixed value $x_0$. 
 	Let $\vartheta_{i, T} =  (\sum_{t=1}^T K\left({(x_{it}-x_0)}/{h} \right)  )^{-1}{\sum_{t=1}^T y_{it}K ({(x_{it}-x_0)}/{h}  )}{}$ by the Nadaraya-Watson estimator of $\theta_i$ where $K$ is a kernel function and $h$ is a bandwidth parameter.  
 	Let $u_{it} = y_{it} -\theta_i$, and set $\varepsilon_{i, T} =   (\sum_{t=1}^T K ({x_{it}-x_0}/{h}  )  )^{-1}{\sqrt{Th}\sum_{t=1}^T u_{it}K ({(x_{it}-x_0)}/{h}  )}$.   Let $h=T^{-s}, s\in (0, 1)$.  It holds that $\varepsilon_{i, T}= O_p(1)$ and  $\vartheta_{i, T} = \theta_i +T^{-(1-s)/2} \varepsilon_{i, T}$  under suitable conditions on $h$. If $h$ is picked to optimize the convergence rate,   $\varepsilon_{i, T}$   has a non-zero mean even in the limit. 
 \end{example}

\paragraph{Notation}

Let $\vartheta_{1, N, T}\leq \dots \leq \vartheta_{N, N, T}$ be the  order statistics of $\curl{\vartheta_{1, T}, \dots, \vartheta_{N, T}}$, and similarly let $\theta_{1, N}\leq \dots \leq \theta_{N, N}$ be the order statistics of the latent noiseless $\curl{\theta_1, \dots, \theta_N}$.  
$\Rightarrow$ denotes weak convergence, both of random variables and functions.

 \subsection{Assumptions}
\label{subsection:assumptions}

\begin{asm}
	\label{assumption:independence}
	For each $T$, 	$\curl{(\theta_i, \varepsilon_{i, T})}_{i=1, \dots, N}$ are  independent and identically distributed random vectors indexed by $i$.   
\end{asm}

Observations $\vartheta_{i, T}$ are sampled in an IID fashion.  Note that  $\vartheta_{i, T}$ can be  conditionally heteroskedastic: $\var(\vartheta_{i,T}|\theta_i)=\var(\varepsilon_{i, T}|\theta_i)$ can depend on $\theta_i$, provided this   variance exists.


\begin{asm}[]\label{assumption:EV}
	
The distribution	$F$ of $\theta$ is in the weak domain of attraction of an extreme value distribution with extreme value index $\gamma\in \R$.
	
%
 
\end{asm}

Under assumption \ref{assumption:EV}, the classical extreme value theorem  of \cite{Gnedenko1943} applies to the latent noiseless distribution $F$. 
  This will serve as the basis for extending this extreme value convergence to the observed noisy data. Without assumption \ref{assumption:EV}, we would not be able to  conduct  inference using the asymptotic behavior of the sample maximum even if we had access to the true $\theta_i$.
Assumption
\ref{assumption:EV} is a   mild assumption, satisfied by almost all textbook continuous distributions and many discrete ones. Assumption \ref{assumption:EV} is equivalent to certain regular variation conditions on $F$  \citep[theorem 1.2.1]{DeHaan2006}.

Two  key features of our analysis are that we do not restrict the dependence structure between $\theta_i$ and $\varepsilon_{i, T}$ and that we impose no   distributional assumptions on $\varepsilon_{i, T}$.  This approach is motivated by the following practical challenges.
First,  $\theta_i$ and $\varepsilon_{i, T}$ will typically  be related in a complex and unobservable manner  outside of tightly controlled experimental settings. Such dependence may arise if the agent chooses some  covariates with knowledge of $\theta_i$, and these covariates are in turn used in estimation of $\theta_i$.
Second, the distribution of $\varepsilon_{i, T}$ might not be well-approximated by a normal distribution, as in  the IV case of example \ref{example:IV} (though we show  how to exploit normality of $\varepsilon_{i, T}$).

 Instead, we only impose a weak assumption on the marginal distribution $G_T$ of $\varepsilon_{i, T}$.
\begin{asm}[Tightness of $\curl{G_T}_{T=1}^{\infty}$]
	\label{assumption:tightness}
	For any $\varepsilon>0$ there exists some $C_{\varepsilon}\geq 0$ such that for all $T$ it holds that $G_T(-C_{\varepsilon})\leq \varepsilon$ and $1-G_T(C_{\varepsilon})\leq \varepsilon$. 
\end{asm}
Intuitively, assumption \ref{assumption:tightness} 
requires that $\varepsilon_{i, T}$ be defined in such a way that, as $T\to\infty$, the distributions  $G_T$ of $\varepsilon_{i, T}$ do not escape to infinity.
  Assumption \ref{assumption:tightness} holds  automatically if $T^p(\vartheta_{i, T}-\theta_i)$ has a non-degenerate asymptotic distribution. 
Together with definition \eqref{equation:noisyThetaDefinition}, assumption \ref{assumption:tightness}  implies that each $\vartheta_{i, T}$ is consistent for $\theta_i$, but
 we allow $\vartheta_{i, T}$ to be biased in finite samples.
In addition, assumption \ref{assumption:tightness} allows the mean of $\varepsilon_{i, T}$ to be nonzero in the limit, as may occur in nonparametric regression with MSE-minimizing choice of bandwidth (example \ref{example:nonparametric}).

%
%
%

  \section{Extreme Value Theory For Noisy Estimates}\label{section:distributions}

\subsection{Extreme Value Theorem For Noisy Estimates}

The key step towards inference on extreme quantiles is to establish distributional results for the sample maximum  $\vartheta_{N, N, T}$.  The following result gives necessary and sufficient conditions under which   $\vartheta_{N, N, T}$ and the latent maximum $\theta_{N, N}=\max\curl{\theta_1, \dots, \theta_N}$ have the same limit distribution. It introduces the notion of tail equivalence and serves as a stepping stone towards our inference results. 
  
 	\begin{theorem}
 		\label{theorem:evtNoisy}
 		Let assumption \ref{assumption:independence}  hold.
 		Let   constants $a_N, b_N$ and a random variable $X$ be  such that   ${(\theta_{N, N} - b_N)}/{a_N}\Rightarrow X$  as $N\to\infty$. Let $F$ be the cdf of $\theta_i$ and $G_{T}$ be the cdf of $\varepsilon_{i, T}$. 
 		\begin{enumerate}[noitemsep,topsep=0pt,parsep=0pt,partopsep=0pt, label={(\arabic*)}, leftmargin=*]
 			\item \emph{(Transferral of convergence)} Let    the following tail equivalence (TE) conditions hold: for each $\tau\in (0, \infty)$ as $N, T\to\infty$	there exists some $\epsilon\in(0, 1)$
 			\begin{align}
 				& \hspace{-30pt} 	\sup_{u\in \left[0, \epsilon\right]}  \dfrac{1}{a_N}\left( F^{-1}\left(1- \dfrac{1}{N\tau}-u \right)  -F^{-1}\left(1- \dfrac{1}{N\tau} \right) + \dfrac{1}{T^p} G_T^{-1}\left(u\right)  \right) \to 0, & \tag{TE-Sup}\label{equation:tailEquivalenceSup} \\
 				& \hspace{-30pt} 	\inf_{u\in \left[0, \frac{1}{N\tau} \right]}   \dfrac{1}{a_N}\left(F^{-1}\left(1- \dfrac{1}{N\tau} + u\right) -  F^{-1}\left(1- \dfrac{1}{N\tau} \right) +\dfrac{1}{T^p} G_T^{-1}\left(1-u\right) \right)  \to 0. & \tag{TE-Inf} \label{equation:tailEquivalenceInf}
 			\end{align}
 			Then  $
 			(\vartheta_{N, N, T}- b_N)/{a_N} \Rightarrow X$  as $N, T\to\infty$.
 			\item \emph{(Sharpness)} Consider the following TE condition: for each $\tau\in(0, \infty)$ as $N, T\to\infty$ 
 			\begin{equation}
 				\sup_{u\in \left[0, 1-\frac{1}{N\tau}\right]} \dfrac{1}{a_N}\left( F^{-1}\left(1- \dfrac{1}{N\tau}-u \right)  -F^{-1}\left(1- \dfrac{1}{N\tau} \right) + \dfrac{1}{T^p} G_T^{-1}\left(u\right)  \right) \to 0,  \tag{TE-Sup'}\label{equation:tailEquivalenceSupGeneral}
 			\end{equation}
 			Then (i)   \eqref{equation:tailEquivalenceSup} and  \eqref{equation:tailEquivalenceInf} together imply \eqref{equation:tailEquivalenceSupGeneral}; (ii)
 			  \eqref{equation:tailEquivalenceInf}  and \eqref{equation:tailEquivalenceSupGeneral} are sharp in the following sense:  if at least one of the conditions fails,  there exists a sequence of joint distributions of $(\theta_i, \varepsilon_{i, T})$ with given marginal distributions $F, G_T$ such that	$({\vartheta_{N, N, T}- b_N})/{a_N}$ weakly converges to a limit different from $X$ or does not converge at all.
 		\end{enumerate}

 	\end{theorem}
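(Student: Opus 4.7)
The approach is to reduce weak convergence of $(\vartheta_{N,N,T}-b_N)/a_N$ to a purely marginal statement about the tail of $H_T$, the cdf of $\vartheta_{i,T}$, then sandwich that tail between shifted tails of $F$ using distribution-free inequalities, and finally invoke TE-Sup and TE-Inf to pinch the sandwich. By Assumption \ref{assumption:independence} the $\vartheta_{i,T}$ are iid with common cdf $H_T$, so $\Pr(\vartheta_{N,N,T}\le a_Nx+b_N)=H_T(a_Nx+b_N)^N$. Combining this with the standard Poisson criterion for weak convergence of sample maxima reduces the theorem to showing $N(1-H_T(a_Nx+b_N))\to\alpha(x):=-\log\Pr(X\le x)$ at every continuity point $x$, given the analogous statement $N(1-F(a_Nx+b_N))\to\alpha(x)$ supplied by the EV hypothesis on $\theta_{N,N}$.

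The key marginal sandwich is a Bonferroni-type pair of inequalities that holds for any joint law of $(\theta,\varepsilon_T)$ with marginals $F$ and $G_T$: for each $u\in(0,1)$,
\[
\Pr(\theta>t-T^{-p}G_T^{-1}(u))-u \ \le\ \Pr(\theta+T^{-p}\varepsilon_T>t) \ \le\ \Pr(\theta>t-T^{-p}G_T^{-1}(1-u))+u.
\]
Setting $t=a_Nx+b_N$ and choosing $u=u_N$ with $Nu_N\to0$, the shifted thresholds $t-T^{-p}G_T^{-1}(\cdot)$ can be rewritten as $F^{-1}$ evaluated at $1-1/(N\tau)\pm\mathrm{offset}_N$, where $\tau$ is picked so that $1/\tau=\alpha(x)$ and $\mathrm{offset}_N$ coincides with the inner bracket of TE-Inf or TE-Sup. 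TE-Inf supplies a $u_N\in[0,1/(N\tau)]$ driving the upper-side offset to $o(1/N)$, which yields the correct asymptotic upper bound for $N(1-H_T(t_N))$; TE-Sup, holding uniformly on a positive-length interval $[0,\epsilon]$, supplies the matching lower bound. Handling possible atoms of $F$ or $G_T$ requires the usual care with left and right quantile limits but is routine.

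Part 2(i) is deterministic bookkeeping. On $[0,\epsilon]$ the sup in TE-Sup$'$ reduces to that in TE-Sup. On $[\epsilon,1-1/(N\tau)]$, monotonicity of $F^{-1}$ bounds the $F^{-1}$-difference by its value at $u=\epsilon$ (controlled by TE-Sup), while monotonicity of $G_T^{-1}$ bounds $T^{-p}G_T^{-1}(u)$ above by $T^{-p}G_T^{-1}(1-1/(N\tau))$, which is exactly the quantity controlled by TE-Inf at its right endpoint $u=1/(N\tau)$. The main obstacle will be Part 2(ii), the sharpness. If TE-Inf or TE-Sup$'$ fails along some subsequence for some $\tau$, the plan is to exhibit a joint law with the prescribed marginals whose induced $H_T$ misses the target tail asymptotic. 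I would construct such a law by a tailored quantile coupling $(\theta_i,\varepsilon_{i,T})=(F^{-1}(U_i),G_T^{-1}(h(U_i)))$ with $U_i$ uniform on $[0,1]$ and $h$ a measure-preserving map chosen so that the large (resp.\ extremely small) values of $\varepsilon$ align with uniforms producing $\theta$ near $F^{-1}(1-1/(N\tau))$; the non-vanishing TE integrand then translates into a persistent additive defect in $N(1-H_T(t_N))$, which either shifts the EV limit away from $X$ or destroys convergence entirely. Designing $h$ so that the defect is simultaneously realizable, marginal-preserving, and persistent along the subsequence exhibiting the TE failure will be the delicate point of the argument.
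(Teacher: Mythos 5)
Your reduction to tail convergence of $H_T$ and your Bonferroni sandwich are the dual (distribution-function) form of what the paper does in quantile space via Makarov's quantile inequalities, so Part (1) is essentially the paper's proof. There are two genuine gaps, both in Part (2).

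For Part 2(i), your bookkeeping on $[\epsilon,\,1-\tfrac{1}{N\tau}]$ is wrong. You claim TE-Inf controls $T^{-p}G_T^{-1}(1-u)$ ``at its right endpoint $u=1/(N\tau)$,'' but TE-Inf bounds the \emph{infimum} of the full bracket $F^{-1}(1-\tfrac{1}{N\tau}+u)-F^{-1}(1-\tfrac{1}{N\tau})+T^{-p}G_T^{-1}(1-u)$, not the bracket at any particular point and not the $G_T^{-1}$ term in isolation. At the right endpoint the $F^{-1}$ difference can be large (or infinite), so the infimum being small tells you nothing about $T^{-p}G_T^{-1}(1-\tfrac{1}{N\tau})$ alone. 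The clean implication is the direct sandwich already built into the Makarov chain: for all $v$ the Makarov lower quantile bound (sup) is $\le$ the Makarov upper quantile bound (inf), i.e.\ $\sup_{u\in[0,1-\frac{1}{N\tau}]}(\cdots)\le\inf_{u\in[0,\frac{1}{N\tau}]}(\cdots)$, while $[0,\epsilon]\subset[0,1-\tfrac{1}{N\tau}]$ eventually, so TE-Sup$'$ is squeezed between TE-Sup (below) and TE-Inf (above).

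For Part 2(ii), you correctly identify this as the delicate point, but you then only sketch a coupling construction (measure-preserving $h$ aligning tails) and concede that verifying it is ``the delicate point of the argument.'' That is precisely the missing ingredient, and it is the reason to use Makarov rather than a hand-built Bonferroni bound: Makarov's Theorem~2 gives \emph{pointwise sharpness} of the quantile bounds, i.e.\ for each fixed $(N,T)$ and quantile level there exists a joint law with marginals $F, G_T$ attaining the bound exactly. The paper's proof simply invokes that theorem, then handles the subsequence extraction (convergent $\delta_{N,T}\to\delta\neq0$, or unbounded $\delta_{N,T}$). Your Bonferroni inequalities, as stated, are not shown to be tight, so you cannot conclude that the ``persistent additive defect'' you want is realizable by \emph{some} coupling without in effect reproving Makarov. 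Replace your ad-hoc coupling with a citation to Makarov's sharpness result (or prove an attainability statement for your bound first) and the rest of the plan goes through.
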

 
	Theorem \ref{theorem:evtNoisy} establishes the precise conditions under which the maximum of the noisy sample, $\vartheta_{N, N, T}$, inherits the  asymptotic distribution as the latent maximum $\theta_{N, N}$. The result is general, holding regardless of the dependence structure between $\theta_i$ and $\varepsilon_{i, T}$. If the tail equivalence (TE) conditions \eqref{equation:tailEquivalenceSupGeneral} and \eqref{equation:tailEquivalenceInf} fail, there exists at least one possible limiting distribution for $\vartheta_{N, N, T}$ that differs from that of $\theta_{N, N}$, and convergence may   occur only along a subsequence.

   At the core of this result are \eqref{equation:tailEquivalenceSup} and \eqref{equation:tailEquivalenceInf}, which characterize the relationship between the right tails of the laws of $\vartheta_{i, T}$ and $\theta_i$. These conditions ensure that, asymptotically, the noisy sample provides the same extreme value information as the unobserved noiseless sample. The expressions in \eqref{equation:tailEquivalenceSup} and \eqref{equation:tailEquivalenceInf} reflect a pointwise  limit equivalence of tails: the term $F^{-1} + T^{-p}G^{-1}$ captures the approximate quantiles of the noisy observations, while $(-F^{-1})$ are the quantiles of the target distribution. The supremum and infimum adjust for the unknown dependence between $\theta_i$ and $\varepsilon_{i, T}$.

These TE conditions provide a general framework for extreme value analysis in the presence of noise. They accommodate a broad range of estimators for $\theta_i$ (see eq. \eqref{equation:noisyThetaDefinition}), including linear, nonlinear, and nonparametric estimators. In addition, they sharpen and generalize the sufficient condition derived by \cite{Sasaki2022} for the special case of univariate linear regression (see remark \ref{remark:relation-to-sasaki-wang-conditions} below).  \label{page:editor:evt-novelty}

   We highlight four   aspects of conditions \eqref{equation:tailEquivalenceSup} and \eqref{equation:tailEquivalenceInf}. First, the infimum in \eqref{equation:tailEquivalenceInf} is always greater than or equal to the supremum values in \eqref{equation:tailEquivalenceSup} and \eqref{equation:tailEquivalenceSupGeneral}, as shown in the proof. Second, the conditions hold automatically in the absence of estimation noise. In this case the function under the supremum in \eqref{equation:tailEquivalenceSup} is non-positive for all admissible \( u \), and equal to zero only at \( u = 0 \). Likewise, the function under the infimum in \eqref{equation:tailEquivalenceInf} is non-negative and zero only at \( u = 0 \). 
   Third, when estimation noise is present, similar logic applies but the impact on the  conditions is asymmetric : if \( G_T(0) \in (0,1) \), the function under the infimum is eventually always non-negative, with \eqref{equation:tailEquivalenceInf} requiring its minimum to converge to zero. Meanwhile, the function under the supremum in \eqref{equation:tailEquivalenceSup} can be negative over some range of \( u \) but may change sign at larger values. Here, \eqref{equation:tailEquivalenceSup} requires that the supremum converge to zero, whether from above or below. We stress that this is not a requirement of uniform convergence. Finally, for sharpness, the weaker condition \eqref{equation:tailEquivalenceSupGeneral} is necessary. However, for practical purposes, \eqref{equation:tailEquivalenceSup} is sufficient and more interpretable; it may be viewed as  controlling the contribution of the left tail of \( G_T \).

\vspace{1mm}

When do    \eqref{equation:tailEquivalenceInf}  and \eqref{equation:tailEquivalenceSup} hold?  Proposition \ref{proposition:ratesExtremeHalfUniformity} provides sufficient conditions, expressed as constraints on the growth rate of $N$ relative to $T$, given assumptions on the tails of $G_T$. Such conditions depend on $\gamma$, the extreme value index of $F$, which is typically unknown. However, they remain valid when only a lower bound $\gamma' \leq \gamma$ is available. Intuitively, when $\gamma$ is large, $F$ has a heavier tail, making the contribution of $G_T$ less pronounced and allowing for larger values of $N$.

 \begin{proposition}
 	\label{proposition:ratesExtremeHalfUniformity}
 	Let assumptions \ref{assumption:independence} and \ref{assumption:tightness} hold. Let one  of the following conditions hold:
 	\begin{enumerate}[noitemsep,topsep=0pt,parsep=0pt,partopsep=0pt, label={(\arabic*)}, leftmargin=*]
 		\item Let $\sup_T\E\abs{\varepsilon_{i, T}}^{\beta}<\infty$ for some $\beta>0$, and  let  ${N^{1/\beta-\gamma'}(\log (T))^{1/\beta}}/{T^{p}}\to 0 $ for some $\gamma'$.  
 		\item For all $T$, let $\varepsilon_{i, T}\sim N(\mu_T, \sigma_T^2)$, and  let  ${N^{-\gamma'} \sqrt{\log (N)}}/{T^{p}}\to 0 $ for some $\gamma'$.   
 	\end{enumerate} 
 	In addition, let $F$ satisfy assumption \ref{assumption:EV} with EV index $\gamma>\gamma'$. Then   \eqref{equation:tailEquivalenceInf} and  \eqref{equation:tailEquivalenceSup} hold for $F$ and $G_T$ {for any sequence $\curl{(a_N, b_N)}_{N=1}^{\infty}$ such that $(\theta_{N, N}-b_N)/a_N$ converges to a non-degenerate random variable.}
 	
 \end{proposition}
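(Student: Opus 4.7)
The plan is to verify \eqref{equation:tailEquivalenceInf} and \eqref{equation:tailEquivalenceSup} by splitting the bracketed expression into the $F$-quantile difference and the noise piece $T^{-p}G_T^{-1}(\cdot)$, and bounding each, divided by $a_N$, on the relevant range of $u$. The role of $\gamma'$ is to supply a lower bound on $a_N$: under Assumption \ref{assumption:EV} any admissible $a_N$ is regularly varying of index $\gamma$ at infinity, so by Potter's bounds $a_N\ge N^{\gamma'}$ for all $N$ large. The strict inequality $\gamma'<\gamma$ thus provides the slack needed to absorb the polynomial and polylogarithmic factors appearing below.

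Two uniform-in-$T$ estimates drive the argument. On the noise side, Assumption \ref{assumption:tightness} together with the moment condition (case 1) and Markov's inequality yields $|G_T^{-1}(u)|\le C u^{-1/\beta}$ and $|G_T^{-1}(1-u)|\le C u^{-1/\beta}$ for $u$ small; a standard truncated-moment refinement produces the $(\log T)^{1/\beta}$ factor matching the rate condition. Under normality (case 2), tightness of $(\mu_T,\sigma_T^2)$ combined with the Mills' ratio bound gives $|G_T^{-1}(u)|,|G_T^{-1}(1-u)|\le C\sqrt{\log(1/u)}$. On the $F$ side, Assumption \ref{assumption:EV} makes $U(t):=F^{-1}(1-1/t)$ regularly varying of index $\gamma$ (Thm.~1.2.1 of \cite{DeHaan2006}); expanding for small shifts gives $a_N^{-1}|F^{-1}(1-1/(N\tau)\pm u)-F^{-1}(1-1/(N\tau))|=O(\tau u N)$ whenever $u=o(1/N)$.

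To verify \eqref{equation:tailEquivalenceSup}, choose $\epsilon$ small enough that the $F$-difference on $[0,\epsilon]$ is nonpositive (automatic for $\gamma\ge 0$; requires $\epsilon$ small when $\gamma<0$). Monotonicity of $G_T^{-1}$ then bounds the supremum above by $T^{-p}G_T^{-1}(\epsilon)/a_N=o(1)$. For the matching lower bound, evaluate at $u_N=1/(N\log T)$ in case 1 (resp.\ $u_N=1/(N\log N)$ in case 2): the $F$-piece divided by $a_N$ is $O(1/\log T)$ (resp.\ $O(1/\log N)$), while the noise piece is bounded by $C T^{-p}(N\log T)^{1/\beta}/a_N$ (resp.\ $C T^{-p}\sqrt{\log N}/a_N$), both $o(1)$ by hypothesis. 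Condition \eqref{equation:tailEquivalenceInf} is handled symmetrically: the bracketed expression is nonnegative for $u$ small (monotonicity of $F^{-1}$ together with $G_T^{-1}(1-u)\ge 0$), and evaluating at an analogous interior $u^*_N\in(0,1/(N\tau)]$ gives an $o(1)$ upper bound on the infimum. The main technical obstacle is producing the polylog-precise, uniform-in-$T$ bounds on $G_T^{-1}$ and picking $u_N$ so that the $F$- and $G_T$-contributions balance; the pointwise-in-$\tau$ uniformity is automatic from the regular variation of $U$.
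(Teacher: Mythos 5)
Your strategy---evaluate the inf/sup at an interior point of the form $u_N\asymp(N\log T)^{-1}$, bound the noise piece by Markov (resp.\ Mills' ratio), and control the $F$-difference via regular variation with $\gamma'<\gamma$ absorbing the slowly varying factors---is essentially the paper's calculation, and the correct scalings appear. The paper, however, reaches the conclusion through a different organizational device that you omit, and this omission opens a genuine gap. The paper first uses Makarov's quantile inequalities (Lemma \ref{lemma:makarovInequalities}) to establish the deterministic chain
\[
s_{\tau,N,T}(u_s)\;\le\;\sup_{u\in[0,\epsilon]} s_{\tau,N,T}(u)\;\le\;\inf_{u\in[0,1/(N\tau)]} S_{\tau,N,T}(u)\;\le\;S_{\tau,N,T}(u_S),
\]
so that $s_{\tau,N,T}(u_s)\to 0$ and $S_{\tau,N,T}(u_S)\to 0$ at the chosen interior points immediately sandwich \emph{both} the sup and the inf. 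You instead try to sandwich each condition independently, which requires one extra inequality per condition. For \eqref{equation:tailEquivalenceSup} your upper bound $\sup s\le T^{-p}G_T^{-1}(\epsilon)/a_N$ is fine (only boundedness of $G_T^{-1}(\epsilon)$ is used, which tightness provides). But for \eqref{equation:tailEquivalenceInf} you assert ``$G_T^{-1}(1-u)\ge 0$,'' and this is not implied by Assumptions \ref{assumption:independence}--\ref{assumption:tightness}: the noise can have a negative upper endpoint (e.g.\ $\varepsilon_{i,T}\equiv-1$, which satisfies tightness and all moment bounds), in which case the bracketed expression in \eqref{equation:tailEquivalenceInf} is strictly negative at $u=0$. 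The correct replacement is $\inf S\ge a_N^{-1}T^{-p}G_T^{-1}\bigl(1-\tfrac{1}{N\tau}\bigr)\ge -M\,a_N^{-1}T^{-p}\to 0$ for a tightness constant $M$ --- or simply invoke the Makarov chain, which makes the lower bound on $\inf S$ follow from the already-established behaviour of the sup side with no sign assumptions at all.

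A couple of smaller imprecisions: the $(\log T)^{1/\beta}$ factor comes purely from plugging $u_N\asymp(N\log T)^{-1}$ into the Markov bound $G_T^{-1}(1-u)=O(u^{-1/\beta})$, not from a ``truncated-moment refinement''; and the blanket statement that ``$U(t)$ is regularly varying of index $\gamma$'' is correct only for $\gamma>0$ --- for $\gamma<0$ it is $U_F(\infty)-U_F(t)$ that is $RV_\gamma$, and for $\gamma=0$ one works with $\Pi$-variation, which is why the paper splits the $F$-difference argument into three cases by sign of $\gamma$.
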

  
 	Proposition \ref{proposition:ratesExtremeHalfUniformity} highlights two key factors that determine how restrictive these conditions are. First, heavier tails of $F$ (larger $\gamma'$) permit a larger $N$ relative to $T$. Second, lighter tails of $G_T$ result in milder constraints on $N$. The conditions on $G_T$ are captured by (1) and (2). Condition (1) requires only that $G_T$ has uniformly bounded $\beta$th moments, while (2) assumes exact normality, which can be viewed as a limiting case of (1) as $\beta \to \infty$. In practice, $G_T$ is expected to lie between these two extremes, with its tails becoming lighter as $T \to \infty$.

 		A useful comparison can be made to the conditions for inference on central quantiles established by \cite{Jochmans2019}. They show that $N/T^4 \to 0$ is sufficient for validity when $p = 1/2$, under broad conditions on $F$ (see theorem \ref{theorem:JW}). A similar rate condition arises in our extreme   setting when $G_T$ has more than eight moments and $\gamma \geq 0$ (as is necessarily the case if $F$ has an infinite tail).\footnote{It is fairly straightforward to impose conditions under which $G_T$ has a given number of moments for linear estimators, such as those of examples \ref{example:IV} and \ref{example:nonparametric}. Further, \cite{Brownlees2022} provide some conditions on existence of moments of estimation error for nonlinear estimators.}  \label{page:ref2:minor-comment-1}  Proposition \ref{proposition:ratesExtremeHalfUniformity} also allows for nearly exponential growth of $N$ relative to $T$ when $\varepsilon_{i,T}$ is normally distributed and $\gamma \geq 0$. This suggests that valid inference on extreme quantiles may still be feasible in cases where central quantile inference is not. Conversely, when $\gamma < 0$, the sufficient conditions for TE may become more restrictive than those for central quantiles.
 	
 	Importantly, the TE conditions differ   from the \cite{Jochmans2019} conditions in their underlying mechanisms. The latter are primarily concerned with the rate of estimation noise bias, while the TE conditions control the relative contribution of the tail of $F$ and the tails of $G_T$. As a result, TE conditions benefit from lighter tails of $G_T$, and they hold trivially in the noiseless setting, as noted after Theorem \ref{theorem:evtNoisy}.

	\begin{remark}[Relation to \cite{Sasaki2022}]\label{remark:relation-to-sasaki-wang-conditions}
		\cite{Sasaki2022} analyze the special    case of univariate linear regression with heterogeneous coefficients (OLS in example \ref{example:IV}) and derive a condition similar to those in Proposition \ref{proposition:ratesExtremeHalfUniformity}. Their condition 2.7 aligns closely with our result, modulo certain high-level uniformity assumptions which concern averages of residuals and covariates. In this linear case, Proposition \ref{proposition:ratesExtremeHalfUniformity} quantifies these uniformity conditions and shows that they impose an additional restriction that depends on $G_T$, as reflected in the $N^{1/\beta}$ or $\log(N)$ terms. 
	\end{remark}

\begin{remark}\label{remark:lowerBoundGammaEconomics}
	
	An appropriate value of $\gamma'$ might often be apparent in a given application. For example, \cite{Gabaix2009, Gabaix2016} documents that many economic relations follow a power law and    outlines some general theoretical mechanisms under which a power law arises. If such a mechanism is likely to hold in a given situation, it is  reasonable to assume that the distribution of the  data is well approximated by a power law.   Since a power law distribution has to have $\gamma>0$, it is sufficient to check the hypothesis of proposition \ref{proposition:ratesExtremeHalfUniformity} with $\gamma'=0$.  Further, to allow the distribution of $\theta$ to potentially have a light  infinite tail, it is sufficient to check conditions for any $\gamma'<0$, potentially arbitrarily close to 0.
	
\end{remark}

	\subsection{Intermediate Order Statistics}\label{section:ivt}
 
To conduct inference on extreme quantiles, we also need to develop an asymptotic theory for intermediate order statistics.  
Formally,  $\theta_{N-k(N), N}$ is called an intermediate   order statistic if $k(N)\to \infty$ as $N\to\infty$ and $k(N)=o(N)$. Intuitively, such statistics asymptotically stay in the tail, but are not the top statistics. We generally suppress dependence of $k$ on $N$.

 To derive asymptotic properties of intermediate order statistics, we   impose an additional assumption on  $F$ that refines assumption \ref{assumption:EV}.

\begin{asm}[$F$ satisfies a first order von Mises condition] \label{assumption:vonMises}
\emph{$F$ is twice  differentiable with density $f$, $f$ positive in some left neighborhood of $F^{-1}(1)$ ($F^{-1}(1)$ might be finite or infinite),  and for some $\gamma\in \R$ it holds that  $
	\lim_{t\uparrow F^{-1}(1)} \left( [1-F]/{f}\right)'(t) =\gamma.$ }

\end{asm}

Assumption \ref{assumption:vonMises} is a slight strengthening of assumption \ref{assumption:EV}.  See  \cite{Dekkers1989} and  \cite{DeHaan2006} for a discussion of the condition and its plausibility.

 We now state a theorem describing the asymptotic behavior of intermediate order statistics:  

\begin{theorem}[Intermediate value theorem (IVT)] 
	\label{theorem:intermediateNormality}
	Let assumptions \ref{assumption:independence}  and \ref{assumption:vonMises} hold and let $k=k(N)\to\infty, k=o(N)$ as $N\to\infty$.  Define $c_N$ as  the derivative of the inverse of $1/(1-F)$ evaluated at $N/k$ and multiplied by $N/k$, that is, $c_N=  (N/k)\times  \left[\left(\left({1}/({1-F})  \right)^{-1}\right)'\left( N/k \right)\right]$.
		Let $U_1, \dots, U_N$ be iid Uniform[0, 1] random variables. 
		\begin{enumerate}[noitemsep,topsep=0pt,parsep=0pt,partopsep=0pt, label={(\arabic*)}, leftmargin=*]
			\item \emph{(Transferral of convergence)}  	Let   the following tail equivalence conditions hold as  $N, T\to\infty$ for some $\epsilon>0$:
			\begin{align}
				&  \hspace{-30pt}  \sup_{u\in \left[0, \epsilon\right]}  \dfrac{\sqrt{k}}{c_N} \left( F^{-1}(1-U_{k, N}-u)  -F^{-1}\left(1- U_{k, N} \right)  + \dfrac{1}{T^p} G_T^{-1}\left( u\right) \right) \xrightarrow{p} 0, \label{equation:intermediateConditionRandomSup} \\
				& \hspace{-30pt} 	 \inf_{u\in \left[0, U_{k, N}\right]}  \dfrac{\sqrt{k}}{c_N }  \left(F^{-1}\left(1- U_{k, N} + u \right)  - F^{-1}\left(1-  U_{k, N} \right) +\dfrac{1}{T^p} G_T^{-1}\left(1-u\right)\right)   \xrightarrow{p} 0. \label{equation:intermediateConditionRandomInf}
			\end{align}
			Then as $N, T\to\infty$
			\begin{equation}\label{equation:intermediateTheoremStatistic}
				\dfrac{\sqrt{k}}{c_N} \left(\vartheta_{N-k, N, T} - F^{-1}\left(1-  k/N\right) \right)\Rightarrow N(0, 1).
			\end{equation}
			\item \emph{(Sharpness)} Consider the following   condition: 
			\begin{equation}
				  \sup_{u\in \left[0, 1-U_{k, N	}\right]}  \dfrac{\sqrt{k}}{c_N} \left( F^{-1}(1-U_{k, N}-u)  -F^{-1}\left(1- U_{k, N} \right)  + \dfrac{1}{T^p} G_T^{-1}\left( u\right) \right) \xrightarrow{p} 0, \label{equation:intermediateConditionRandomSupGeneral}
			\end{equation}	Conditions \eqref{equation:intermediateConditionRandomInf} and \eqref{equation:intermediateConditionRandomSupGeneral}   are sharp in the sense of theorem \ref{theorem:evtNoisy}.
		\end{enumerate}
	 
\end{theorem}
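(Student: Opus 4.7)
The plan is to reduce the noisy intermediate value theorem to its classical analogue and then transfer the limit via deterministic counting bounds that mirror the strategy used for theorem \ref{theorem:evtNoisy}, only localized at the random quantile level $U_{k,N}\coloneqq 1-F(\theta_{N-k,N})$ rather than at the deterministic level $1/(N\tau)$. Under assumption \ref{assumption:vonMises}, the classical intermediate value theorem (cf.\ ch.\ 2 of \cite{DeHaan2006}) gives $\sqrt{k}(\theta_{N-k,N}-F^{-1}(1-k/N))/c_N\Rightarrow N(0,1)$. By Slutsky, part (1) reduces to showing $\sqrt{k}(\vartheta_{N-k,N,T}-\theta_{N-k,N})/c_N\xrightarrow{p}0$, i.e.\ sandwiching $\vartheta_{N-k,N,T}$ around $\theta_{N-k,N}$ up to an $o_p(c_N/\sqrt{k})$ error.

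The key step uses two counting bounds valid for any joint law $(\theta_i,\varepsilon_{i,T})$. Splitting $\{\vartheta_{i,T}>x\}$ according to whether $\varepsilon_{i,T}$ exceeds $G_T^{-1}(1-u)$ yields
\begin{equation*}
\#\{i:\vartheta_{i,T}>x\}\leq \#\{i:\theta_i>x-T^{-p}G_T^{-1}(1-u)\}+\#\{i:\varepsilon_{i,T}>G_T^{-1}(1-u)\}.
\end{equation*}
Applying this at $x=\vartheta_{N-k,N,T}$ (where the left count equals $k$) and controlling the second count by its mean $Nu$ via a Bernstein-type concentration bound forces $\vartheta_{N-k,N,T}\leq F^{-1}(1-U_{k,N}+u)+T^{-p}G_T^{-1}(1-u)+o_p(c_N/\sqrt{k})$ for any $u\in[0,U_{k,N}]$. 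Subtracting $F^{-1}(1-U_{k,N})=\theta_{N-k,N}$, multiplying by $\sqrt{k}/c_N$, and taking the infimum over $u$ reproduces the left-hand side of \eqref{equation:intermediateConditionRandomInf}, which vanishes in probability by hypothesis. The symmetric inclusion $\{\theta_i>x-T^{-p}G_T^{-1}(u)\}\cap\{\varepsilon_{i,T}>G_T^{-1}(u)\}\subseteq\{\vartheta_{i,T}>x\}$ produces the matching lower bound on $\vartheta_{N-k,N,T}$ governed by \eqref{equation:intermediateConditionRandomSup}, and Slutsky concludes.

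For sharpness (part 2), I will emulate the copula construction used in theorem \ref{theorem:evtNoisy}(2). Given marginals $F,G_T$ and a subsequence along which one of \eqref{equation:intermediateConditionRandomInf} or \eqref{equation:intermediateConditionRandomSupGeneral} is bounded away from zero, a comonotonic (respectively countermonotonic) coupling of $\theta_i$ and $\varepsilon_{i,T}$ on the relevant tail makes the two counting inequalities above tight at the extremal level, producing a nonzero limiting shift that destroys the standard normal limit. The implication \eqref{equation:intermediateConditionRandomSup}$+$\eqref{equation:intermediateConditionRandomInf}$\Rightarrow$\eqref{equation:intermediateConditionRandomSupGeneral} follows as in theorem \ref{theorem:evtNoisy}(2)(i) from monotonicity of $u\mapsto F^{-1}(1-U_{k,N}-u)+T^{-p}G_T^{-1}(u)$ past $u=\epsilon$, using \eqref{equation:intermediateConditionRandomInf} to bound the right endpoint. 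The main obstacle I anticipate is the randomness of $U_{k,N}$: the TE conditions must be read along a random sequence of quantile locations, and the Bernoulli counts of noise exceedances must be controlled uniformly in $u$ over a window of order $\sqrt{k}/N$ so as not to degrade the $c_N/\sqrt{k}$ rate. I plan to address this by conditioning on the empirical distribution of the $\varepsilon_{i,T}$'s, applying Bernstein to the conditional counts, and combining with the classical tightness of $(NU_{k,N}-k)/\sqrt{k}$.
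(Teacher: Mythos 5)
Your strategy is genuinely different from the paper's, and it contains a real gap at the central step. The paper does not attempt the pathwise reduction $\sqrt{k}(\vartheta_{N-k,N,T}-\theta_{N-k,N})/c_N\xrightarrow{p}0$ that you propose via Slutsky; that is a strictly stronger statement than what is needed, and the joint law of $(\theta_i,\varepsilon_{i,T})$ being unrestricted means it is not obvious that it holds. The paper instead exploits the quantile-transform representation $\vartheta_{N-k,N,T}\overset{d}{=}U_T(Y_{N-k,N})$ and $\theta_{N-k,N}\overset{d}{=}U_F(Y_{N-k,N})$ for a \emph{single} auxiliary variable $Y_{N-k,N}$ (with $1/Y_{N-k,N}\overset{d}{=}U_{k,N}$). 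Decomposing $U_T(Y_{N-k,N})-U_F(N/k)=[U_F(Y_{N-k,N})-U_F(N/k)]+[U_T(Y_{N-k,N})-U_F(Y_{N-k,N})]$ reduces the problem to a deterministic comparison of the two quantile functions conditional on $Y_{N-k,N}$, which is then handled by Makarov's inequalities (lemma \ref{lemma:makarovInequalities}) exactly as in theorem \ref{theorem:evtNoisy}. No concentration argument is needed; the only randomness to track is $Y_{N-k,N}$, which appears in the TE conditions through $U_{k,N}$. Your opening remark that theorem \ref{theorem:evtNoisy} uses ``deterministic counting bounds'' misdescribes it --- it uses Makarov's quantile inequalities at the population level, not empirical counts.

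The concrete gap in your plan is the claimed bound $\vartheta_{N-k,N,T}\leq F^{-1}(1-U_{k,N}+u)+T^{-p}G_T^{-1}(1-u)+o_p(c_N/\sqrt{k})$ ``for any $u\in[0,U_{k,N}]$.'' From the counting inequality at $x=\vartheta_{N-k,N,T}$ you get $\vartheta_{N-k,N,T}<\theta_{N-k+\mathrm{Bin}(N,u)+1,N}+T^{-p}G_T^{-1}(1-u)$, so you still need to replace $\theta_{N-k+\mathrm{Bin}(N,u)+1,N}$ by $F^{-1}(1-U_{k,N}+u)$ up to $o_p(c_N/\sqrt{k})$. This requires (i) controlling the Bernoulli count, whose fluctuation is $O_p(\sqrt{Nu})$ and hence is only $o_p(\sqrt{k})$ when $u=o(k/N)$, so the error term is not uniformly $o_p(c_N/\sqrt{k})$ over $[0,U_{k,N}]$ and you cannot simply take the infimum over that range afterwards; (ii) controlling the spacing $\theta_{N-k+m,N}-\theta_{N-k,N}$ for random $m$, which has its own Beta-spacing fluctuation of order $\sqrt{m}/N$ in the uniform scale; and (iii) handling the dependence between $\mathrm{Bin}(N,u)$ (a function of the $\varepsilon_{i,T}$) and the order statistics of $\theta_i$, since the two are \emph{not} independent under the paper's assumptions. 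You flag (i) as the ``main obstacle,'' but (ii) and (iii) are additional obstacles that your sketch does not address. A secondary imprecision: for the sharpness part you invoke ``comonotonic/countermonotonic coupling,'' but the extremal joint law attaining Makarov's quantile bound is not the comonotone copula --- the paper uses the specific construction from theorem 2 of \cite{Makarov1981}, and a naive comonotone coupling generally does not attain the bound. The net effect is that your route might eventually be made to work, but it replaces the paper's one-line conditioning-on-$Y_{N-k,N}$ with a substantial empirical-process argument that is not completed in the proposal.
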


 Theorem \ref{theorem:intermediateNormality} is the intermediate order counterpart of theorem \ref{theorem:evtNoisy};  conditions \eqref{equation:intermediateConditionRandomSup}, \eqref{equation:intermediateConditionRandomInf}, and \eqref{equation:intermediateConditionRandomSupGeneral} are  analogs of   \eqref{equation:tailEquivalenceSup}, \eqref{equation:tailEquivalenceInf}, and \eqref{equation:tailEquivalenceSupGeneral}, respectively.   The two sets of conditions differ in the region where tail equivalence is imposed. Theorem \ref{theorem:evtNoisy}  concerns quantiles of the form $1- 1/(N\tau)$, $\tau>0$ fixed,  whereas theorem \ref{theorem:intermediateNormality} looks at quantiles of the order $1-k/N$.    Since $k\to\infty$, the two regions are asymptotically distinct.  A second point of  difference between   the two pairs of conditions  is that conditions of theorem \ref{theorem:intermediateNormality} take a randomized form, though proposition \ref{proposition:ratesIntermediateHalfUniformity} below provides deterministic sufficient conditions.   As before, the conditions allow for general   dependence structures and natures of $\varepsilon_{i, T}$. To the best of our knowledge, there are no directly comparable results in the literature.\footnote{\label{page:editor:footnote-girard-paper} \cite{Girard2021} obtain a somewhat related condition for estimating intermediate order expectiles in regressions with heavy-tailed noise. Their condition requires that estimation noise be uniformly small (see their condition (2)). This condition is sufficient for transferral of convergence for any order statistic regardless of order (modulo the scaling factors involved). In contrast, Theorem \ref{theorem:intermediateNormality} targets only the differences between intermediate order statistics and does not restrict the difference between central or top order statistics.}
 
 \vspace{1mm}

As for \eqref{equation:tailEquivalenceInf} and \eqref{equation:tailEquivalenceSup}, sufficient conditions for \eqref{equation:intermediateConditionRandomSup} and \eqref{equation:intermediateConditionRandomInf} take form of rate restrictions on $N$ and $T$ that depend on the EV index $\gamma$.  A sufficient condition is possible if a lower bound for $\gamma$ is available. The following proposition  is an analog of proposition  \ref{proposition:ratesExtremeHalfUniformity}.

 \begin{proposition}
 	\label{proposition:ratesIntermediateHalfUniformity}
 	Let assumptions \ref{assumption:independence} and \ref{assumption:tightness} hold. Let     $\delta\in (0, 1)$. Let one  of the following conditions hold  
 	\begin{enumerate}[noitemsep,topsep=0pt,parsep=0pt,partopsep=0pt, label={(\arabic*)}, leftmargin=*]
 		\item   $\sup_T\E\abs{\varepsilon_{i, T}}^{\beta}<\infty$   and for some $\nu>0, \gamma'$  it holds that  ${ N^{\delta/2(1+1/\beta) + (1-\delta)(-\gamma'+1/\beta) + \nu/\beta }}/{T^{p}} \to 0 $
 		\item  For all $T$ let $\varepsilon_{i, T}\sim N(\mu_T, \sigma_T^2)$, and for some $\gamma'$ it holds that   ${N^{\delta/2 + (1-\delta)(-\gamma')} \sqrt{\log (N)}}/{T^p}\to 0$.
 	\end{enumerate} 
 	In addition, let  $F$ satisfy assumption \ref{assumption:vonMises} with EV index $\gamma>\gamma'$. Then  conditions \eqref{equation:intermediateConditionRandomSup} and  \eqref{equation:intermediateConditionRandomInf} hold for $F$ and $G_T$ for  $k=N^{\delta}$.

 \end{proposition}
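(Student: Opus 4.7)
The plan is to sandwich both the supremum in \eqref{equation:intermediateConditionRandomSup} and the infimum in \eqref{equation:intermediateConditionRandomInf} by $o_p(1)$ bounds, using three ingredients: the size of $c_N$ implied by assumption \ref{assumption:vonMises}, concentration of $U_{k,N}$ near $k/N$, and quantile bounds on $G_T$. First, direct differentiation gives $c_N = (k/N)/f(F^{-1}(1-k/N))$. Under assumption \ref{assumption:vonMises} the standard de Haan--Karamata theory implies that $c_N$ is regularly varying of index $\gamma$ in $N/k$; Potter's bounds then yield $c_N \geq C(N/k)^{\gamma'}$ for all large $N$ and any $\gamma' < \gamma$. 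Second, $U_{k,N} \sim \mathrm{Beta}(k, N-k+1)$ has mean $k/(N+1)$ and variance $O(k/N^2)$, so Chebyshev's inequality together with $k = N^\delta \to \infty$ gives $U_{k,N} \in [c_1 k/N, c_2 k/N]$ with probability tending to 1; call this event $\Omega_N$ and work on it throughout.

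For the supremum \eqref{equation:intermediateConditionRandomSup} I would sandwich as follows. The upper bound uses that the $F^{-1}$ difference is non-positive and $G_T^{-1}(u) \leq G_T^{-1}(\epsilon)$ is bounded in $T$ by assumption \ref{assumption:tightness}, giving scaled sup $\leq (\text{const})\sqrt{k}/(c_N T^p) \leq (\text{const}) N^{\delta/2-(1-\delta)\gamma'}/T^p$, which vanishes under either (1) or (2). For the lower bound, evaluate the summand at $u^* = N^{-(1-\delta/2+\nu)}$ for small $\nu > 0$. On $\Omega_N$ we have $u^* \ll U_{k,N}$, so the mean value theorem together with Potter-type uniformity of $f(F^{-1}(1-\cdot))$ over $[U_{k,N}, U_{k,N}+u^*]$ yields the scaled $F^{-1}$ difference $= -u^*(N/k)c_N \cdot \sqrt{k}/c_N(1+o_p(1)) = -N^{-\nu}(1+o_p(1))$. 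Under (1), the Markov bound $|G_T^{-1}(u^*)| \leq (M/u^*)^{1/\beta}$ gives scaled noise magnitude $\lesssim N^{\delta/2 - (1-\delta)\gamma' + (1-\delta/2+\nu)/\beta}/T^p \to 0$, which matches exactly the exponent in hypothesis (1); under (2), Gaussian tails yield $|G_T^{-1}(u^*)| \lesssim \sqrt{\log N}$, dispatched by hypothesis (2). Hence the function value at $u^*$ is $o_p(1)$, and the supremum is squeezed.

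The infimum \eqref{equation:intermediateConditionRandomInf} follows by symmetric reasoning. The $F^{-1}$ difference $A$ is non-negative, so $\inf(A+B) \geq \inf A + \inf B = 0 + T^{-p}G_T^{-1}(1 - U_{k,N})$ because $G_T^{-1}(1-u)$ is decreasing in $u$. Tightness gives $G_T^{-1}(1-U_{k,N}) \geq -M_0$ for large $N$ (since $U_{k,N} \to 0$), hence a scaled lower bound of $-M_0 \sqrt{k}/(c_N T^p) = o(1)$. For the upper bound on the infimum, evaluate at $u^{**} = u^*$, which lies in $[0, U_{k,N}]$ on $\Omega_N$ because $\alpha = 1-\delta/2+\nu > 1-\delta$. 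The same bounds as for the supremum deliver that the scaled expression at $u^{**}$ is $o_p(1)$, and the infimum is squeezed.

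The main technical obstacle is establishing the two consequences of assumption \ref{assumption:vonMises} that the above uses repeatedly: the claimed order for $c_N$ and the uniform comparability $f(F^{-1}(1-U_{k,N}-s))/f(F^{-1}(1-U_{k,N})) = 1 + o_p(1)$ over the relevant vanishing window of $s$. Both rest on the uniform convergence properties of regularly varying functions (Potter's bounds), applied through the de Haan representation of $V = (1/(1-F))^{-1}$. Once these are in place, the rest of the argument is routine balancing of exponents.
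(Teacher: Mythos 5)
Your proof is correct and takes essentially the same route as the paper's. You squeeze the supremum and infimum by evaluating at an interior point of order $N^{-(1-\delta/2+\nu)}$, control the $F^{-1}$ increment by the mean value theorem together with local uniform convergence (Potter's bounds) of the regularly varying density along the quantile curve, and control the $G_T^{-1}$ increment by Markov (or Gaussian-tail) quantile bounds; the paper does the same, choosing the slightly different but same-order random evaluation points $U_{k,N}/(N^\rho+1)$ and $U_{k,N}/N^\rho$ with $\rho=\delta/2+\nu$, and shortcutting the two-sided squeeze via the Makarov inequality chain $\sup s \le \inf S$ rather than bounding each side separately. The ingredients you flag as the technical obstacle --- regular variation of $c_N=(N/k)U'_F(N/k)$ with index $\gamma$ and uniform comparability of densities over the shrinking window, plus concentration $(k/N)Y_{N-k,N}\to_p 1$ --- are exactly the results the paper imports from de Haan--Ferreira (corollary 1.1.10 and corollary 2.2.2), so the argument goes through as you sketch it.
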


\begin{remark}[Comparison of sufficient conditions for the EVT and the IVT]
	\label{remark:comparisonConditionsEVTIVT}
	
	Depending on $\gamma$, the rate conditions for the EVT may be more or less restrictive than the conditions for the IVT. 
 For example, suppose that $\varepsilon_{i\, T}$ is normally distributed. 
If $\gamma\leq -1/2$, then the EVT condition implies the IVT condition; the opposite holds if $\gamma>-1/2$, regardless of the value of $\delta$.
In particular, if $\gamma>0$, for the EVT there are no restrictions on relative sizes of $N$ and $T$, but there are  restrictions for the IVT.

\end{remark}

\begin{remark}[Dependence on $\delta$]
	\label{remark:dependenceonDelta}
	
	Conditions   of proposition \ref{proposition:ratesIntermediateHalfUniformity} depend on $\delta$, the parameter that determines the magnitude of $k=N^{\delta}$. If $\delta$ is close to zero, conditions for the IVT are close to those for the EVT.  
	Intuitively, in this case conditions \eqref{equation:intermediateConditionRandomSup} and  \eqref{equation:intermediateConditionRandomInf} require asymptotic tail equivalence in approximately the same section of the tail as conditions \eqref{equation:tailEquivalenceSup} and \eqref{equation:tailEquivalenceInf}, and so the resulting sufficient conditions are similar.
 As $\delta$ grows, the region controlled by conditions \eqref{equation:intermediateConditionRandomSup} and  \eqref{equation:intermediateConditionRandomInf} becomes distinct from the right endpoint of the distribution  (while still staying the tails by requirement that $k=o(N)$).

\end{remark}

  \section{Inference   Using Noisy Estimates}\label{section:inference}

We now turn to  inference.  In this section, we introduce confidence intervals (CIs), estimators,  and tests based on  extreme and intermediate order asymptotic approximations.  We  also briefly discuss the central order approximations of \cite{Jochmans2019}.

The extreme, intermediate, and central order approximations differ in their appropriate use case.  Let $\tau$ be the quantile of interest.
Based on the simulation evidence of section \ref{section:simulations} and the Online Appendix, we propose the following rule of thumb that is valid for all values of $N$:
\begin{enumerate}[noitemsep,topsep=0pt,parsep=0pt,partopsep=0pt, label={(\arabic*)}, leftmargin=*]
	\item If $(1-\tau)N\leq 100$ and $\tau\geq 0.9$, we recommend extreme order approximations --- the approach based on theorem \ref{theorem:feasibleEVT} below.  In this case  fewer than 100 order statistics lie to the right the sample $\tau$th quantile, and the central limit theorem for quantiles is unlikely to provide a good approximation.
	\item If $(1-\tau)N>100$, we recommend using a central order approximation. Specifically, we recommend basing inference on theorem \ref{theorem:JW} below.
	
\end{enumerate}
In larger cross-sections ($N\gtrsim 10000$), the intermediate order approach of  theorem \ref{theorem:feasibleIVTnoiseless}  is a viable and  particularly simple-to-compute option for $\tau\geq [0.9, 1)$, provided the corresponding value of  $k\equiv (1-\tau)N$ satisfies $k\geq 100$.

  \subsection{Inference Using Extreme Order Approximations}
  \label{subsection:evtInference}
  
  Extreme order approximations use theorem \ref{theorem:evtNoisy} as the basis for inference.   The quantile of interest is modeled as drifting to 1 at a rate proportional to $N^{-1}$.
Formally,  we select $b_N = F^{-1}(1-l/N)$  in theorem \ref{theorem:evtNoisy} for some fixed $l>0$. 
  Further, we make an explicit choice of scaling constants $a_N$. Then the following version of theorem \ref{theorem:evtNoisy} holds:

 \begin{lem}
	
	\label{corollary:evtNoisyCenteredQuantile} Let assumptions of theorem \ref{theorem:evtNoisy} hold.  Let $l>0$ be fixed.   Let $E_1^*$ be a standard exponential random variable.
	\begin{enumerate}[noitemsep,topsep=0pt,parsep=0pt,partopsep=0pt, label={(\arabic*)}, leftmargin=*]
		\item  If $F$ satisfies assumption \ref{assumption:EV} with EV index $\gamma>0$, then
		\begin{equation}
		\dfrac{1}{ F^{-1}\left(  1 - \frac{1}{N} \right)} \left[  \vartheta_{N, N, T}- F^{-1}\left( 1-\dfrac{l}{N} \right)\right] \Rightarrow  (E_1^*)^{-\gamma} - l^{-\gamma}\text{ as } N, T\to\infty.
		\end{equation}
		\item  If $F$ satisfies assumption \ref{assumption:EV} with EV index $\gamma=0$, there for some positive function $\hat{f}$  
		\begin{equation} 
		\dfrac{1}{\hat{f}\left(F^{-1}\left(  1 - \frac{1}{N} \right)\right)} \left( \vartheta_{N, N, T} - F^{-1}\left(  1 - \frac{l}{N} \right) \right)  \Rightarrow -\log(E_1^*) - \log(l)\text{ as } N, T\to\infty.
		\end{equation}
		\item If $F$ satisfies assumption \ref{assumption:EV} with EV index $\gamma<0$, then   $F^{-1}(1)<\infty$ and
		\begin{equation} 
		\dfrac{1}{F^{-1}(1) - F^{-1}\left(1- \frac{1}{N} \right) }\left(\vartheta_{N, N, T} -  F^{-1}\left(  1- \frac{l}{N}   \right) \right)
		\Rightarrow  -(E_1^*)^{-\gamma} +  l^{-\gamma}\text{ as } N, T\to\infty.
		\end{equation}
		
	\end{enumerate}

\end{lem}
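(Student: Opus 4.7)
The plan is to first establish each of the three limits for the noiseless maximum $\theta_{N,N}$ with centering $b_N = F^{-1}(1 - l/N)$ and the scaling $a_N$ named in the lemma, and then transfer the convergence to the noisy maximum $\vartheta_{N,N,T}$ by invoking part (1) of Theorem \ref{theorem:evtNoisy}. Since the hypotheses of that theorem---including the tail equivalence conditions \eqref{equation:tailEquivalenceSup} and \eqref{equation:tailEquivalenceInf}---are assumed to hold in the lemma, that last step is a one-line invocation; the substantive work is purely on the noiseless side.

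For the noiseless calculation I would use the probability integral transform representation
\[
\theta_{N,N} \stackrel{d}{=} U(N/\xi_N), \qquad U(t) \coloneqq F^{-1}(1 - 1/t),
\]
where $\xi_N \coloneqq N(1 - F(\theta_{N,N}))$ converges weakly to a standard exponential $E_1^*$ by the elementary fact that $N(1 - U_{N,N}) \Rightarrow E_1^*$ for the maximum of $N$ iid uniforms. Since $b_N = U(N/l)$ by construction, each case reduces to analyzing
\[
\frac{U(N/\xi_N) - U(N/l)}{a_N}
\]
together with the continuous mapping theorem applied to $\xi_N \Rightarrow E_1^*$. What differs across the three cases is the regular variation structure of $U$ near infinity, which dictates the choice of $a_N$.

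For $\gamma > 0$, $U$ is regularly varying at infinity with index $\gamma$, so taking $a_N = U(N) = F^{-1}(1-1/N)$ the uniform convergence $U(Ns)/U(N) \to s^{\gamma}$ on compact subsets of $(0,\infty)$ (Karamata) lets me substitute $s = 1/\xi_N$ and $s = 1/l$ to obtain the limit $(E_1^*)^{-\gamma} - l^{-\gamma}$. For $\gamma < 0$ the right endpoint $x^* \coloneqq F^{-1}(1)$ is finite, $x^* - U(\cdot)$ is regularly varying with index $\gamma$, and with $a_N = x^* - U(N)$ the same strategy delivers $-(E_1^*)^{-\gamma} + l^{-\gamma}$. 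For $\gamma = 0$ there is no power-type scaling; instead, a positive auxiliary function $\hat f$ (the von Mises scale function of $F$) exists such that $(U(ts) - U(t))/\hat f(U(t)) \to \log s$ locally uniformly in $s > 0$, and with $a_N = \hat f(U(N))$ the representation at $s = 1/\xi_N$ and $s = 1/l$ yields the stated Gumbel-type limit after assembling the two pieces.

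The main obstacle will be the $\gamma = 0$ case, where the scaling is available only implicitly through the von Mises auxiliary function; securing locally uniform convergence of the $\Pi$-variation representation (as in Theorem B.2.1 of \cite{DeHaan2006}) is what allows evaluation at both the random argument $1/\xi_N$ and the deterministic $1/l$ inside a single Slutsky step. The other two cases reduce more directly to uniform convergence of regularly varying functions on compact sets via Potter-type bounds together with the continuous mapping theorem.
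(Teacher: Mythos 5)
Your proposal is correct and takes essentially the same route as the paper: in each case the paper also splits the statistic into the scaled (noisy) maximum and the scaled deterministic centering term $a_N^{-1}\bigl(F^{-1}(1-l/N) - b'_N\bigr)$, handles the former via the classical noiseless EVT transferred through Theorem~\ref{theorem:evtNoisy}, and handles the latter via regular/$\Pi$-variation of $U_F$ (citing Corollaries 1.2.4, 1.2.10 and eq.~\eqref{equation:noiselessFirstOrderAuxiliaryLimit} in place of your explicit probability-integral-transform and Potter-bound bookkeeping). The only difference is presentational: you re-derive $\theta_{N,N}/U_F(N)\Rightarrow(E_1^*)^{-\gamma}$ (resp.\ the additive analogues) from the representation $\theta_{N,N}\stackrel{d}{=}U_F(N/\xi_N)$, whereas the paper cites the corresponding de Haan--Ferreira corollaries directly.
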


 Lemma \ref{corollary:evtNoisyCenteredQuantile} cannot be used for inference directly, as  the scaling constants involved are unknown.  These constants   involve the $(1-1/N)$th quantile of $F$, and so are not covered by sample information. 
 
 To address this challenge, we first establish an intermediate result. The following lemma extends theorem \ref{theorem:evtNoisy} to cover a vector of $q$ top  order statistics  for $q$ fixed.

  \begin{lem}[Joint EVT]
  	\label{theorem:jointEVT} Let assumptions of theorem \ref{theorem:evtNoisy} hold. Let  $q$ be a fixed natural number 	and $E_1^*, \dots, E_{q+1}^*$ be iid standard exponential random variables. 
  	\begin{enumerate}[noitemsep,topsep=0pt,parsep=0pt,partopsep=0pt, label={(\arabic*)}, leftmargin=*]

  		\item If $F$ satisfies assumption \ref{assumption:EV} with  $\gamma>0$, then as $N, T\to\infty$
  		\begin{multline} 
  		\begin{pmatrix}
  		\dfrac{\vartheta_{N, N, T}}{ F^{-1}(1-1/N)}, \dfrac{\vartheta_{N-1, N, T} }{ F^{-1}(1-1/N)}, \dots, \dfrac{\vartheta_{N-q, N, T}  }{  F^{-1}(1-1/N)}
  		\end{pmatrix} \\ \Rightarrow \begin{pmatrix}
  		{(E_1^*)^{-\gamma} }, {(E_1^*+E_2^*)^{-\gamma} }, \dots, {(E_1^*+E_2^*+ \dots + E^*_{q+1})^{-\gamma}  }
  		\end{pmatrix}.
  		\end{multline}

  		\item If $F$ satisfies assumption \ref{assumption:EV} with  $\gamma=0$, then as $N, T\to\infty$   	for $\hat{f}$ as in lemma \ref{corollary:evtNoisyCenteredQuantile}.
  		\begin{multline} 
  		\begin{pmatrix}
  		\dfrac{  \vartheta_{N, N, T} - F^{-1}\left(  1 - \frac{1}{N} \right)}{\hat{f}\left(F^{-1}\left(  1 - \frac{1}{N} \right)\right)} , 	\dfrac{  \vartheta_{N-1, N, T} - F^{-1}\left(  1 - \frac{1}{N} \right)}{\hat{f}\left(F^{-1}\left(  1 - \frac{1}{N} \right)\right)}, \dots, 	\dfrac{  \vartheta_{N-q, N, T} - F^{-1}\left(  1 - \frac{1}{N} \right)}{\hat{f}\left(F^{-1}\left(  1 - \frac{1}{N} \right)\right)} 
  		\end{pmatrix} \\ \Rightarrow \begin{pmatrix}
  		{-\log(E_1^*) }, {-\log(E_1^*+E_2^*)}, \dots, {-\log(E_1^*+E_2^*+ \dots + E^*_{q+1}) }
  		\end{pmatrix}.
  		\end{multline}
  	
  	  		\item If $F$ satisfies assumption \ref{assumption:EV} with EV index $\gamma<0$, then as $N, T\to\infty$
  	\begin{multline} 
  	\begin{pmatrix}
  	\dfrac{\vartheta_{N, N, T}-F^{-1}(1)}{F^{-1}(1) - F^{-1}(1-1/N)}, \dfrac{\vartheta_{N-1, N, T} -F^{-1}(1)}{F^{-1}(1) - F^{-1}(1-1/N)}, \dots, \dfrac{\vartheta_{N-q, N, T	} -F^{-1}(1)}{F^{-1}(1) - F^{-1}(1-1/N)}
  	\end{pmatrix} \\ \Rightarrow \begin{pmatrix}
  	{-(E_1^*)^{-\gamma} }, -{(E_1^*+E_2^*)^{-\gamma} }, \dots, -{(E_1^*+E_2^*+ \dots + E^*_{q+1})^{-\gamma}  }
  	\end{pmatrix}.
  	\end{multline}
 
  	\end{enumerate}

\end{lem}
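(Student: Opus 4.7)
The plan is to reduce to a classical joint extreme value theorem in the noiseless setting and then transfer joint convergence to the noisy vector using the same tail-equivalence machinery that drives theorem \ref{theorem:evtNoisy}. In the noiseless setting, R\'enyi's representation of the order statistics of $N$ iid Uniform$[0,1]$ draws yields $(NU_{1,N}, \dots, NU_{q+1,N}) \Rightarrow (E_1^*, E_1^* + E_2^*, \dots, E_1^* + \dots + E_{q+1}^*)$. Combined with the probability integral transform $\theta_{N-j, N} \stackrel{d}{=} F^{-1}(1 - U_{j+1,N})$ and the quantile expansion of $F^{-1}$ near $1$ dictated by assumption \ref{assumption:EV}---regular variation of $F^{-1}(1-\cdot)$ with index $\gamma$ for $\gamma > 0$; the standard slow-variation expansion with auxiliary function $\hat f$ for $\gamma = 0$; regular variation of $F^{-1}(1) - F^{-1}(1-\cdot)$ with index $-\gamma$ for $\gamma < 0$---continuous mapping delivers the three claimed joint limits for $(\theta_{N-j, N})_{j=0}^q$.

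The core of the argument is then to show that joint convergence survives the replacement of $\theta_i$ by $\vartheta_{i, T}$. I would handle this via point-process convergence. Define $\tilde\Pi_N = \sum_{i=1}^N \delta_{a_N^{-1}(\vartheta_{i, T} - b_N)}$ and its noiseless counterpart $\Pi_N = \sum_{i=1}^N \delta_{a_N^{-1}(\theta_i - b_N)}$, where $(a_N, b_N)$ are the $\gamma$-specific normalizers appearing in the three cases of the lemma. Classical extreme value theory yields $\Pi_N \Rightarrow \Pi$ for a Poisson random measure $\Pi$ with the appropriate EV-$\gamma$ intensity on the relevant half-line, and the map sending a locally finite configuration to its top $q+1$ ordered atoms is continuous in the vague topology. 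Continuous mapping then reduces everything to verifying $\tilde\Pi_N \Rightarrow \Pi$. By Kallenberg's criterion this amounts to checking, for every $x$ in the interior of the support of the limit intensity, both $E[\tilde\Pi_N([x, \infty))] \to E[\Pi([x, \infty))]$ and $P(\tilde\Pi_N([x, \infty)) = 0) \to P(\Pi([x, \infty)) = 0)$; each of these---after writing $\vartheta_{i, T} = \theta_i + T^{-p}\varepsilon_{i, T}$ and bounding using the worst-case positions of the noise, exactly as in the proof of theorem \ref{theorem:evtNoisy}---reduces to the tail-equivalence conditions \eqref{equation:tailEquivalenceInf} and \eqref{equation:tailEquivalenceSup} evaluated at the $\tau$ corresponding to the level $x$.

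The main obstacle is that noise can scramble the ordering of observations, so one cannot transfer convergence index by index: the $i$ that realizes $\vartheta_{N-j, N, T}$ need not be the one that realizes $\theta_{N-j, N}$. The point-process viewpoint finesses this by working with unordered atoms and recovering the ordered top-$q+1$ vector as a continuous functional of the configuration. An equivalent hands-on route, if one prefers to avoid point processes, is Cram\'er-Wold applied to the joint upper-tail events $\{\vartheta_{N-j, N, T} > b_N + a_N x_j\}_{j=0}^q$: these are determined by the numbers of indices $i$ with $\vartheta_{i, T} \geq b_N + a_N x_j$ at finitely many thresholds $x_0 > x_1 > \dots > x_q$, each count can be sandwiched between analogous noiseless counts by the same argument as in theorem \ref{theorem:evtNoisy}, and the Poisson structure of the joint limit then follows from the noiseless case. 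A routine but necessary preliminary is to verify that the specific centering and scaling sequences appearing in the lemma---$F^{-1}(1 - 1/N)$, $\hat f(F^{-1}(1 - 1/N))$, and $F^{-1}(1) - F^{-1}(1 - 1/N)$---are admissible normalizers $(a_N, b_N)$ in the sense of theorem \ref{theorem:evtNoisy} in each of the three domains of attraction, which is immediate from the standard normalizing-constant representations under assumption \ref{assumption:EV}.
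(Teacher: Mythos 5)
Your proposal is correct but takes a genuinely different route from the paper. The paper's proof is a direct computation: it applies the R\'enyi exponential representation to write the top $q+1$ order statistics of the \emph{noisy} sample jointly as $U_T$ evaluated at cumulative sums of scaled exponentials, then decomposes each coordinate as $(U_F(\cdot)-\beta_N)/\alpha_N + (U_T(\cdot)-U_F(\cdot))/\alpha_N$, invokes de Haan--Ferreira's joint EVT (their theorem 2.1.1) for the $U_F$ part, and shows the remainder terms vanish via local uniform convergence of the normalized $U_T$ and $U_F$ functions together with the tail-equivalence conditions evaluated at $\tau$ near $1$. Your route instead goes through point-process convergence of the empirical measure $\tilde\Pi_N$ to a Poisson random measure, which follows (for an iid triangular array) from mean-measure convergence $N(1-H_T(a_Nx+b_N))\to\mu([x,\infty))$ --- a condition that, via the inversion argument of lemma \ref{appendix:lemma:convergenceMaximumQuantiles}, is equivalent to the one-dimensional conclusion of theorem \ref{theorem:evtNoisy} --- and then extracting the top $q+1$ ordered atoms by continuous mapping. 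Both routes are sound. The point-process approach buys a conceptually cleaner upgrade from the max to all top statistics at once, with essentially no new analytic work beyond theorem \ref{theorem:evtNoisy}; the paper's direct approach is more elementary, avoids invoking Kallenberg-type criteria and the continuity of the top-$k$ extraction map, and keeps everything inside the same R\'enyi framework used throughout. One small caveat: the ``main obstacle'' you flag --- that noise can permute which index realizes a given order statistic --- is not actually an obstacle for the paper's argument either, because the R\'enyi representation works directly with the noisy order statistics via $H_T^{-1}$ and never attempts to match noisy and noiseless order statistics index by index; both approaches sidestep that issue, just by different devices.
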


  Lemma \ref{theorem:jointEVT} allows us to solve the issue of unknown scaling rates by using a self-normalization trick  similar to the one employed by \cite{Chernozhukov2011} for quantile regression. By taking the ratio of two elements in the joint EVT \ref{theorem:jointEVT}, we eliminate scaling factors completely, while the form of the limit is explicitly known up to the EV index $\gamma$.

  Combining lemmas \ref{corollary:evtNoisyCenteredQuantile} and \ref{theorem:jointEVT}, we obtain the following version of the EVT that can be used to conduct inference on extreme quantiles under tail equivalence conditions: 
  \begin{theorem}[Feasible EVT]
  	\label{theorem:feasibleEVT} Let assumptions of theorem \ref{theorem:evtNoisy} hold, in particular, let $F$ have EV index $\gamma\in \R$. Let $q\geq 1, r\geq 0$ be  fixed integers and $l> 0$ be a fixed real number; let  $E_1^*, E_2^*, \dots$ be  iid standard exponential RVs. 
  		Then as $N, T\to\infty$ 
  				\begin{align} 
  			  		\dfrac{\vartheta_{N-r, N, T}-   F^{-1}\left(  1- \frac{l}{N}   \right) }{\vartheta_{N-q, N, T}-\vartheta_{N, N, T}}  
  	&	\Rightarrow  \dfrac{  (E_1^*+ \dots + E_{r+1}^*)^{-\gamma}-  l^{-\gamma}}{(E_1^*+ \dots + E_{q+1}^*)^{-\gamma}-(E_1^*)^{-\gamma} } \label{equation:feasibleEVT:-q},
  				\end{align}
  		 where for $\gamma=0$  the right hand side  means $[\log(E_1^* )-\log(l)]/[\log(E_1^*+ \dots + E_{q+1}^*)- \log(E_1^*) ]$.
  		 In addition, if $F$ satisfies assumption \ref{assumption:EV} with $\gamma<0$, then $F^{-1}(1)<\infty$ and  we may set $l=0$
  		  in eq. \eqref{equation:feasibleEVT:-q}.	 

  \end{theorem}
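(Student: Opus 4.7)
The plan is to combine the joint extreme value theorem (Lemma \ref{theorem:jointEVT}) with the regular-variation behavior of the deterministic centering $F^{-1}(1-l/N)$, and then conclude via the continuous mapping theorem. The core idea, following \cite{Chernozhukov2011}, is that forming a ratio of differences of top order statistics eliminates the unknown normalization constants while preserving the dependence of the limit on the extreme value index $\gamma$.

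I first treat the case $\gamma>0$ in detail. By Lemma \ref{theorem:jointEVT}, the vector $(\vartheta_{N,N,T},\vartheta_{N-r,N,T},\vartheta_{N-q,N,T})/F^{-1}(1-1/N)$ converges jointly in distribution to $((E_1^*)^{-\gamma},(E_1^*+\cdots+E_{r+1}^*)^{-\gamma},(E_1^*+\cdots+E_{q+1}^*)^{-\gamma})$. Since $F$ has EV index $\gamma>0$, the tail quantile function $F^{-1}(1-\cdot)$ is regularly varying at $0$ with index $-\gamma$, hence $F^{-1}(1-l/N)/F^{-1}(1-1/N)\to l^{-\gamma}$ as $N\to\infty$. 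Augmenting the joint convergence with this deterministic limit via Slutsky's lemma and dividing both numerator and denominator of the statistic by $F^{-1}(1-1/N)$ lets the common scaling cancel. The continuous mapping theorem then delivers the claimed limit, provided the denominator's limit is a.s.\ nonzero; this is immediate since $E_2^*,\dots,E_{q+1}^*>0$ a.s., so $(E_1^*+\cdots+E_{q+1}^*)^{-\gamma}-(E_1^*)^{-\gamma}<0$ a.s.

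The case $\gamma=0$ is analogous but uses the additive normalization $\hat f(F^{-1}(1-1/N))$ of Lemma \ref{corollary:evtNoisyCenteredQuantile}: dividing numerator and denominator by this quantity, the auxiliary deterministic ratio $[F^{-1}(1-l/N)-F^{-1}(1-1/N)]/\hat f(F^{-1}(1-1/N))$ converges to $-\log l$ (by the Pickands--Balkema--de Haan representation in the Gumbel domain), and the ratio in the statement collapses to $[-\log E_1^*-\log l]/[-\log(E_1^*+\cdots+E_{q+1}^*)+\log E_1^*]$, matching the stated formula. For $\gamma<0$ one uses the scaling $F^{-1}(1)-F^{-1}(1-1/N)$ and centering at $F^{-1}(1)$; regular variation at $0$ gives $[F^{-1}(1)-F^{-1}(1-l/N)]/[F^{-1}(1)-F^{-1}(1-1/N)]\to l^{-\gamma}$, and the algebraic identity $[(-a)-(-b)]/[(-c)-(-d)]=(b-a)/(d-c)$ shows the stated ratio form survives. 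The extension to $l=0$ in this case is valid because $F^{-1}(1)$ is finite precisely when $\gamma<0$, and $l^{-\gamma}\big|_{l=0}=0$ since $-\gamma>0$, which is consistent with $F^{-1}(1-l/N)=F^{-1}(1)$.

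The main obstacle is bookkeeping across the three cases rather than any deep new idea: I need to verify that the continuous mapping step is valid, i.e., that the limit of the denominator has a continuous distribution and is almost surely nonzero, and that the joint convergence from Lemma \ref{theorem:jointEVT} can be combined with the deterministic convergence of $F^{-1}(1-l/N)$ under a single common normalization. Both points follow routinely: the partial sums $E_1^*+\cdots+E_{j}^*$ have continuous densities, and the joint weak convergence plus convergence of a deterministic sequence is equivalent to joint weak convergence of the augmented vector, to which CMT applies.
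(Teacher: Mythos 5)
Your argument follows exactly the paper's route: invoke the joint EVT of Lemma~\ref{theorem:jointEVT} under a common normalization, supply the deterministic regular-variation limit for the centering ratio (which is precisely the content of Lemma~\ref{corollary:evtNoisyCenteredQuantile}), and conclude by Slutsky and the continuous mapping theorem after verifying that the limiting denominator is a.s.\ nonzero. One small bookkeeping flag in your Gumbel case: the scaled centering $[F^{-1}(1-l/N)-F^{-1}(1-1/N)]/\hat f(\cdot)\to -\log l$ makes the numerator limit $-\log(E_1^*+\cdots+E_{r+1}^*)+\log l$ rather than $-\log E_1^* - \log l$ (watch the sign of the subtraction and the $r+1$ partial sum for general $r$); dividing both numerator and denominator by $-1$ then recovers the stated $[\log(E_1^*)-\log l]/[\log(E_1^*+\cdots+E_{q+1}^*)-\log E_1^*]$ at $r=0$.
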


Theorem \ref{theorem:feasibleEVT} allows us to conduct inference on extreme quantiles with 
no knowledge of the value of $\gamma$.  
   The left hand side of eq.  \eqref{equation:feasibleEVT:-q}
 does not depend on $\gamma$; this statistic is the basis of our inference procedures. 
While the right hand side limit distribution of eq. \eqref{equation:feasibleEVT:-q}	 is non-pivotal and does depend on $\gamma$,  we show below how the critical values can be consistently estimated either by subsampling without estimating $\gamma$ (theorem \ref{theorem:subsampling}) or by plugging in a consistent estimator for $\gamma$ (remarks \ref{remark:gammaEstimation}-\ref{remark:critical-values-simulation}).

We show how to construct confidence intervals, estimators and hypothesis tests for extreme quantiles based on theorem \ref{theorem:feasibleEVT} with a series of examples.
\begin{example}[Location-scale equivariant confidence interval for 95th percentile]
	
	\label{example:CI:95evt}
	
	Let $l=10$ and $N=200$, in which case $F^{-1}(1-l/N)=F^{-1}(0.95)$. There are only 10 observations to the right of the sample quantile, and it is appropriate to use extreme order approximation described by theorem \ref{theorem:feasibleEVT}. 
	Let $q\geq 1$, $r \geq 0$ be fixed integers, see remark \ref{remark:choiceOfq} below on choice of $r$ and $q$.  Let   $\hat{c}_{\alpha}$ be a consistent estimator of the $\alpha$th quantile of $ [ (E_1^* + \dots + E_{r+1}^*)^{-\gamma} - 10^{-\gamma}  ]/ [(E_1^*+ \dots + E_{q+1}^*)^{-\gamma}- (E_1^*)^{-\gamma} ]$. Then  let the confidence interval  for $F^{-1}(0.95)$ based on sample size $N=200$ be given by:
	\begin{equation}
	 CI_{\alpha}=\left[  \vartheta_{N-r, N, T} - \hat{c}_{\alpha/2} \left(\vartheta_{N-q, N, T} - \vartheta_{N, N, T} \right),   \vartheta_{N-r, N, T} - \hat{c}_{1-\alpha/2} \left(\vartheta_{N-q, N, T} - \vartheta_{N, N, T} \right)     \right].
		\end{equation}
	 $CI_{\alpha}$ is location-scale equivariant, as the statistic of eq. \eqref{equation:feasibleEVT:-q} is location-scale invariant.
 
	Care must be exercised in interpreting the asymptotic properties of   $CI_{\alpha}$:   
	it is a $(1-\alpha)\times 100\%$  asymptotic confidence intervals for $F^{-1}\left(1-l/N \right)$. The target quantity shifts with $N$, and the value of $l$ determines which quantile is targeted for a given sample size. %
	
\end{example}

\begin{example}[Median-unbiased estimator for 95th percentile]
	\label{example:medianUnbiasedEstimator}

	Let $q, r$, and $\hat{c}_{\alpha}$ be as in example \eqref{example:CI:95evt}. 
	By theorem \ref{theorem:feasibleEVT} 
	$ P\left(  ({\vartheta_{N-r, N, T}- F^{-1}(1-l/N) })/({\vartheta_{N-q, N, T}-\vartheta_{N, N, T}}) \leq \hat{c}_{1/2} \right) \to 1/2$. Rearranging,   we obtain that
	the estimator
	\begin{equation}\label{equation:medianUnbiasedFullSample}
	\Mcal_{N, T} = \vartheta_{N-r, N, T} - \hat{c}_{1/2} ( \vartheta_{N-q, N, T} - \vartheta_{N, N, T} )
	\end{equation}
	is  asymptotically median-unbiased  for $F^{-1}(1-l/N)$ (see \cite{Chernozhukov2011} for a similar construction in a quantile regression setting).
Note that $\Mcal_{N, T}$ is always contained in  $CI_{\alpha}$ (unlike $\vartheta_{N-r, N, T}$ which  lies in $CI_{\alpha}$ if $\hat{c}_{\alpha/2}$ and $\hat{c}_{1-\alpha/2}$ have different signs).

\end{example}

 \begin{example}[Hypothesis tests about support]
 	\label{example:test-support-hypothesis}

 	We can also use theorem \ref{theorem:feasibleEVT} to test hypotheses about the support of $F$.     	Let $\gamma<0$
 	and suppose we wish to test  $H_0: F^{-1}(1)\leq  C$ 
 	vs. $H_1: F^{-1}(1)>C$. 
 	Define the test statistic  $
 	W_C=  {(\vartheta_{N, N, T}- C)}/{(\vartheta_{N-q, N, T} -\vartheta_{N, N, T})}.$
 	The test rejects $H_0$ if $W_C<\hat{c}_{\alpha}$ where $\hat{c}_{\alpha}$ is a consistent estimator of the $\alpha$th quantile of  ${  (E_1^*)^{-\gamma}  }/[{(E_1^*+ \dots + E_{q+1}^*)^{-\gamma}-(E_1^*)^{-\gamma} }]$. 
 	The test is   asymptotically size $\alpha$ and consistent against point alternatives, since $P(W_C<\hat{c}_{\alpha}|F^{-1}(1)=C)\to \alpha$, and for any $\delta>0$, 	$P(W_C<\hat{c}_{\alpha}|F^{-1}(1)=C-\delta)  \to 0$,  	$P(W_C<\hat{c}_{\alpha}|F^{-1}(1)=C+\delta) \to 1$.
 \end{example}

  We now describe a subsampling estimator for the limit distribution of theorem \ref{theorem:feasibleEVT} \citep{Politis1994, Politis1999}.
    Let $q>1, r\geq 0$, and $l\geq 0$. Define $J(x)$ to be the limit distribution in eq. \eqref{equation:feasibleEVT:-q}.   
  Split the set of units $\curl{1, \dots, N}$ into all    subsamples of size $b$  and index the  subsamples by $s$, $s=1, \dots, \binom{N}{b}$ (see remark \ref{remark:choice-of-b} below on the choice of $b$).  Let $\vartheta^{(s)}_{b-k, b, T}$ be the $(b-k)$th  order statistic in subsample $s$.  
  Define the subsampling estimator $L_{b, N, T}$ for $J$ as 
  \begin{align} 
   L_{b, N, T}(x) & = \dfrac{1}{\binom{N}{b}}\sum_{s=1}^{\binom{N}{b}}  \I\curl*{ W_{s, b, N, T}\leq x   }, \quad 
  W_{s, b, N, T}   =  \dfrac{ \vartheta_{b-r, b, T}^{(s)} - \vartheta_{N-Nl/b, N, T} }{\vartheta_{b-q, b, T}^{(s)}  - \vartheta_{b, b, T}^{(s)} }, \quad  \dfrac{Nl}{b}\leq N.
 \end{align}
Observe that the subsample statistic $W_{s, b, N, T}$ is centered at $\vartheta_{N-Nl/b, N, T}$. Intuitively,  this corresponds to the $(1-l/b)$th quantile,  correctly centering  the subsampled statistics.  If we are interested in $F^{-1}(1)$, then $l=0$, and the statistic is centered at $\vartheta_{N, N, T}$.

  Define the estimated critical value $\hat{c}_{\alpha}$ as the $\alpha$th quantile of $L_{b, N, T}$. The following result shows that $\hat{c}_{\alpha}$ is consistent for the true critical values  of interest for all $\alpha\in(0, 1)$.

  \begin{theorem}
  
 	\label{theorem:subsampling}  
 	Let $b=N^m, m \in (0, 1)$.  	  If $l>0$, let conditions of propositions \ref{proposition:ratesExtremeHalfUniformity} and \ref{proposition:ratesIntermediateHalfUniformity} hold with $\delta=1-m$.  If $\gamma<0$ and $l=0$, then let conditions of proposition \ref{proposition:ratesExtremeHalfUniformity} hold. 
 	 Then the subsampling estimator $L_{b, N, T}(x)\xrightarrow{p} J(x)$ at all $x$ and  $\hat{c}_{\alpha}\xrightarrow{p} c_{\alpha}= J^{-1}(\alpha)$    for all $\alpha\in(0, 1)$.
 	 
  \end{theorem}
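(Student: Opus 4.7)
The plan is to follow the standard Politis--Romano subsampling scheme, while carefully handling the nonstandard feature that the subsample statistic $W_{s,b,N,T}$ is centered at the \emph{full-sample} intermediate order statistic $\vartheta_{N-Nl/b,N,T}$, which is common to all subsamples. The argument splits into three steps: (i) an ``idealized'' subsample convergence using $F^{-1}(1-l/b)$ as centering, (ii) control of the discrepancy between the actual and idealized centerings, and (iii) the standard CDF-averaging step to obtain convergence of the subsampling CDF.

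First, I would apply Theorem \ref{theorem:feasibleEVT} to a generic iid sample of size $b = N^m$. The rate conditions in Propositions \ref{proposition:ratesExtremeHalfUniformity} and \ref{proposition:ratesIntermediateHalfUniformity} (with $\delta = 1-m$ for the intermediate condition) are exactly calibrated so that tail equivalence holds at sample size $b$, yielding
\begin{equation*}
W^{\ast}_{s,b,N,T} := \frac{\vartheta^{(s)}_{b-r,b,T} - F^{-1}(1-l/b)}{\vartheta^{(s)}_{b-q,b,T} - \vartheta^{(s)}_{b,b,T}} \Rightarrow J \quad \text{for each fixed } s.
\end{equation*}
The difference $W^{\ast}_{s,b,N,T} - W_{s,b,N,T}$ equals $[\vartheta_{N-Nl/b,N,T} - F^{-1}(1-l/b)]/[\vartheta^{(s)}_{b-q,b,T} - \vartheta^{(s)}_{b,b,T}]$. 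When $l>0$, the index $k := Nl/b$ satisfies $k \to \infty$ and $k/N \to 0$, so $\vartheta_{N-k,N,T}$ is an intermediate order statistic of the \emph{full} sample. Applying Theorem \ref{theorem:intermediateNormality} to the full sample gives $\vartheta_{N-k,N,T} - F^{-1}(1-k/N) = O_p(c_N^{\mathrm{IV}}/\sqrt{k})$, while by Lemma \ref{theorem:jointEVT} the denominator is of order $a_b \asymp F^{-1}(1-1/b)$; since $k/N = l/b$, the normalizations $c_N^{\mathrm{IV}}$ and $a_b$ match in magnitude, and the ratio is $O_p(1/\sqrt{k}) = o_p(1)$. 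For the case $l=0,\ \gamma<0$, the analogous step compares $\vartheta_{N,N,T} - F^{-1}(1) = O_p(F^{-1}(1) - F^{-1}(1-1/N))$ to the subsample scale $F^{-1}(1) - F^{-1}(1-1/b)$, and the ratio vanishes because $b \ll N$.

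The subsampling consistency now follows from a by-now-standard argument. Since the data are iid by Assumption \ref{assumption:independence}, the statistics $\{W^{\ast}_{s,b,N,T}\}_{s}$ are identically distributed in $s$, and a Hoeffding-type U-statistic bound yields $\E L^{\ast}_{b,N,T}(x) \to J(x)$ together with $\var(L^{\ast}_{b,N,T}(x)) = O(b/N) \to 0$ at each continuity point $x$ of $J$; hence $L^{\ast}_{b,N,T}(x) \xrightarrow{p} J(x)$. Step (ii) upgrades this to $L_{b,N,T}(x) \xrightarrow{p} J(x)$ because the indicators $\I\{W_{s,b,N,T} \leq x\}$ and $\I\{W^{\ast}_{s,b,N,T} \leq x\}$ differ only on a vanishingly small neighborhood of $x$, whose $J$-measure goes to zero by continuity. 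As $J$ is continuous and strictly increasing at each $c_{\alpha}$, the standard inverse-CDF argument delivers $\hat{c}_{\alpha} \xrightarrow{p} c_{\alpha}$ for every $\alpha \in (0,1)$.

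The main obstacle is step (ii): the full-sample centering couples the subsample statistics, and the discrepancy must be shown to be of strictly smaller order than the subsample scale $\vartheta^{(s)}_{b-q,b,T} - \vartheta^{(s)}_{b,b,T}$. This estimate rests on a regular-variation comparison of the IVT normalization at the full-sample intermediate order $k = Nl/b$ with the EVT normalization at the subsample size $b$; these match in magnitude under Assumption \ref{assumption:vonMises}, so negligibility comes purely from the $1/\sqrt{k}$ factor supplied by the IVT. A further subtlety is that the random subsample denominator must not be pathologically small; this is handled by the joint EVT of Lemma \ref{theorem:jointEVT}, which keeps the denominator at order $a_b$ with probability tending to one.
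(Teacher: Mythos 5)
Your proposal is correct and takes essentially the same route as the paper: apply Theorem \ref{theorem:feasibleEVT} at the subsample size $b$ to obtain $W^{*}_{s,b,N,T}\Rightarrow J$, write the actual subsample statistic as $W^{*}$ plus a centering discrepancy with numerator $\vartheta_{N-Nl/b,N,T}-F^{-1}(1-l/b)$ common to all subsamples, control that discrepancy by a three-factor decomposition (an IVT-normalized $O_p(1)$ term, a joint-EVT-normalized $O_p(1)$ denominator, and a deterministic regular-variation ratio that vanishes), and conclude via the Politis--Romano / Serfling U-statistic argument. The one small divergence is in the $l=0$, $\gamma<0$ case: you bound $\vartheta_{N,N,T}-F^{-1}(1)=O_p\bigl(F^{-1}(1)-F^{-1}(1-1/N)\bigr)$ directly by the noisy full-sample EVT, whereas the paper splits this into a noiseless discrepancy $F^{-1}(1)-\theta_{N,N}$ plus an estimation-error term $\theta_{N,N}-\vartheta_{N,N,T}$ that it controls via Lemma \ref{lemma:stochasticOrderOfDifference} and the rate assumptions; your variant is a little cleaner but delivers the same $o_p(1)$ bound under the same hypotheses.
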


Theorem \ref{theorem:subsampling} shows that subsampling may be applied in the case of noisy observations.  The key step in establishing the validity of subsampling is to control the estimation noise in the subsamples and to leverage tail equivalence. 
Theorem \ref{theorem:subsampling} parallels a result derived by \cite{Chernozhukov2011} for inference in extreme quantile regression.
%

\begin{remark}[Choice of $b$]
	\label{remark:choice-of-b}
We suggest two possible approaches for choosing $b$: a  minimal  interval volatility criterion \citep[algorithm 5.1]{Romano2001SubsamplingIntervalsAutoregressive} and a criterion based on the stability of the subsampled distribution \citep[p.\ 971]{Bickel2008ChoiceOutBootstrapa}.  	In both cases, subsampling is applied for a range of candidate values of $b$. The value of $b$ is selected by minimizing an approach-specific variability criterion. The former approach minimizes the variance of the endpoints of the confidence intervals. The latter one minimizes the distance between the subsampling distributions for pairs of consecutive candidate values of $b$.
	Provided the conditions of theorem \ref{theorem:subsampling} hold for each candidate value of $b$, either approach will  select a valid $b$. In the simulations of section \ref{section:simulations},  
	  choosing $b$ with the minimal volatility method leads to favorable performance of confidence intervals.
\end{remark}

\begin{remark}[Estimation of the EV index]
	\label{remark:gammaEstimation}
	Inference based on theorem \ref{theorem:feasibleEVT} does not require an estimate   of $\gamma$. However, $\gamma$ may be consistently estimated.
	Let $k=k(N)$ satisfy $k\to\infty, k=o(N)$. Let 
	$A_{N, T} =k^{-1}\sum_{i=0}^{k-1} \vartheta_{N-i, N, T}- \vartheta_{N-k, N, T}$ and $B_{N, T} = k^{-2}\sum_{i=0}^{k-1} i \left( \vartheta_{N-i, N, T}- \vartheta_{N-k, N, T}\right)$.
	The \cite{Hill1975} estimator $\hat{\gamma}_H$ and probability weighted moment (PWM) estimator $\hat{\gamma}_{PWM}$ of \cite{Hosking1987} are  defined as 
	\begin{align}
	\hat{\gamma}_H &  = \dfrac{1}{k} \sum_{i=0}^{k-1} \log (\vartheta_{N-i, N, T} ) - \log(\vartheta_{N-k, N, T}), \quad 	\hat{\gamma}_{PWM}  = \dfrac{A_{N, T}- 4 B_{N, T}}{A_{N, T}- 2B_{N, T}},
	\end{align}
	 If conditions of theorems \ref{theorem:evtNoisy} and \ref{theorem:intermediateNormality} hold, then $\hat{\gamma}_H\xrightarrow{p} \gamma$ if $\gamma> 0$; 
	 $\hat{\gamma}_{PWM}\xrightarrow{p}\gamma$  if  $\gamma<1$ (approximately if $F$ has a finite first moment). The proof of consistency  is given in \textcolor{black}{proof appendix}. 
	 In practice, the value of $k$ may be chosen in a data-driven way, and \cite{Caeiro2016ThresholdSelectionExtreme} discuss a number of approaches. In the simulations of section \ref{section:simulations}, we use the modified semiparametric bootstrap of \cite{Caers1999StatisticsModelingHeavy} \citep[algorithm 4.3]{Caeiro2016ThresholdSelectionExtreme}.	 
\end{remark}

\begin{remark}[Simulated critical values]
\label{remark:critical-values-simulation}
	
Estimating $\gamma$ provides a second way of estimating the critical values necessary for application of theorem \ref{theorem:feasibleEVT}. Simulation-based critical values  
	can be obtained by drawing samples  from the limit distribution of eq.  \eqref{equation:feasibleEVT:-q}   after plugging in $\hat{\gamma}_H$ or $\hat{\gamma}_{PWM}$ in place of $\gamma$. 
	%
\end{remark}

\begin{remark}[Choice of $r$ and $q$]
	\label{remark:choiceOfq}
	
	 Applying the methods of examples \ref{example:CI:95evt}-\ref{example:test-support-hypothesis}
	requires choosing the tuning parameters $r$ and $q$.
	We suggest the following choices.
		For $N\leq 5000$, we suggest taking $q=2$ with subsampled critical values, and $q\in [2, 4]$ with simulated critical values.
		For larger cross-sections, larger values of $q$ may be taken, up to $q=30$.   
	In both cases, the numerator parameter $r$ should be picked to match $\vartheta_{N-r, N, T}$ to the sample $\tau$th quantile, that is,  $r=\floor{(1-\tau)N}$.
	While this choice of $r$ is somewhat improper in the context of theorem \ref{theorem:feasibleEVT}, we note that $r\leq 100$ regardless of $N$   under the rule of thumb outlined at the beginning of section \ref{section:inference}. 
The simulations of section \ref{section:simulations} and the Online Appendix show that these choices yield favorable coverage and length properties.  

\end{remark}

 \begin{remark}
 	An alternative approach for constructing CIs for extreme quantiles using an extreme order approximation is proposed by \cite{Muller2017}.  
They use lemma \ref{theorem:jointEVT} as the foundation of inference by treating    the vector of top $q$   order statistics as a single draw from the corresponding limit distribution.
Based on such a joint EVT, \cite{Muller2017} propose two methods: inverting the likelihood ratio 
and minimizing the average expected length where the average is taken over a pre-specified range of values for $\gamma$. 
Both methods lead to valid CIs for extreme quantiles in our setting,  as long as lemma \ref{theorem:jointEVT} holds; though we suggest using CIs based on theorem \ref{theorem:feasibleEVT} that require no optimization and no bounds on the parameter $\gamma$.

\end{remark} 

\subsection{Inference Using Intermediate Order Approximations}

The intermediate  order approximation of theorem \ref{theorem:intermediateNormality} provides an alternative approach to inference that is based on convergence of  intermediate order statistics (eq. \eqref{equation:intermediateTheoremStatistic}).  In this case, the quantile of interest is modeled as drifting to 1 at a rate   $k/N$ where $k\to\infty, k=o(N)$ as $N\to\infty$; this rate is slower than the rate $N^{-1}$ of extreme order approximations. 
 The resulting statistic is asymptotically standard normal.
 

To eliminate the unknown scaling rate $c_N$, we again use an additional order statistic.\footnote{ There are alternative approaches to inference using intermediate order statistics. For example, the subsampling approach
  	of \cite{Bertail1999, Bertail2004} can be used to account for unknown scaling rate $c_N$. However, the presence of a slowly varying component in $c_N$  requires using multiple subsampling sizes, which may be problematic if $N$ is not extremely large.
  	Alternatively, see ch. 4  of \cite{DeHaan2006} for inference under a second-order condition in a setting without estimation noise.
  }  
Unlike theorem \ref{theorem:feasibleEVT}, the statistics in the denominator of the self-normalized statistic are asymptotically perfectly dependent and only differ by a non-zero deterministic factor.  
  The following theorem first establishes that such a technique works in   the noiseless case,  which may be of independent interest; the result is then transferred  to noisy observables.
  
  \begin{theorem}\label{theorem:feasibleIVTnoiseless} 
  Let assumptions \ref{assumption:independence} and \ref{assumption:vonMises} hold. Let  $k=o(N), k\to\infty$. Let  $f=F'$ be non-increasing or non-decreasing in some left neighborhood of $F^{-1}(1)$ $(F^{-1}(1)\leq \infty)$. 
  	\begin{enumerate}[noitemsep,topsep=0pt,parsep=0pt,partopsep=0pt,  label={(\arabic*)}]
  	\item   Then 
  	\begin{equation} %
  	\dfrac{\theta_{N-k, N}- F^{-1}\left(1- \frac{k}{N} \right)}{\theta_{N-k, N}-\theta_{N-k-\floor{\sqrt{k}}, N}} \Rightarrow N(0, 1), \quad N\to\infty.
  	\end{equation} 
  	\item In addition, let conditions of theorem \ref{theorem:intermediateNormality} hold when evaluated at $k$ and $k+\sqrt{k}$. Then 
  	\begin{equation}\label{equation:IVTnoisy}
  	\dfrac{\vartheta_{N-k, N, T}- F^{-1}(1-k/N)}{\vartheta_{N-k, N, T}-\vartheta_{N-k-\floor{\sqrt{k}}, N, T}} \Rightarrow N(0, 1), \quad N, T\to\infty.
  	\end{equation}
  	\end{enumerate}
\end{theorem}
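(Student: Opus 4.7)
My plan is to show that the denominator in both parts, once multiplied by $\sqrt{k}/c_N$, converges in probability to $1$; together with Theorem \ref{theorem:intermediateNormality} and Slutsky's theorem this delivers the standard-normal limits.

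\textbf{Identifying the deterministic drift.} Set $V(x) := F^{-1}(1 - 1/x)$, so that $c_N = (N/k)\,V'(N/k)$ by the definition of $c_N$. A first-order Taylor expansion gives
\[
F^{-1}\!\left(1 - \tfrac{k}{N}\right) - F^{-1}\!\left(1 - \tfrac{k + \lfloor\sqrt{k}\rfloor}{N}\right) = V'(N/k) \cdot \frac{N \lfloor\sqrt{k}\rfloor}{k(k + \lfloor\sqrt{k}\rfloor)}(1 + o(1)) = \frac{c_N}{\sqrt{k}}(1 + o(1)),
\]
with the remainder controlled by the monotonicity of $f$ near $F^{-1}(1)$ together with Assumption \ref{assumption:vonMises}.

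\textbf{Noiseless denominator.} By the probability integral transform, $\theta_{N-k, N} - \theta_{N-k - \lfloor\sqrt{k}\rfloor, N}$ has the same distribution as $F^{-1}(1 - U_{k+1, N}) - F^{-1}(1 - U_{k+1 + \lfloor\sqrt{k}\rfloor, N})$ for uniform order statistics $U_{\cdot, N}$. Standard calculations yield $U_{k+1, N} = (k+1)/(N+1) + O_p(\sqrt{k}/N)$ and, more importantly, $U_{k+1+\lfloor\sqrt{k}\rfloor, N} - U_{k+1, N} = (\sqrt{k}/N)(1 + o_p(1))$, since this difference is Beta-distributed with mean $\lfloor\sqrt{k}\rfloor/(N+1)$ and standard deviation of order $k^{1/4}/N$. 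The mean value theorem applied to $F^{-1}(1 - \cdot)$, combined with the monotonicity of $f$ to keep $f(F^{-1}(1 - \xi))/f(F^{-1}(1 - k/N))$ uniformly close to $1$ on the relevant shrinking interval, then gives
\[
\sqrt{k}\,\frac{\theta_{N-k, N} - \theta_{N-k - \lfloor\sqrt{k}\rfloor, N}}{c_N} \xrightarrow{p} 1.
\]
Dividing the noiseless IVT (Theorem \ref{theorem:intermediateNormality} with $G_T^{-1}\equiv 0$) by this convergence, Slutsky's theorem closes part (1).

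\textbf{Noisy case and main obstacle.} The naive bound that each centered noisy intermediate statistic is $O_p(c_N/\sqrt{k})$ is too coarse: the two centered quantities at positions $k$ and $k + \lfloor\sqrt{k}\rfloor$ are asymptotically perfectly dependent, so the marginal IVTs alone cannot rule out their difference cancelling the deterministic drift $c_N/\sqrt{k}$. I would instead extract, from the proof of Theorem \ref{theorem:intermediateNormality} under the tail equivalence conditions imposed at both $k$ and $k + \lfloor\sqrt{k}\rfloor$, a coupling bound of the form $\vartheta_{N-k, N, T} - \theta_{N-k, N} = o_p(c_N/\sqrt{k})$, and analogously at the shifted index. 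Substituting these into the noisy denominator reduces it to its noiseless counterpart up to $o_p(c_N/\sqrt{k})$, so the previous step applies and the ratio in \eqref{equation:IVTnoisy} converges to $N(0,1)$. Establishing this coupling, which ties the noisy to the noiseless sequence jointly at two nearby intermediate indices rather than just marginally, is the main technical hurdle.
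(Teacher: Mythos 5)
Your route to part (1) is valid and genuinely different from the paper's. You control the denominator directly: the uniform spacing $U_{k+1+\lfloor\sqrt{k}\rfloor,N}-U_{k+1,N}$ is Beta$(\lfloor\sqrt{k}\rfloor, N-\lfloor\sqrt{k}\rfloor+1)$, whose standard deviation is $o$ of its mean, so it equals $(\sqrt{k}/N)(1+o_p(1))$; a mean-value-theorem step on $F^{-1}(1-\cdot)$ then gives the scaled denominator $\xrightarrow{p}1$ and Slutsky closes. The paper instead proves a joint bivariate CLT for $(\theta_{N-k,N},\theta_{N-k-\lfloor\sqrt{k}\rfloor,N})$ with a degenerate (perfectly correlated) normal limit (Lemma \ref{lemma:intermediatePairNoiseless}, via the R\'enyi/quantile representation, Potter's inequalities and Cram\'er--Wold), decomposes the denominator into a stochastic part that vanishes because of that perfect correlation and a deterministic residual $\sqrt{k}[U_F(N/k)-U_F(N/(k+\lfloor\sqrt{k}\rfloor))]/c_N\to 1$. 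Your Beta-moment route is more elementary for part (1). One imprecision: the final density-ratio step $f(F^{-1}(1-k/N))/f(F^{-1}(1-\xi))\xrightarrow{p}1$ is driven by regular variation of $U'_F$ under the von Mises condition (Corollary 1.1.10 in de Haan--Ferreira) combined with local uniform convergence, not by monotonicity of $f$ per se; the paper uses monotonicity for a different purpose, namely to bracket $\int U'_F$ by its endpoint values.

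Your diagnosis of the obstacle in part (2) is exactly right, but the proposed fix has a gap. The bound ``$\vartheta_{N-k,N,T}-\theta_{N-k,N}=o_p(c_N/\sqrt{k})$'' is not something the proof of Theorem \ref{theorem:intermediateNormality} supplies: $\vartheta_{N-k,N,T}$ and $\theta_{N-k,N}$ are obtained by ordering the same cross-section by two \emph{different} permutations (the $\varepsilon_{i,T}$ reshuffle the ranks), so there is no obvious pathwise coupling of that size, and the crude uniform bound $\max_i T^{-p}|\varepsilon_{i,T}|$ is not in general of order $o_p(c_N/\sqrt{k})$ under the proposition's rate conditions. What Theorem \ref{theorem:intermediateNormality}'s proof actually delivers (eq.~\eqref{appendix:equation:ivtProofTailZero}) is a \emph{distributional} coupling through the common rank sequence: $\vartheta_{N-k,N,T}\overset{d}{=}U_T(Y_{N-k,N})$ and, with the same $Y$, $\theta_{N-k,N}\overset{d}{=}U_F(Y_{N-k,N})$, with $\sqrt{k}\,c_N^{-1}[U_T(Y_{N-k,N})-U_F(Y_{N-k,N})]\xrightarrow{p}0$. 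Applying this at both $k$ and $k+\lfloor\sqrt{k}\rfloor$ inside the $U$-representation of the statistic in eq.~\eqref{equation:IVTnoisy}---which the paper formalizes via Cram\'er--Wold, extending Lemma \ref{lemma:intermediatePairNoiseless} to $\vartheta$---decomposes the noisy numerator and denominator into their noiseless counterparts plus $o_p(c_N/\sqrt{k})$ terms, after which part (1) closes the argument. So the strategy is sound, but the coupling must be realized through the quantile-transform representation rather than as a pathwise bound on the original order statistics.
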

The statistics of theorem \ref{theorem:feasibleIVTnoiseless} are particularly simple to use. First,  the limiting distribution does not involve any unknown parameters. Second, these statistics do not involve any tuning parameters. The choice of $k$   determines  the centering quantile of interest. When $k$ is chosen, the denominator is uniquely determined by $k$:

\begin{example}
	\label{example:ivt-interval}
	 Let $N=200$ and let    $k=k(N)$  be such that  $k(200)=16$ and $k\to\infty, k=o(N)$ as $N\to\infty$. Then 
	  by theorem \ref{theorem:feasibleIVTnoiseless} a  confidence interval for $F^{-1}(0.92)$ based on sample of size $N=200$:
\begin{equation*} 
	\left[\vartheta_{N-k, N}- z_{1-\alpha/2}\left(\vartheta_{N-k, N} - \vartheta_{N-k-\floor{\sqrt{k}}, N} \right), \vartheta_{N-k, N}- z_{\alpha/2}\left(\vartheta_{N-k, N} - \vartheta_{N-k-\floor{\sqrt{k}}, N} \right)   \right], 
\end{equation*}
  where   $z_{\alpha}$ the $\alpha$th quantile of the standard normal distribution.   
\end{example}

 \begin{remark}[Limitations of theorem \ref{theorem:feasibleIVTnoiseless}]
 	\label{remark:feasibleIVT:limitation}
 	
 	The approximation of theorem \ref{theorem:feasibleIVTnoiseless} should not  be used if $k$ or $N$ are small. If $\floor{\sqrt{k}}$ is small but positive,  $\vartheta_{N-k, N}$ and $\vartheta_{N-k-\floor{\sqrt{k}}, N}$ will be close, and the distribution of the statistic may be far from normality. If $\floor{\sqrt{k}}=0$, the statistic cannot be used. 	Since $k$ must satisfy $k=o(N)$, $N$ must also be suitably large.
 	A value of $k\approx 100$ is generally the minimum threshold for theorem \ref{theorem:feasibleIVTnoiseless}  to provide a useful approximation, coupled with the requirement that $N  \gtrsim 10^4$. We refer to the simulations reported in the Online Appendix.

 \end{remark}
 
 \begin{remark}[Practical difference between extreme and intermediate order CIs]
 	\label{remark:difference-extreme-intermediate}
 	
 	Although the formulas for the two CIs are visually similar, they differ in terms of their construction and applicability. First,  for the extreme order CI of example \ref{example:CI:95evt}, there is flexibility in the choice of the component order statistics $(\vartheta_{N-r, N, T}, \vartheta_{N-q, N, T})$, regardless of the target quantile. In contrast, for the intermediate order CI of example \ref{example:ivt-interval} the   statistics $(\vartheta_{N-k, N}, \vartheta_{N-k-\floor{\sqrt{k}}, N}) $ are rigidly determined by the target quantile.
 	Second, the CI of example \ref{example:CI:95evt} can be used even for small values of $N$, where the CI of example \ref{example:ivt-interval} should only be applied in sufficiently large samples (remark \ref{remark:feasibleIVT:limitation}).
 	 \end{remark}

\subsection{Inference Using Central Order Approximations}
\label{subsection:centralOrderApproximations}
 
 The third method of inference is based on the central limit theorem for quantiles. 
The quantile of interest  $F^{-1}(\tau)$ is modeled as fixed and  independent from $(N, T)$, in contrast to the  extreme and intermediate order approximations given above. Such ``central'' order approximations require that a sufficient number of observations   be available on both sides of the corresponding  sample order statistic $\vartheta_{\floor{N\tau}, N, T}$ (at least 100 in the simulations of section \ref{section:simulations}).

 \cite{Jochmans2019} derive such approximations in the context of our problem, and we  briefly state their results. They study a version of \eqref{equation:noisyThetaDefinition} given by $\vartheta_{i, T} = \theta_i + T^{-1/2} \varepsilon_i$ (that is, $p=1/2$ and $G_T=G$ for all $T$).
We introduce some additional notation: let $K$ be a kernel function, $h$  a bandwidth parameter, and define
 \begin{align}%
 b_F(x)   & = \left[\dfrac{ \E\left[\sigma^2_i|\theta_i=x \right]f(t) }{2}\right]', \quad  \sigma_i^2   = \var(\varepsilon_{i}|\theta_i),\\  
  \hat{b}_F  &  = - \dfrac{(nh^2)^{-1}\sum_{i=1}^n \sigma^2_i K'((\vartheta_{i, T}-\theta)/h ) }{2}, \quad 
  \hat{\tau}^*  = \tau + \dfrac{\hat{b}_F(\vartheta_{\floor{N\tau}, N, T}) }{T}.
 \end{align} 
$\sigma_i^2$ is assumed known and invariant over time. 
 
 \begin{theorem}[Propositions 2 and 4 of \cite{Jochmans2019}]
\label{theorem:JW}
 	
Let conditions of proposition 3 in \cite{Jochmans2019} hold, and in particular  let for all $T$ $\varepsilon_{i, T}=\varepsilon_i$, $\E[\varepsilon_i|\theta_i]=0$, and $\varepsilon_i$ be independent from $\theta_i$ given $\sigma_i^2$.    Let $\tau\in (0, 1)$.
\begin{enumerate}[noitemsep,topsep=0pt,parsep=0pt,partopsep=0pt,  label={(\arabic*)}]
\item If $N/T^2\to c <\infty$, then as   $N, T\to\infty$
\begin{equation} 
\sqrt{N}\left(\vartheta_{\floor{N\tau}, N, T} - F^{-1}(\tau) + \dfrac{1}{T}\dfrac{b_F(F^{-1}(\tau))}{f(F^{-1}(\tau))} \right)\Rightarrow N\left(0, \dfrac{\tau(1-\tau)}{f(F^{-1}(\tau))^2} \right).
\end{equation}
\item If $N/T^4\to 0$, then as $N, T\to\infty$
	\begin{equation} 
\sqrt{N}\left(\vartheta_{\floor{N\hat{\tau}^*}, N, T}-F^{-1}(\tau)  \right)\Rightarrow N\left(0, \dfrac{\tau(1-\tau)}{f(F^{-1}(\tau))^2} \right).
\end{equation}
\end{enumerate} 

 \end{theorem}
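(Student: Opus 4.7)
The plan is to reduce both parts to the classical Bahadur representation for the sample quantile, applied to the iid triangular array $\vartheta_{i,T}$ whose common CDF I will call $F_T$. Since this is quoted from \cite{Jochmans2019}, the heavy lifting is already done there, but the natural route to reprove it has two distinct ingredients: an analytic expansion of $F_T$ around $F$, and a stochastic expansion of the empirical quantile around the population quantile of $F_T$.

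For part (1), the first step is to expand $F_T(x)-F(x)$. Writing $F_T(x)=\E_\theta\bigl[G_{\varepsilon\mid\theta}(T^{1/2}(x-\theta)\mid\theta)\bigr]$ and convolving, a Taylor expansion in $T^{-1/2}\varepsilon$ (which requires the smoothness and moment conditions of \cite{Jochmans2019}'s Proposition 3) gives, after using $\E[\varepsilon_i\mid\theta_i]=0$ and conditional independence of $\varepsilon_i$ and $\theta_i$ given $\sigma_i^2$,
\begin{equation*}
F_T(x) = F(x) + \frac{1}{T}\, b_F(x) + o(T^{-1}),
\end{equation*}
uniformly in a neighborhood of $F^{-1}(\tau)$. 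Inverting this expansion at level $\tau$ yields
$F_T^{-1}(\tau)=F^{-1}(\tau)-T^{-1}\,b_F(F^{-1}(\tau))/f(F^{-1}(\tau))+o(T^{-1})$.
Second, since $\{\vartheta_{i,T}\}_{i=1}^N$ is iid with CDF $F_T$ and density $f_T\to f$ in a neighborhood of $F^{-1}(\tau)$, the standard Bahadur--Kiefer representation gives $\sqrt{N}\bigl(\vartheta_{\lfloor N\tau\rfloor,N,T}-F_T^{-1}(\tau)\bigr)\Rightarrow N(0,\tau(1-\tau)/f(F^{-1}(\tau))^2)$. Combining these and using $\sqrt{N}/T=O(1)$, which is precisely the condition $N/T^2\to c<\infty$, transfers the bias term into the centering and yields the statement.

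For part (2), the same two steps apply, but now with the data-dependent level $\hat\tau^*=\tau+\hat b_F(\vartheta_{\lfloor N\tau\rfloor,N,T})/T$. The calculation is
\begin{equation*}
F_T^{-1}(\hat\tau^*)=F^{-1}(\hat\tau^*)-\frac{b_F(F^{-1}(\hat\tau^*))}{T f(F^{-1}(\hat\tau^*))}+o(T^{-1}),
\end{equation*}
and a further Taylor expansion of $F^{-1}$ at $\tau$ together with $\hat b_F(\vartheta_{\lfloor N\tau\rfloor,N,T})=b_F(F^{-1}(\tau))+o_p(1)$ shows that the leading $T^{-1}$ term cancels, so $F_T^{-1}(\hat\tau^*)-F^{-1}(\tau)=o_p(T^{-1})$ uniformly enough for the Bahadur step. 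Applying the Bahadur representation at the random level $\hat\tau^*$ (which is legitimate because $\hat\tau^*=\tau+o_p(1)$ fast enough) and using $N/T^4\to 0$ to kill the residual $\sqrt{N}\cdot o_p(T^{-2})$ term delivers the asymptotically unbiased normal limit.

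The main obstacle, and the place where the assumptions of \cite{Jochmans2019}'s Proposition 3 do most of the work, is controlling the kernel-based plug-in $\hat b_F$. One needs uniform consistency of $\hat b_F$ in a neighborhood of $F^{-1}(\tau)$ together with a rate that dominates $T^{-1}$ once multiplied by $\sqrt{N}$; this is what pins down the bandwidth restrictions and the $N/T^4\to 0$ condition. A secondary but real technical nuisance is justifying the Taylor expansion of $F_T$ uniformly enough to obtain the Bahadur representation of part (1) with the bias term carried through — in particular verifying $f_T\to f$ and $f_T'\to f'$ locally uniformly so that the inversion of $F_T$ inherits the claimed expansion. Both issues are handled in detail in \cite{Jochmans2019}, and I would simply invoke their regularity package rather than reproduce it.
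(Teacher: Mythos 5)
The paper does not prove this theorem: it is explicitly a verbatim restatement of Propositions~2 and~4 of \cite{Jochmans2019}, imported to frame the comparison with central-order approximations, and the appendix accordingly contains no proof of it. So there is no internal argument against which to compare yours; the right benchmark is the argument in \cite{Jochmans2019} itself.

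Your two-step reduction (analytic expansion of $F_T$ around $F$, then Bahadur representation for the triangular array of iid $\vartheta_{i,T}$) is the standard and correct route, and it matches what Jochmans and Weidner actually do. Two points deserve tightening, though. First, your representation $F_T(x)=\E_\theta\bigl[G_{\varepsilon\mid\theta}(T^{1/2}(x-\theta)\mid\theta)\bigr]$ followed by ``a Taylor expansion in $T^{-1/2}\varepsilon$'' mixes two dual parameterizations: having already integrated out $\varepsilon$ into $G$, there is no $T^{-1/2}\varepsilon$ left to expand in. The clean version conditions on $(\varepsilon,\sigma^2)$, writes $F_T(x)=\E\bigl[F_{\theta\mid\sigma^2}(x-T^{-1/2}\varepsilon)\bigr]$, and Taylor-expands the inner cdf in $T^{-1/2}\varepsilon$; the mean-zero and conditional-independence conditions then kill the $T^{-1/2}$ term and produce the $T^{-1}b_F$ term. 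Second, in part (2) you assert $F_T^{-1}(\hat\tau^*)-F^{-1}(\tau)=o_p(T^{-1})$ from a \emph{first}-order expansion and then invoke $N/T^4\to0$ to kill a ``$\sqrt N\cdot o_p(T^{-2})$'' residual; these two statements are inconsistent. A first-order expansion of $F_T$ only gives an $o(T^{-1})$ residual after the bias correction, which would need $N/T^2\to0$, not $N/T^4\to0$. To get the residual down to $O(T^{-2})$ — which is what makes $N/T^4\to0$ the sharp requirement — you need a \emph{second}-order expansion of $F_T$ (and correspondingly more smoothness/moment conditions), so the term cancelled by $\hat\tau^*$ is the exact $T^{-1}$ coefficient and the next term is genuinely $O(T^{-2})$. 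This is exactly where the regularity package of \cite{Jochmans2019}'s Proposition~3 earns its keep, so your plan to invoke it wholesale is reasonable, but the sketch as written understates the order of expansion required.
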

Theorem \ref{theorem:JW} is a restatement of propositions 2 and 4 of \cite{Jochmans2019}.
It shows that the sample $\tau$th quantile $\vartheta_{\floor{N\tau}, N, T}$  is a consistent and asymptotically normal estimator for $F^{-1}(\tau)$ with standard asymptotic variance. However, $\vartheta_{\floor{N\tau}, N}$ is subject to bias of leading order $1/T$.  
\cite{Jochmans2019} show that this bias can be reduced by instead considering the  sample $\hat{\tau}^*$th quantile: $\vartheta_{\floor{N\hat{\tau}^*}, N, T}$ is  consistent and asymptotically normal with the same variance, but   the leading order of the bias is instead given by $1/T^2$. This bias  is eliminated if $\sqrt{N}/T^2\to 0$.

 	\begin{remark}  
 		\label{remark:debiasing}
 For central order approximations,	 the order of the bias incurred by using   $\vartheta_{\floor{N\tau}, N, T}$ in place of $\theta_{\floor{N\tau}, N}$  
 is the same for a broad class of distributions, and equal to $T^{-1}$. This invariance of bias order enables construction of the debiased estimator $\vartheta_{\floor{N\hat{\tau}^*}, N, T}$. 
 		 The situation is more complex for extreme and intermediate order approximations. 
 		The magnitude of the impact of estimation noise   is   determined by the interaction of $a_N$ and $T$ in theorem \ref{theorem:evtNoisy}; 
 	  $a_N$ itself may behave like $N^{\gamma}$ for   $\gamma\in \R$ depending on $F$,   up to slowly varying components. 
 		 
 	\end{remark}
 
\section{Simulation Study}\label{section:simulations}

 We assess the performance of our confidence intervals with a simulation study.
We consider a linear   model with unit-specific coefficients where the outcome $y_{it}$ is generated by covariates $(x_{it}, z_{it})$ as 
\begin{equation} 
y_{it} = \alpha_i + \eta_i x_{it} + \theta_i z_{it} + \sqrt{\dfrac{\var(\theta_i)}{\var(u_{it})}}\times u_{it}.
\end{equation} 
The parameter of interest $\theta_i$ is the coefficient on $z_{it}$.   We are interested in the  coverage  and length properties of a  nominal 95\% confidence interval (CI) for  the 0.9-0.9995th quantiles of $\theta_i$.

The data is generated as follows. The coefficients $(\alpha_i, \eta_i, \theta_i)$ are drawn from  a   Gaussian copula with correlation 0.3 and  marginals $t_3$, where $t_{\nu}$ is   Student's $t$-distribution with $\nu$ degrees of freedom. The value of $\nu$ matches the data of our empirical application. 
 Covariates $x_{it}$ are drawn as $0.3\eta_i + \left(1+0.3\norm{(\alpha_i, \eta_i, \theta_i)} \right)^{1/2}(0.1+U_i)$ where $U_i$ is a Uniform[0, 1]; $z_{it} $ are generated similarly with $\theta_i$ in place of $\eta_i$. 
 In this stylized setup,  the UIH and covariates are   dependent.
     $u_{it}$ is sampled independently  from $G_{\beta}$, where $G_{\beta}$ has density $g_{\beta} = \beta(1+\abs{x})^{-\beta-1}/2$ for $x\in \R$ and $\beta=8$. $G_{\beta}$ is a two-sided power law with finite moments of order $<\beta$.  $u_{it}$ is rescaled so that  its variance matches that of $\theta_i$.
   Coefficients are estimated using OLS.     As a result, $\theta_i$ and  estimation noise $\varepsilon_{i, T}$ are dependent.
    We consider $N=200, 2000$ and $T=10, 50$ and draw 10000 MC samples.
    
 \vspace{1mm}

  We construct CIs using  the three approximations of section \ref{section:inference}:
   \begin{enumerate}[noitemsep,topsep=0pt,parsep=0pt,partopsep=0pt, label={(\arabic*)}, leftmargin=*]
  	\item \emph{Extreme}:   we report two CIs based on theorem \ref{theorem:feasibleEVT} --- with subsampled (theorem \ref{theorem:subsampling}) and simulated (remark \ref{remark:critical-values-simulation}) critical values. 
  	For subsampling, we draw 5000 subsamples; subsample size $b$ is chosen using the minimum volatility method \citep{Politis1999} (remark \ref{remark:choice-of-b}). 
  	For the critical values of remark \ref{remark:critical-values-simulation}, we estimate $\gamma$ with the PWM estimator $\hat{\gamma}_{PWM}$ (remark \ref{remark:gammaEstimation}). The tuning parameter $k$ for $\hat{\gamma}_{PWM}$ is chosen using  algorithm 4.3 in \cite{Caeiro2016ThresholdSelectionExtreme} --- a modified version on the semiparametric bootstrap of \cite{Caers1999StatisticsModelingHeavy}.
For construction of the statistic itself, we take $r=\floor{l}$, so that the CI is centered on the sample quantile. The denominator tuning parameter $q$ is selected in line with remark \ref{remark:choiceOfq}, following additional simulation results in the Online Appendix. For the subsampled CI we take $q=2$, and for the simulated CI $q=4$.

  	\item \emph{Intermediate}: we report the CI based on theorem \ref{theorem:feasibleIVTnoiseless}. The value of $k$ is mechanically determined by the target quantile as in example \ref{example:ivt-interval} (see also remark \ref{remark:feasibleIVT:limitation}). \label{page:ref2:major-comment-2-choice-k-sim-description} 
  	
  	We also include a ``textbook'' CI based on extrapolation, see theorem 4.3.1 in \cite{DeHaan2006}. Unlike the CI of example \ref{example:ivt-interval}, the ``textbook'' CI requires choosing the intermediate sequence $k$ as a tuning parameter.  \label{page:ref2:major-comment-2-contrast-intermediate-CIs}
  	We choose $k$ using the method of \cite{Caeiro2016ThresholdSelectionExtreme} and use the PWM estimator for $\gamma$. This CI can only be constructed for sufficiently high quantiles.     The validity of the extrapolation interval hinges on a second-order condition \citep{DeHaan1996} which we do not examine in the current paper.

  	\item \emph{Central}: we reports two CIs: The first interval is a binomial CI based on the raw data.  {The same approach is implemented in the Stata command \texttt{centile}.}  
  	The second interval uses the analytical correction of \cite{Jochmans2019}. The corresponding critical values are computed using the bootstrap with 1000 bootstrap samples. 
  \end{enumerate}

\begin{figure}[htp]
	\centering 
	\includegraphics[width=\linewidth]{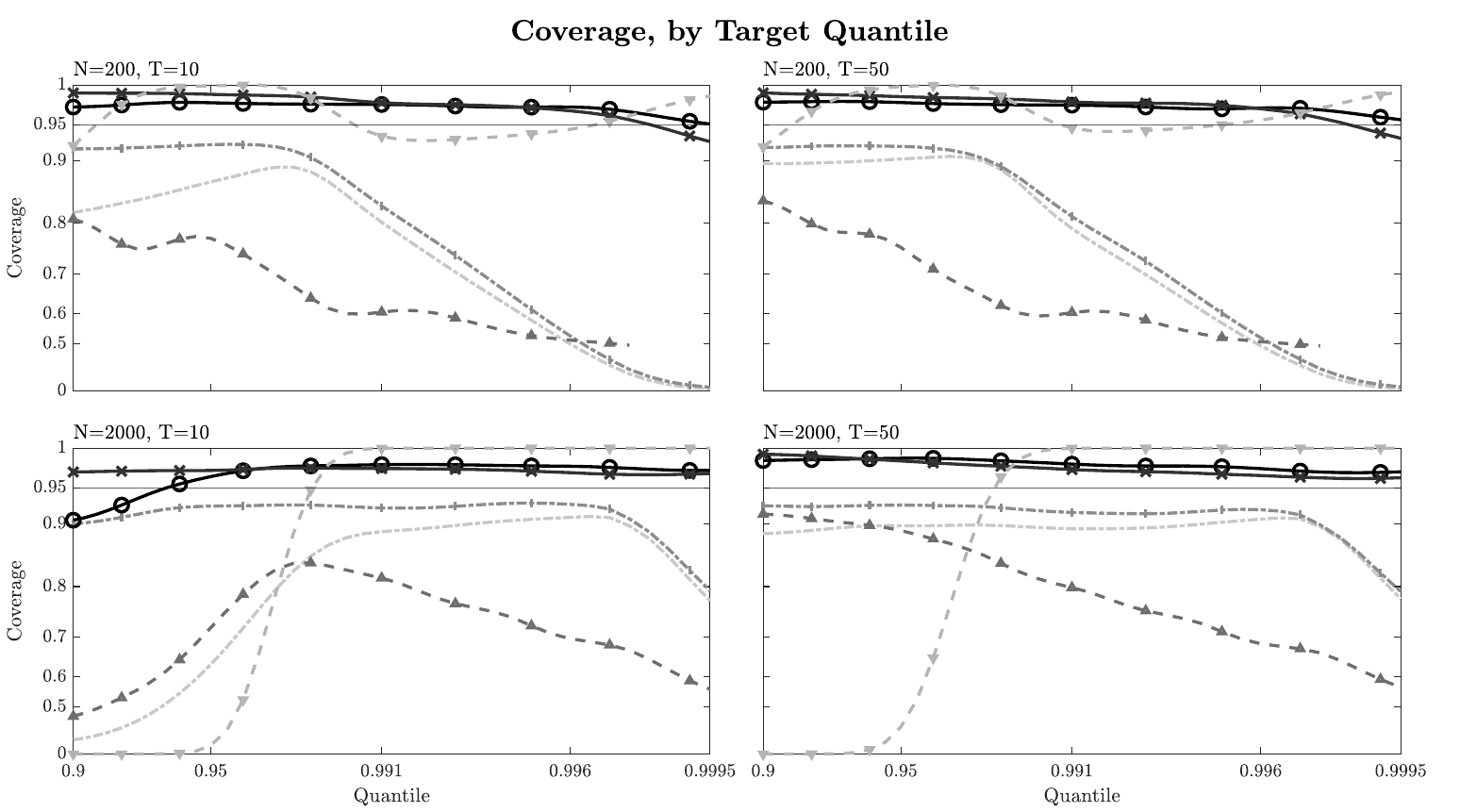}
 	\includegraphics[width=\linewidth]{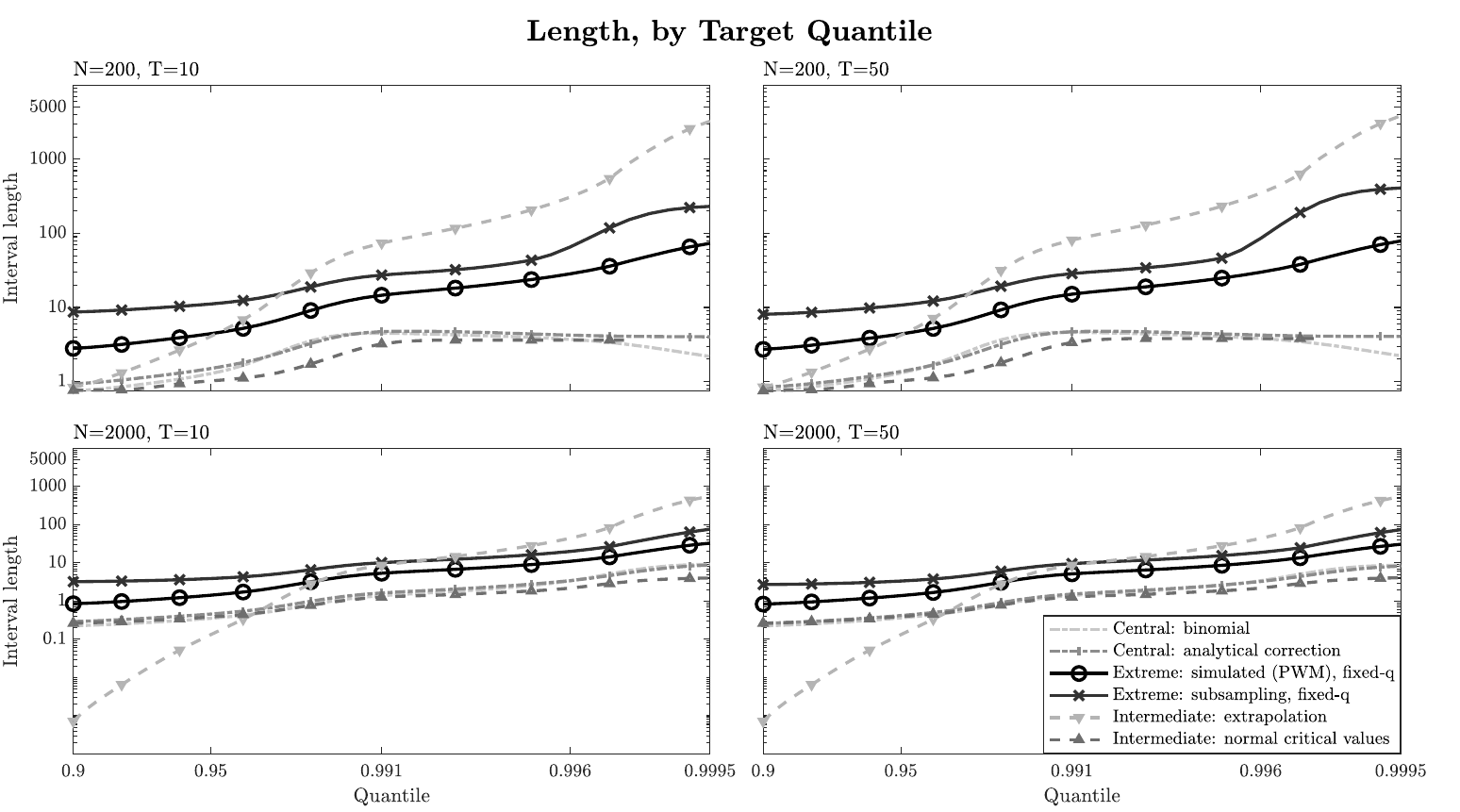}
	
	\caption{Coverage and length for 95\% nominal confidence interval.     $F=t_3$.   $u_{it}\sim G_\beta, \beta=8$ (8 finite moments).  Notes: (1) nonlinear $x$ and $y$-axes; (2) intermediate CIs cannot be constructed for some quantiles (remark \ref{remark:feasibleIVT:limitation}).}  \label{figure:simulation-noisy}
\end{figure}

\vspace{1mm}

We briefly discuss the validity of the above approximations.
For the extreme approximations, the rate conditions hold in light of proposition \ref{proposition:ratesExtremeHalfUniformity}  --- the estimation noise has more moments than $\theta_i$.
For intermediate approximations, the rate conditions hold only for a range of higher quantiles  (remark \ref{remark:dependenceonDelta}).
  For central order approximations, the rate conditions for validity of using raw data do not hold, particularly for $(N, T)=(2000, 10)$.  

 \vspace{1mm}

 Fig.\ \ref{figure:simulation-noisy} depicts our core simulation results.  It depicts coverage rates and lengths for the above confidence intervals. In order to assess the impact of estimation noise, in fig. \ref{figure:simulation-noiseless} we also plot the same results in a noiseless setting, that is, with $u_{it}=0$. 

 \vspace{1mm}
 
Our key recommendation for inference on $\tau$th quantiles reflects that given in section \ref{section:inference}:
\begin{enumerate}[noitemsep,topsep=0pt,parsep=0pt,partopsep=0pt, label={(\arabic*)}, leftmargin=*]
	\item If $(1-\tau)N\leq 100$, we recommend extreme order CIs. Both extreme CIs offer favorable coverage and length properties, and are overall  robust to estimation noise. Between the two extreme CIs, the choice may be viewed as a trade-off between performance and ease of computation. The subsampled CI is somewhat more robust, but more challenging to compute due to subsampling. 
	\item If $(1-\tau)N>100$, a central order approximation should be used in conjunction with the   correction of \cite{Jochmans2019}.  The correction generally yields coverage close to the nominal level without significantly increasing the CI length. 
	
\end{enumerate}

 \begin{figure}[htp]
 	\centering 
 	\includegraphics[width=\linewidth]{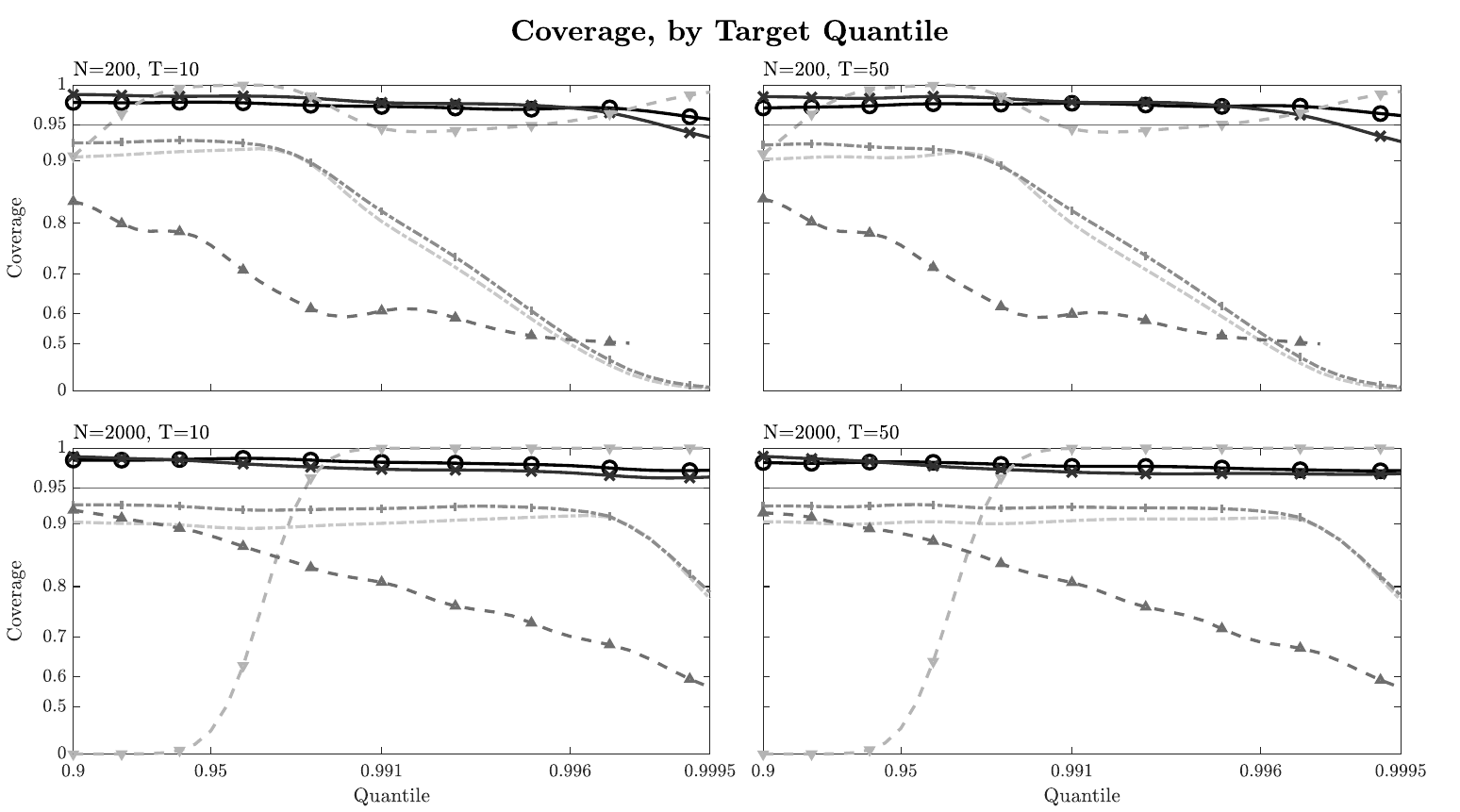}
 	\includegraphics[width=\linewidth]{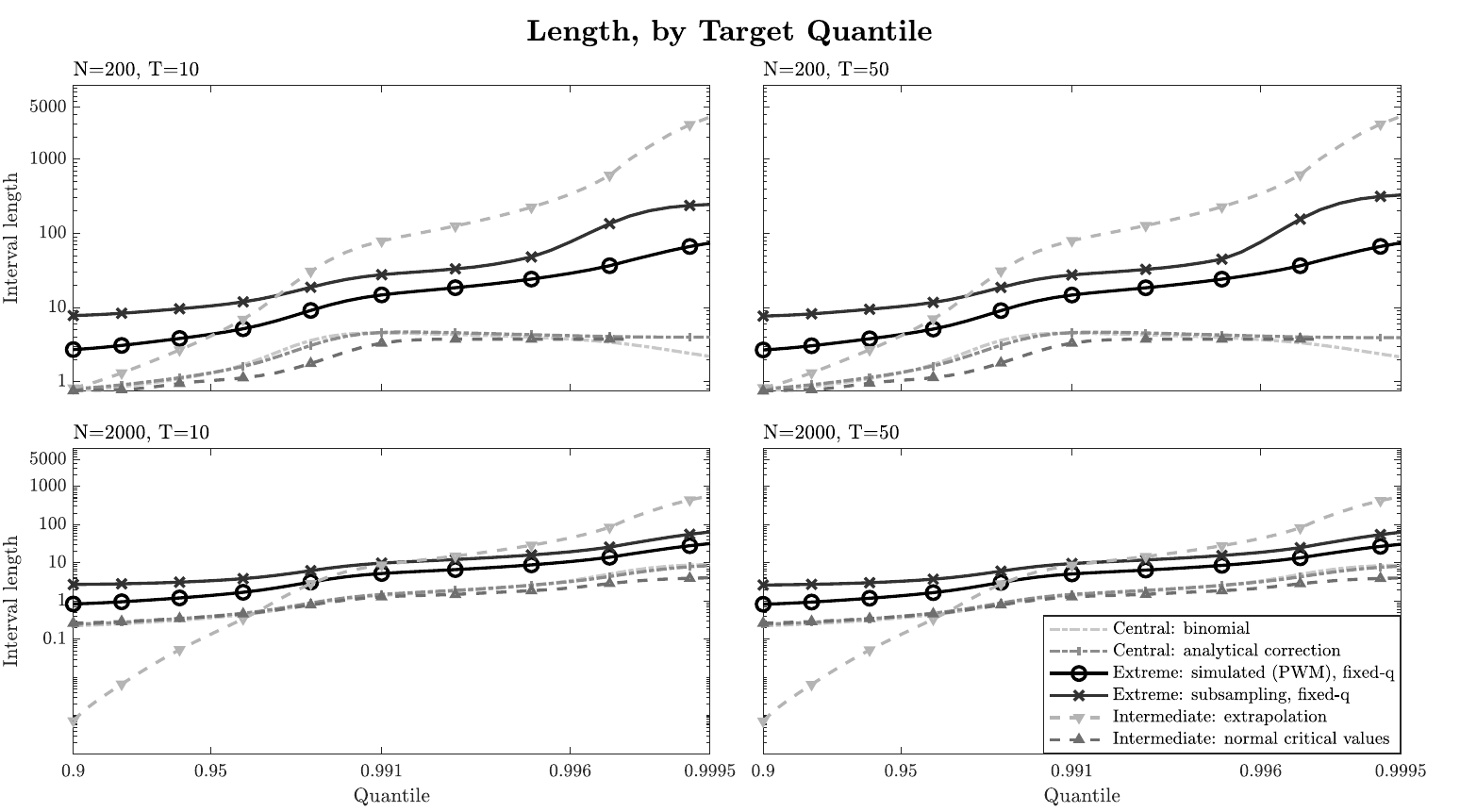}
 	
 	\caption{Coverage and length for 95\% nominal confidence interval.     $F=t_3$.   Noiseless data.   Notes: (1) nonlinear $x$ and $y$-axes; (2) intermediate CIs cannot be constructed for some quantiles (remark \ref{remark:feasibleIVT:limitation}).}  \label{figure:simulation-noiseless}
 \end{figure}
 
 The above recommendation is interchangeable only to a limited degree. As   $(1-\tau)N$ approaches zero, central CIs should be avoided as their coverage and length collapses to 0.
 The situation for extreme CIs is more delicate.  
 As $(1-\tau)N$ increases beyond 100, the distributional properties of $\vartheta_{N-\tau N, N, T}$  are  better reflected by theorem \ref{theorem:intermediateNormality} rather than theorem \ref{theorem:evtNoisy}. The associated rate conditions $(N, T)$ are accordingly typically stricter (remark \ref{remark:comparisonConditionsEVTIVT}).    If tail equivalence  holds at $\tau$, extreme CIs are valid, if wide. However, the rate conditions are progressively harder to satisfy as $\tau$ falls. Their failure may lead to size distortions  (compare the panels for $N=2000$ on figs. \ref{figure:simulation-noisy} and \ref{figure:simulation-noiseless}).

 The performance of the other three CIs is at best mixed.  First,  the binomial interval is strongly affected by estimation noise. The impact of noise is evident in the undercoverage of the binomial CI for quantiles below 0.99 (compare $(N, T)=(2000, 10)$ in figs. \ref{figure:simulation-noisy} and \ref{figure:simulation-noiseless}).  
 Second,  the  ``textbook'' intermediate extrapolation CI has good coverage properties when $(1-\tau)N\leq 20$ (e.g. $\tau\geq 0.99$ for $N=2000$).  However, this performance comes at the price of intervals that are notably longer than the extreme CIs (bottom panels of figs. \ref{figure:simulation-noisy}-\ref{figure:simulation-noiseless}).
Second, the CI based on theorem \ref{theorem:feasibleIVTnoiseless} generally has poor coverage. The issue is more pronounced for higher quantiles. 
This failure is primarily due to the slow convergence to the $N(0, 1)$ limit in statistic \eqref{equation:IVTnoisy}, as we show in the Online Appendix.

Additional simulations are reported in the Online Appendix.    We report results for $N=10000$ and $T=20$, and for additional distributions for $\theta_i$ and $u_{it}$. We also examine the impact of different choices for the tuning parameters of the extreme order CIs. Furthermore, we explore performance of the corrected quantile estimators of  \eqref{equation:medianUnbiasedFullSample} and assess speed of of convergence in theorem \ref{theorem:feasibleIVTnoiseless}. Overall, the evidence emerging from these
simulations is in line with the results presented above.

\section{Empirical Application}
\label{section:empirical}

As an empirical illustration, we revisit the relationship between firm productivity and population density, following \cite{Combes2012b} (CDGPR12). Since firm productivity must be estimated from firm-level data, this setting naturally aligns with our framework.

\paragraph{Background}

	CDGPR12 examine why firms in denser areas tend to be more productive \citep{Melo2009}, focusing on two possible explanations: agglomeration economies and firm selection. Their approach involves a two-step procedure. First, they estimate firm-specific productivities. Second, they compare the distributions of productivity in high-density (above-median density, AMD) and low-density (below-median density, BMD) areas using these estimates.
	
	A key assumption in the second step is that the true productivity distributions in AMD and BMD stem from a common latent parent distribution but differ in three parameters: mean, variance, and the extent of left-tail truncation. CDGPR12 estimate differences in these parameters to quantify the effects of agglomeration and selection. The mean and variance capture agglomeration effects --- firms in AMD tend to be more productive on average, though some benefit more than others. The truncation parameter reflects firm selection: as \cite{Asplund2006} argue, competition is tougher in larger markets, potentially leading to stronger left-tail truncation --- firms in denser areas must meet a higher productivity threshold to survive.

\paragraph{Empirical questions}
 
	We examine two questions. First, do the three parameters of CDGPR12 fully capture differences in the tails of AMD and BMD productivity distributions? An affirmative answer would support the key assumption of CDGPR12. Second, is there evidence for firm selection? This information is provided by the left tails of the productivity distributions. While CDGPR12 find that truncation must have equal strength between AMD and BMD, they do not determine the minimal productivity level or whether truncation occurs at all. 

\paragraph{Data}
We use firm-level microdata from the Banco de España's CBI dataset \citep[remote access]{BancoDeEspana2024MicrodataIndividualEnterprises} and demographic data from the Spanish National Statistics Institute (INE). The CBI covers over 50\% of non-financial Spanish firms from 1995 to 2023, including all public firms.

Our analysis focuses on three service-oriented sectors: wholesale and retail trade (NACE G), professional, scientific, and technical activities (NACE M), and administrative and support services (NACE N). These sectors are suited for analyzing agglomeration effects and firm selection dynamics, as they depend heavily on knowledge spillovers, customer proximity, and localized demand. Their relatively high firm turnover allows for firm selection to take effect quicker, allowing discovery of firm selection effects in shorter panels. 

We restrict the sample to urban areas, representing approximately 82\% of Spain's population. Urban areas are then classified as AMD or BMD based on whether they lie above or below the median experienced urban density \citep{delaRoca2017LearningWorkingBig}.

For the productivity analysis, we retain only firms observed for at least 18 years to control estimation noise. Specifically, if $i$ indexes firms, then $T_i$ ranges from 18 to 28 in eq. \eqref{equation:noisyThetaDefinition}. The number of such firms ($N$) mainly varies between 237 and 1996, depending on the sector and area type, with the exception of the trade sector in AMD (see figures below for exact values). The full sample, however, is used to estimate sector- and area-specific production functions \eqref{equation:cobb-douglas}.

\paragraph{Estimation of productivity}
 
	Firm-level productivity is estimated as follows. Let $a$ index density areas (AMD or BMD). We assume that firm $i$ in sector $s$ and area $a$ produces value added $V_{i,t}$ according to a Cobb-Douglas production function:
	\begin{equation}\label{equation:cobb-douglas}
		V_{it} = \exp(\theta_i) K_{it}^{\beta_{1, s, a}} L_{i\, t}^{\beta_{2, s, a}}\exp(u_{it}+ \beta_{0, t, s, a})
	\end{equation}
	where $\theta_i$ represents firm productivity (log total factor productivity, TFP), $K_{it}$ is capital, $L_{it}$ is labor, and $u_{it}$ captures measurement error in $V_{it}$. The parameters $\beta_{1, s, a}$ and $\beta_{2, s, a}$ are sector- and area-specific factor shares, while $\beta_{0, t, s, a}$ is a sector-, area-, and time-specific intercept. Firms in sector $s$ in AMD and BMD draw $\theta_i$ from latent distributions $F_{s, AM}$ and $F_{s, BM}$, respectively.
	
Productivity estimates $\vartheta_{i, T}$ are obtained in two steps. First, sector- and area-specific production functions \eqref{equation:cobb-douglas} are estimated using \cite{Ackerberg2015IdentificationPropertiesRecent}. Second, firm-specific log TFP is estimated with the average residual from the estimated production function:
	\begin{equation}
	\vartheta_{i, T}= T^{-1}\sum_{t=1}^T [\log V_{i\, t} - \hat{\beta}_{0, s, t} -\hat{\beta}_{1, s} \log K_{i\, t} - \hat{\beta}_{2, s} \log L_{i\, t}]	 
	\end{equation}
	To control estimation noise, we retain only firms observed for at least 18 years (see above).
 
\paragraph{Assumptions and rate conditions}
   
	Our analysis relies on two key assumptions: one concerning estimation error and the other on the underlying distributions $F_{s, AM}$ and $F_{s, BM}$. First, we assume that the estimation noise in $\vartheta_{i, T}$ has at least eight finite moments. This assumption primarily concerns measurement error, since the dominant source of noise is measurement error in $V_{it}$, while the error in estimating  $(\beta_{0, s, t}, \beta_{1, s}, \beta_{2, s})$ is negligible due to large total sample sizes. 
	Second, we assume that both the left and right tails of $F_{s, AM}$ and $F_{s, BM}$ are unbounded. This assumption is supported by the data, which exhibit heavy-tailed behavior, as discussed below.
	
	Under these assumptions, the rate conditions of Theorem \ref{theorem:feasibleEVT} hold, in line with Proposition \ref{proposition:ratesExtremeHalfUniformity} and our simulation results. A sufficient rate condition is $N/T^4 \approx 0$, which is broadly satisfied for all the sectors and areas. Simulations for $T=20$ (see Online Appendix) indicate that inference is reliable for all considered subsamples.

\paragraph{EV index estimation and evaluation of rate conditions}
 
	We begin by estimating the extreme value (EV) indices $\gamma$ for the left and right tails of the AMD and BMD distributions. Across sectors and density areas, estimates of $\gamma$ range from 0.2 to 0.28, with one exception --- the right tail of the administrative services sector in AMD, where $\gamma = 0.16$. These estimates are obtained using the PWM estimator (remark \ref{remark:gammaEstimation}), with its tuning parameter $k$ selected via the semiparametric bootstrap algorithm 4.3 in \cite{Caeiro2016ThresholdSelectionExtreme}. Results are robust to different choices of $k$.
  
	Three   observations emerge:
	\begin{enumerate}[noitemsep,topsep=0pt,parsep=0pt,partopsep=0pt, label={(\arabic*)}, leftmargin=*]
		\item 
		Heavy tails: the estimates indicate that productivity distributions are heavy-tailed, with 3-4 finite moments (except for the slightly lighter-tailed exception noted above).
		
		\item 	Support for infinite tails assumption: the estimated EV indices support the assumption that $F_{s, AM}$ and $F_{s, BM}$ have infinite support for all sectors $s$. Since measurement error is assumed to have at least eight moments,  the EV index of the estimation noise in $\vartheta_{i, T}$ must be at most 0.125. If estimation noise dominated the tails of $F_{s, AM}$ and $F_{s, BM}$, we would observe lower $\gamma$ estimates. This is not the case, supporting our assumption.
		
		\item 	No sharp survival threshold: the data does not support the existence a strict lower bound on firm productivity necessary for long-term survival (over at least 20 years), and goes against the firm selection hypothesis. This result strengthens the previous finding of \cite{Combes2012b} that  truncation must be the same between AMD and BMD, albeit at an unknown level. 
	\end{enumerate}

\begin{figure}[ht!]
 	\includegraphics[width=\linewidth]{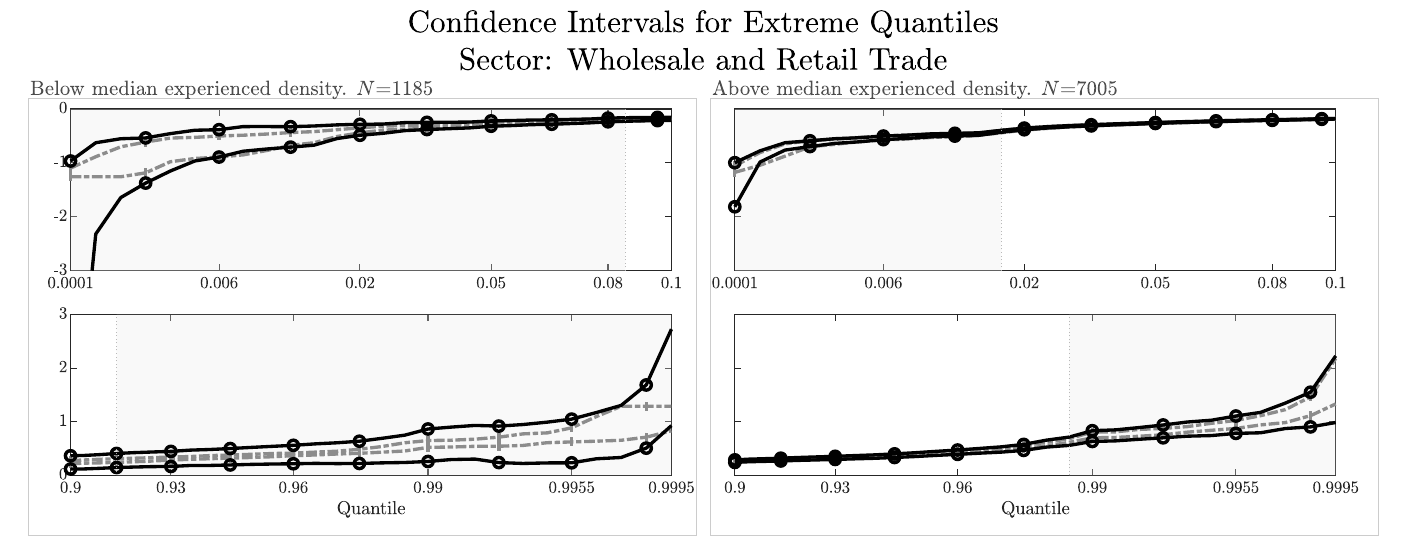}
 
 	\includegraphics[width=\linewidth]{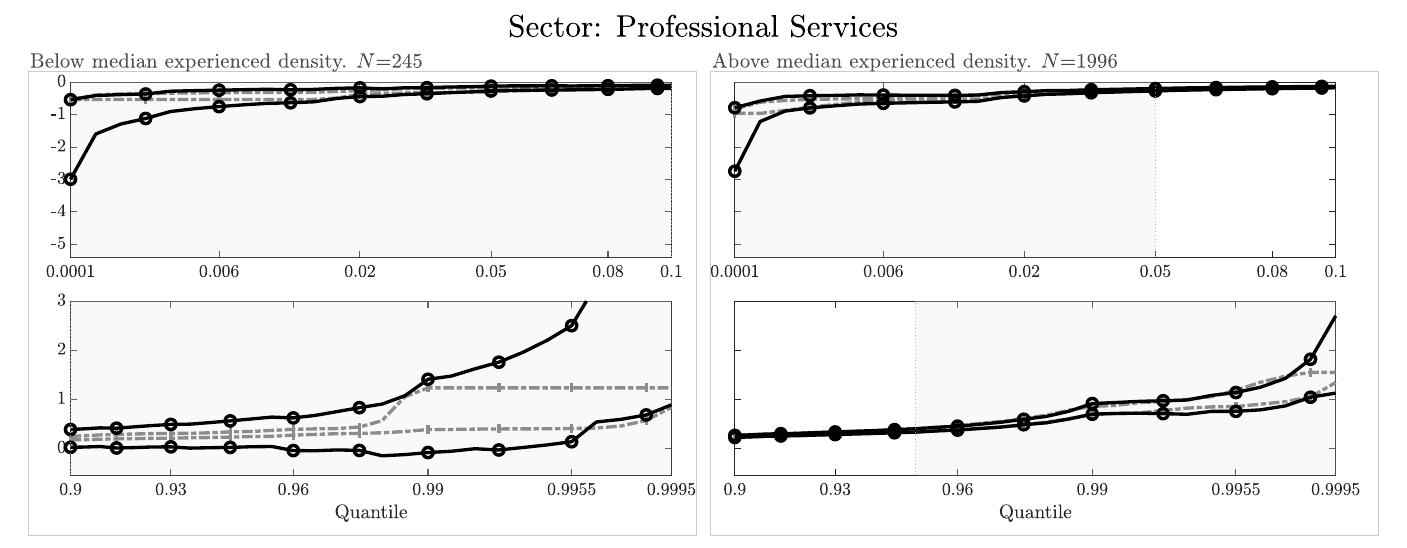}
 	
 	\includegraphics[width=\linewidth]{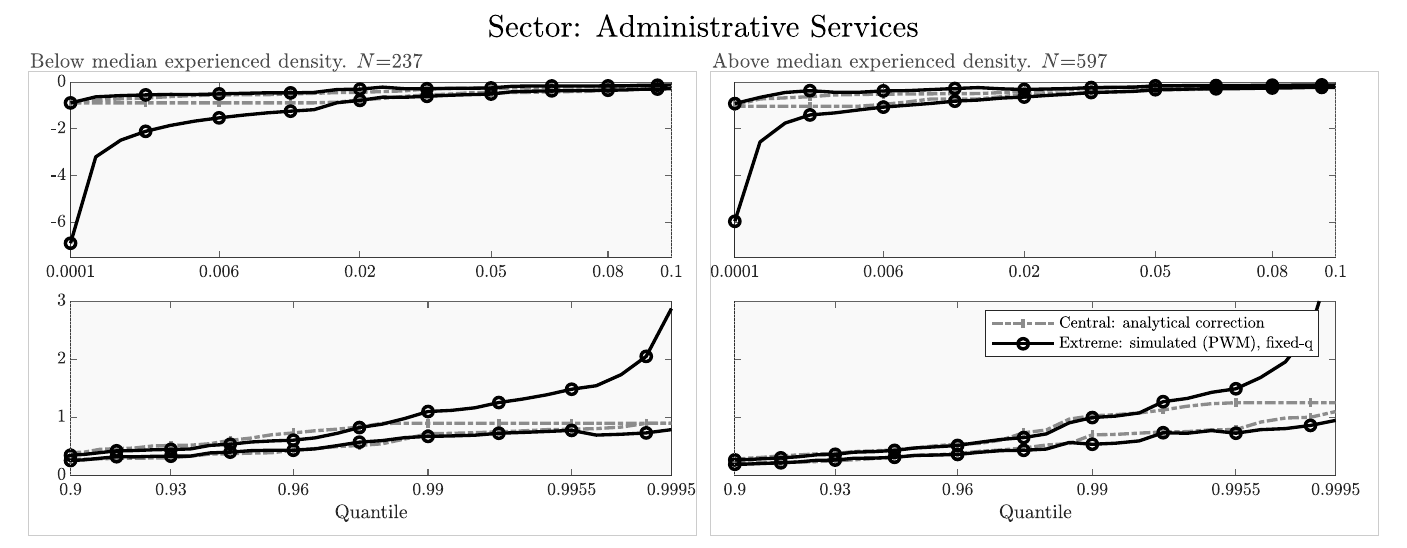}
 	
 	\caption{95\% confidence intervals for extreme quantiles of total factor productivity. Split by below and above median experience density (BME and AME, respectively); split by sectors.  Shaded area: zone where the rule of thumb of section \ref{section:inference} suggests extreme-order approximations. For areas and sectors with $N\leq 1000$, all the depicted quantiles fall into this zone. {\scriptsize [Data source: BELab, \cite{BancoDeEspana2024MicrodataIndividualEnterprises}, CBI data 1995-2023, own computations.] } }  \label{figure:empirical-split}
 \end{figure}
 
       \begin{figure}[!ht]
 	\centering
 	\includegraphics[width=\linewidth]{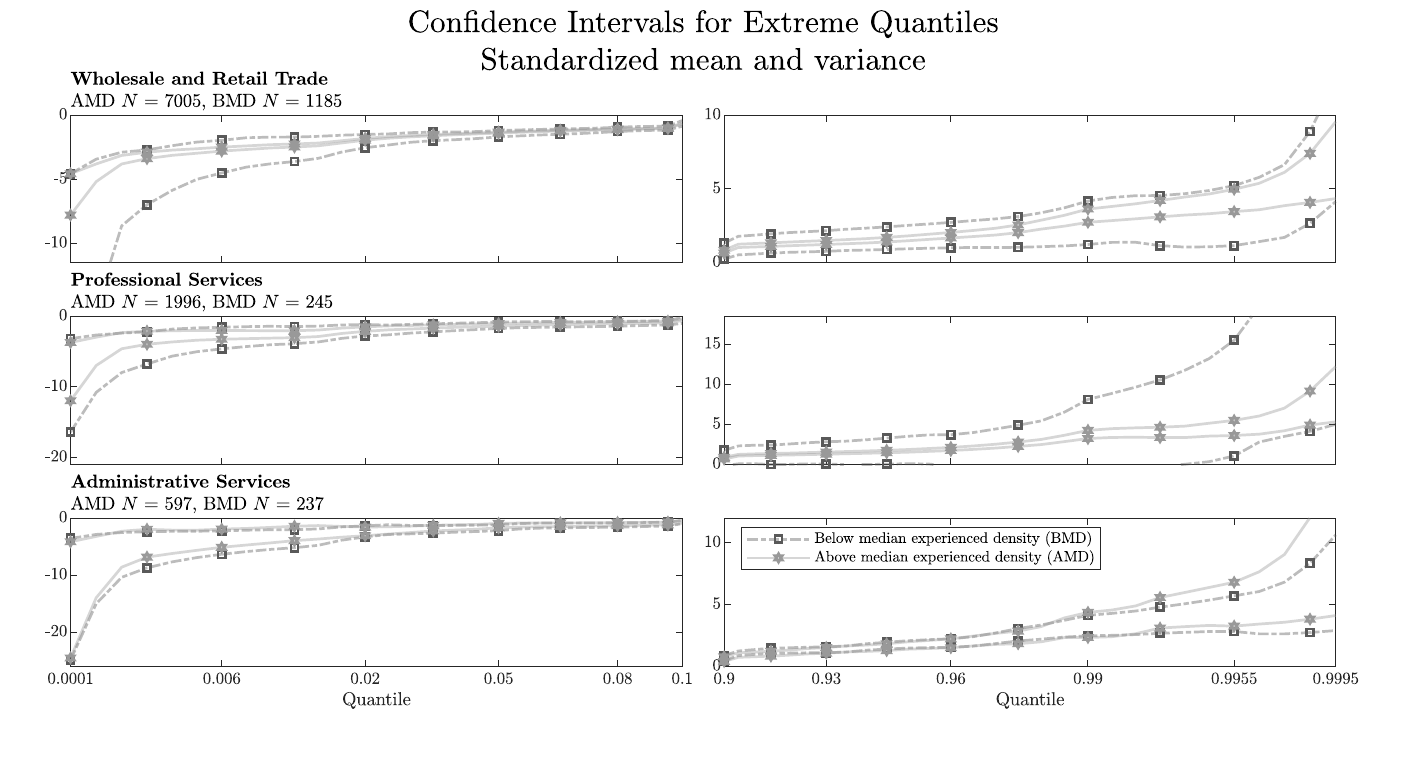}
 	\includegraphics[width=\linewidth]{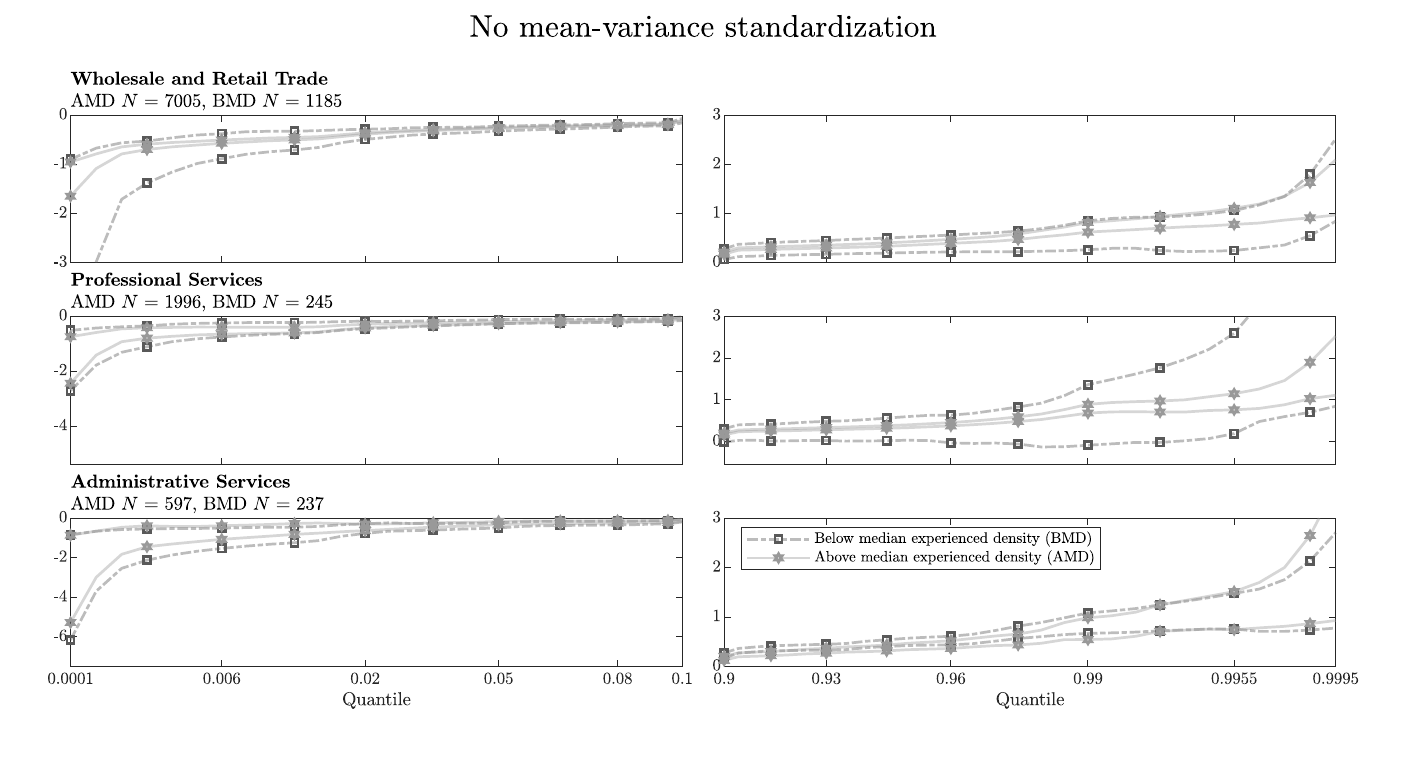}
 	
 	\caption{95\% confidence intervals for extreme quantiles of total factor productivity. Split by sector. Top panel: AMD and BMD data standardized to have the same mean and variance. Bottom panel: no standardization. CIs based on theorem \ref{theorem:feasibleEVT} with critical values estimated as in remark \ref{remark:critical-values-simulation}. {\scriptsize [Data source: BELab, \cite{BancoDeEspana2024MicrodataIndividualEnterprises}, CBI data 1995-2023, own computations.] }} \label{figure:empirical-joint}
 \end{figure}

\paragraph{Empirical results}
 
 	We now examine the tails of the productivity (TFP) distributions. We compute 95\% confidence intervals (CIs) for extreme quantiles—the 0.0001–0.1th and 0.9–0.9995th—of $F_{s, AM}$ and $F_{s, BM}$ for each sector $s$. Figures \ref{figure:empirical-split} and \ref{figure:empirical-joint} summarize the results. Figure \ref{figure:empirical-split} reports two sets of CIs: an extreme order CI (based on theorem \ref{theorem:feasibleEVT}, with critical values based on remark \ref{remark:critical-values-simulation}, implemented as in section \ref{section:simulations}) and a central order CI (based on theorem \ref{theorem:JW}). Results are split by area type and sector. The regions where extreme order approximations are recommended (see section \ref{section:inference}) are shaded in light gray. Figure \ref{figure:empirical-joint} overlays the extreme order CIs for AMD and BMD, allowing for direct comparison of the tails. We report results both with and without standardizing AMD and BMD to have the same mean and variance (in line with the key assumption of CDGPR12). Since the positive EV index estimates for the left tails of $F_{s, AM}$ and $F_{s, BM}$ are inconsistent with truncation, we do not modify the tails.

    Our main empirical finding is that the tails of \(F_{s, AM}\) and \(F_{s, BM}\) are similar across all three sectors, regardless of standardization. As shown in Figure \ref{figure:empirical-joint}, the CIs are effectively nested. This supports the key assumption of CDGPR12 --- their three parameters are sufficient to capture differences in the tails. Moreover, a stronger form of their assumption may hold: no parameters are required to explain the difference between the AMD and BMD tails (up to the available statistical precision of the data).  The data suggest that agglomeration effects are confined to non-extreme quantiles and other sectors. 	 
	
	We also make two statistical observations (figure \ref{figure:empirical-split}). First, the extreme and central order CIs agree fully, even in the regions where section \ref{section:inference} suggests   central order approximations (the non-shaded regions in figure \ref{figure:empirical-split}). Second, the behavior of the CIs aligns with the simulation results in Section \ref{section:simulations}: as the target quantile $\tau$ approaches 0 or 1, the length of the central order CI converges to zero.

\paragraph{Additional results}

Further results, including empirical estimates for all inference methods evaluated in Section \ref{section:simulations}, are provided in the Online Appendix.

\begin{remark}
	\label{remark:production-frontier}
	Our analysis (and that of CDGPR12)  follows the literature on production function estimation \citep{Ackerberg2007EconometricToolsAnalyzing}.  This should be contrasted with the literature on production frontier estimation (e.g. \cite{Kneip2015}). In our setting, each firm may be viewed as  being on its production frontier. Each frontier is characterized by eq. \eqref{equation:cobb-douglas}, and interest centers  on these firm-specific parameters, rather than an economy- or sector-wide production frontier.
\end{remark}

\begin{small}
	\setlength{\bibsep}{0pt plus 0.3ex}
\bibliographystyle{bbrr.bst}
\bibliography{../../../../eco.bib}

\end{small}

\part*{Proofs of Results in the Main Text}

\appendix

\numberwithin{equation}{section}

\allowdisplaybreaks

\section{Distributional Results}

\subsection{Proof of Theorem \ref{theorem:evtNoisy}  }

We   turn to the proof of theorem \ref{theorem:evtNoisy}.
We begin by stating some auxiliary results.
Let $H_T$ be the CDF of $\vartheta_{i, T} = \theta_i + T^{-p}\varepsilon_{i, T}$. Observe that $\curl{\vartheta_{i, T}}$ form a triangular array with rows indexed by $T$, and number of entries in each row given by $N$. In each row, the entries are IID and distributed according to the CDF $H_T$.
Define the  {auxiliary functions} $U_T$ and $U_F$ as
\begin{equation}\label{appendix:equation:definitionOfUT}
U_T= \left(\dfrac{1}{1-H_T} \right)^{-1}, \quad U_F = \left( \dfrac{1}{1-F}\right)^{-1}.
\end{equation} Using $U_T$ and $U_F$  greatly simplifies notation in subsequent proofs. 
Observe the following useful connection between $U_T$ and $H_T$.  Let $\tau\in (0, \infty)$. Let $N$ be large enough so that  $ N\tau>1$. Then
\begin{align} 
U_T(N \tau) 
&  = H_T^{-1}\left(1- \dfrac{1}{N\tau}\right) 	, \quad U_F(N\tau) = F^{-1}\left(1- \dfrac{1}{N\tau} \right)\label{equation:UHlink}.
\end{align}

We begin with a technical lemma that connects convergence of the normalized sample maximum $\vartheta_{N, N, T}= \max\curl*{\vartheta_{1, T}, \dots, \vartheta_{N, T} }$ and convergence of the quantiles of  $H_T$.

\begin{lem}\label{appendix:lemma:convergenceMaximumQuantiles}
	Let assumption \ref{assumption:independence} hold. The following are equivalent:
	\begin{enumerate}[noitemsep,topsep=0pt,parsep=0pt,partopsep=0pt, label={(\arabic*)}, leftmargin=*]
		\item As $N, T\to\infty$, for some constants $a_N, b_N$  the random variable ${(\vartheta_{N, N, T}- b_N)}/{a_N}$ converges weakly to a random variable $X$ with non-degenerate CDF $Q(x)$.
		\item As $N, T\to\infty$, for all $\tau\in (0, \infty)$ such that $\tau$ is a continuity of point  $Q^{-1}(\exp({-1/\tau}))$, it holds that  
$		( U_T(N \tau)-b_N)/{a_N} \Rightarrow Q^{-1}(\exp({-1/\tau}))$.
	\end{enumerate}
The same  is also true for $\theta_{N, N}$ with $U_F$ in place of $U_T$.
\end{lem}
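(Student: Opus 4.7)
The plan is to adapt the classical Khintchine-type equivalence in extreme value theory, which connects convergence of a normalized sample maximum to convergence of an associated quantile sequence, extending the argument to our triangular array setting. The starting point is the iid assumption \ref{assumption:independence}, which for fixed $T$ gives
\begin{equation*}
P\left( \frac{\vartheta_{N, N, T}-b_N}{a_N}\leq x \right) = H_T(a_N x + b_N)^N.
\end{equation*}
The bridge between (1) and (2) is the observation that, whenever $H_T(a_N x+b_N)\to 1$, one has $N\log H_T(a_N x + b_N) = -N(1-H_T(a_N x + b_N))+o(1)$, so that convergence $H_T^N(a_N x + b_N)\to Q(x)\in(0,1)$ is equivalent to $N(1-H_T(a_N x + b_N))\to -\log Q(x)$. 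Combining this with the identity $U_T(N\tau) = H_T^{-1}(1-1/(N\tau))$ from \eqref{equation:UHlink} is what converts cdf statements into quantile statements and back.

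For (1) $\Rightarrow$ (2), I would fix $\tau\in(0,\infty)$ at which $\tau\mapsto Q^{-1}(\exp(-1/\tau))$ is continuous, and set $x_\tau = Q^{-1}(\exp(-1/\tau))$, so that $Q(x_\tau)=\exp(-1/\tau)\in(0,1)$. By the observation above, $N(1-H_T(a_N x_\tau + b_N))\to 1/\tau$. For any sufficiently small $\epsilon>0$ chosen so that $x_\tau\pm\epsilon$ remain continuity points of $Q$ (possible since $Q$ has at most countably many discontinuities), the analogous limits $N(1-H_T(a_N(x_\tau\pm\epsilon) + b_N))\to -\log Q(x_\tau\pm\epsilon)$ combined with monotonicity of $H_T$ sandwich $1/(N\tau)$ between $1-H_T(a_N(x_\tau+\epsilon)+b_N)$ and $1-H_T(a_N(x_\tau-\epsilon)+b_N)$ eventually, hence $a_N(x_\tau-\epsilon)+b_N\leq U_T(N\tau)\leq a_N(x_\tau+\epsilon)+b_N$. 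Letting $\epsilon\downarrow 0$ along admissible continuity points yields $(U_T(N\tau)-b_N)/a_N\to x_\tau$.

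For (2) $\Rightarrow$ (1), I would reverse the sandwich. Fix a continuity point $x$ of $Q$ with $Q(x)\in(0,1)$, set $\tau = -1/\log Q(x)$, and choose $\epsilon>0$ so that $x\pm\epsilon$ are also continuity points of the associated map. The hypothesis applied at $\tau_\pm = -1/\log Q(x\pm\epsilon)$ together with monotonicity of $H_T$ squeezes $H_T(a_N x + b_N)$ between $1-1/(N\tau_-)+o(1/N)$ and $1-1/(N\tau_+)+o(1/N)$; taking $N$th powers, applying the log-Taylor approximation in reverse, and letting $\epsilon\downarrow 0$ along admissible continuity points gives $H_T^N(a_N x+b_N)\to Q(x)$. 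The boundary cases $Q(x)\in\{0,1\}$ follow by monotone limits. The same argument applied with $F$ in place of $H_T$ (and no dependence on $T$) yields the analogous statement for $\theta_{N,N}$.

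The main subtlety is that $H_T$ varies with $T$, so we are working with a triangular array indexed by $T$ with $N$ iid rows. Nevertheless, all steps rely only on monotonicity of $H_T$ and on pointwise behavior of $N(1-H_T(\cdot))$, both of which transfer without change from the classical fixed-$H$ argument. The only housekeeping is to verify that one can choose the approximating continuity points freely, which reduces to the countability of discontinuities of $Q$ (equivalently, of $Q^{-1}$ on $(0,1)$).
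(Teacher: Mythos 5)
Your proposal is correct and follows essentially the same route as the paper: both adapt the De Haan--Ferreira Theorem 1.1.2 argument via the identity $P\bigl((\vartheta_{N,N,T}-b_N)/a_N\le x\bigr)=H_T^N(a_Nx+b_N)$, the log-Taylor expansion $-\log H_T\sim 1-H_T$, and inversion to pass between cdf-form and quantile-form convergence. Where the paper compresses the inversion into ``taking reciprocals'' and ``taking inverses,'' you spell out the monotonicity sandwich $a_N(x_\tau-\epsilon)+b_N\le U_T(N\tau)\le a_N(x_\tau+\epsilon)+b_N$, which is a more explicit justification of the same step, and you correctly observe that nothing beyond monotonicity of $H_T$ and pointwise control of $N(1-H_T(\cdot))$ is used, so the triangular-array indexing by $T$ requires no new ideas.
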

The proof is completely analogous to the proof of theorem 1.1.2 in \cite{DeHaan2006}, which establishes  a similar result  for the IID case. We provide the full argument for completeness.
\begin{proof}
	
First consider the implication $(1)\Rightarrow (2)$. 
 Under assumption \ref{assumption:independence} the individual $\vartheta_{i, T}$ are IID RVs with CDF $H_T$, hence
  $P\left(({\vartheta_{N, N, T} - b_N })/{a_N}\leq x \right)= H_T^{N}(a_Nx+ b_N)$. By assumption, $H_T^{N}(a_Nx+ b_N)$  converges to some nondegenerate  CDF $Q(x)$ as $N, T\to\infty$ at all points of continuity of $Q(x)$. Let $x$ be a continuity point of $Q(x)$ such that $Q(x)\in (0, 1)$. Then at such an $x$  take logs
\begin{equation}%
N\log H_T(a_Nx+ b_N)\Rightarrow \log Q(x).
\end{equation}
For any such $x$ it must be that $H_T(a_Nx+b_N)\to 1$, otherwise the left hand side will diverge to $-\infty$. Since  $\log(1+x)\sim x$ for $x\sim 0$, this implies that   
\begin{equation} 
\lim_{N, T\to\infty} \frac{-\log H_T(a_Nx+b_N) }{1-H_T(a_Nx+b_N)}=1.
\end{equation} Replacing the logarithm  above, we obtain $
N\left(1-H_T(a_Nx+b_N) \right) \Rightarrow -\log Q(x)$.
 Taking reciprocals of both sides yields
\begin{equation}\label{appendix:equation:extremeConvergenceReciprocal}
\dfrac{1}{N\left(1-H_T(a_Nx+b_N) \right)}\Rightarrow \dfrac{1}{-\log Q(x)}.
\end{equation}
Taking inverses in eq. \eqref{appendix:equation:extremeConvergenceReciprocal} and using the definition \eqref{appendix:equation:definitionOfUT} of $U_T$, we obtain that weak convergence of the sample maximum is equivalent to the following convergence of quantiles of $H_T$:
\begin{equation}
\dfrac{ U_T(N y)-b_N }{a_N} \Rightarrow Q^{-1}(e^{-1/y}) \text{ as } N, T\to\infty~.
\end{equation}
Following the above argument in the reverse direction establishes $(2)\Rightarrow (1)$.
\end{proof}

The following lemma provides a condition on $U_T$ under which $\vartheta_{N, N, T}$  and  $\theta_{N, N}$ have the same limit properties.
\begin{lem}
 \label{lemma:weakConvergenceSameSequenceCloseQuantiles} 
If
\begin{enumerate}[noitemsep,topsep=0pt,parsep=0pt,partopsep=0pt, label={(\arabic*)}, leftmargin=*]
\item  Assumption \ref{assumption:independence} holds
\item  $a_N, b_N$ are such that as $N\to\infty$ the normalized noiseless maximum $
({\theta_{N, N}-b_N})/a_N $ converges weakly to a non-degenerate random variable $X$. 
\item   For each $\tau\in (0, \infty)$  it holds that 
${(U_T(N\tau)- U_F(N\tau))}/{a_N} \to 0 \text{ as }N, T\to\infty$
\end{enumerate}
then as $N, T\to\infty$ $ {(\vartheta_{N, N, T}-b_N)}/{a_N}\Rightarrow X$. 

\end{lem}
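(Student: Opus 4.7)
The plan is to use Lemma \ref{appendix:lemma:convergenceMaximumQuantiles} as a bridge: it converts the statement about convergence of sample maxima into an equivalent statement about convergence of quantiles (equivalently, of $U_T$ and $U_F$), at which point the hypothesis that the quantiles are asymptotically close makes the conclusion immediate. So the strategy is to apply the equivalence in the ``forward'' direction to the noiseless maximum, propagate convergence across from $U_F$ to $U_T$ using the third hypothesis, and then apply the equivalence in the ``reverse'' direction to recover weak convergence of $\vartheta_{N,N,T}$.

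First, let $Q$ denote the (non-degenerate) cdf of the limit $X$. By hypothesis (2), $(\theta_{N,N}-b_N)/a_N\Rightarrow X$, so Lemma \ref{appendix:lemma:convergenceMaximumQuantiles} (applied to $\theta_{N,N}$ with $U_F$ in place of $U_T$) yields, for every $\tau\in(0,\infty)$ at which $Q^{-1}(\exp(-1/\tau))$ is continuous,
\begin{equation}
\frac{U_F(N\tau)-b_N}{a_N}\longrightarrow Q^{-1}(e^{-1/\tau}), \quad N\to\infty.
\end{equation}
Next, add and subtract $U_F(N\tau)$ inside the normalized version of $U_T$:
\begin{equation}
\frac{U_T(N\tau)-b_N}{a_N}=\frac{U_T(N\tau)-U_F(N\tau)}{a_N}+\frac{U_F(N\tau)-b_N}{a_N}.
\end{equation}
By hypothesis (3) the first summand vanishes as $N,T\to\infty$, while the second summand converges to $Q^{-1}(e^{-1/\tau})$ at every continuity point. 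Hence
\begin{equation}
\frac{U_T(N\tau)-b_N}{a_N}\longrightarrow Q^{-1}(e^{-1/\tau}) \quad \text{as } N,T\to\infty,
\end{equation}
pointwise in $\tau$ at continuity points of the right-hand side.

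Finally, applying Lemma \ref{appendix:lemma:convergenceMaximumQuantiles} in the direction $(2)\Rightarrow(1)$ — which is valid because assumption \ref{assumption:independence} guarantees the iid structure of $\vartheta_{i,T}$ needed for the equivalence to run — converts the display above into the conclusion
\begin{equation}
\frac{\vartheta_{N,N,T}-b_N}{a_N}\Rightarrow X, \quad N,T\to\infty.
\end{equation}

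The only subtlety I anticipate is bookkeeping for the joint limit in $(N,T)$: Lemma \ref{appendix:lemma:convergenceMaximumQuantiles} is stated for $N,T\to\infty$ jointly, but since each step (reciprocal, logarithm, inversion of $H_T^N$) is a continuous/monotone manipulation that does not require separate control of the two indices, the argument goes through verbatim provided we interpret all limits as joint in $(N,T)$. No additional tightness or uniformity argument is needed, because the third hypothesis is already a joint-limit statement and the convergence of the noiseless piece depends only on $N$.
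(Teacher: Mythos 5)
Your proof is correct and follows essentially the same route as the paper's: invoke Lemma \ref{appendix:lemma:convergenceMaximumQuantiles} in the forward direction for the noiseless maximum to obtain convergence of $(U_F(N\tau)-b_N)/a_N$, use hypothesis (3) to transfer this to $(U_T(N\tau)-b_N)/a_N$ by adding and subtracting $U_F(N\tau)$, and then apply the lemma in the reverse direction. The only cosmetic difference is that you explicitly flag the joint-limit bookkeeping, which the paper leaves implicit.
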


\begin{proof}
	Let  $Q$ be the CDF of $X$. Let $\tau\in (0, \infty)$ be a continuity point of $Q^{-1}(\exp(-1/\tau))$. Then by lemma  \ref{appendix:lemma:convergenceMaximumQuantiles}  $a_N^{-1}({ U_F(N \tau)-b_N })$ converges to $Q^{-1}(\exp(-1/\tau))$.
By assumption (3)  of the lemma for all $\tau\in(0, \infty)$   continuity points of $Q^{-1}(\exp(-1/\tau))$ it holds that
		\begin{equation}
	\dfrac{ U_T(N \tau)-{b}_N }{a_N} - \dfrac{ U_F(N \tau)-b_N }{a_N} = \dfrac{U_T(N\tau)- U_F(N\tau)}{a_N} \to 0 \text{ as }N, T\to\infty.
	\end{equation}
	From  this we conclude that $a_N^{-1}({ U_T(N \tau)-b_N })\to Q^{-1}(\exp(-1/\tau))$ for all $\tau\in(0, \infty)$ continuity points of  $Q^{-1}(\exp(-1/\tau))$.
The result follows from lemma \ref{appendix:lemma:convergenceMaximumQuantiles}.
\end{proof}

We will make use of the following quantile inequalities due to \cite{Makarov1981}.
\begin{lem}[Makarov quantile inequalities; eqs. (1) and (2) in \cite{Makarov1981}]
	\label{lemma:makarovInequalities} Suppose that $X\sim F_X$ and $Y\sim F_Y$ are a pair of random variables whose joint distribution is not restricted, and consider their sum $X+Y\sim F_{X+Y}$.   The following inequalities hold: for all $v\in[0, 1]$ 
		\begin{align}
		F^{-1}_{X+Y}(v)& \leq \inf_{w\in [v, 1]} \left(F_X^{-1}(w)  + F^{-1}_Y(1+v-w)\right), \\
		F^{-1}_{X+Y}(v) & \geq \sup_{w\in [0, v]}\left( F_X^{-1}(w) + F_Y^{-1}(v-w) \right).
		\end{align}
The bounds are pointwise sharp in the following sense: for each   $v$ there exists a joint distribution of $X$ and $Y$ such that the $v$th quantile of $X+Y$ attains the lower/upper bound at $v$.
\end{lem}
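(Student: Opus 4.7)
The plan is to derive the two inequalities from the Fréchet--Hoeffding bounds on joint distributions with prescribed marginals, and then to exhibit extremal couplings to establish pointwise sharpness. Since the conclusion is a statement about quantiles of $X+Y$ under the worst-case joint law, the cleanest route is to work directly with the quantile representation, bypassing the full Fréchet--Hoeffding machinery except where sharpness forces us to invoke it.

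For the upper bound, fix $v \in [0,1]$ and $w \in [v,1]$, and set $x_0 = F_X^{-1}(w)$, $y_0 = F_Y^{-1}(1+v-w)$. The set inclusion $\{X \le x_0\} \cap \{Y \le y_0\} \subseteq \{X+Y \le x_0+y_0\}$ combined with the Bonferroni inequality yields
\begin{equation*}
P(X+Y \le x_0+y_0) \;\ge\; P(X\le x_0)+P(Y\le y_0)-1 \;\ge\; w+(1+v-w)-1 \;=\; v,
\end{equation*}
using the standard fact $F_X(F_X^{-1}(w))\ge w$. Hence $F_{X+Y}^{-1}(v) \le x_0+y_0$, and taking the infimum over $w\in[v,1]$ gives the first inequality.

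For the lower bound, fix $w\in[0,v]$ and set $x_0 = F_X^{-1}(w)$, $y_0=F_Y^{-1}(v-w)$. Using the dual inclusion $\{X+Y < x_0+y_0\}\subseteq \{X<x_0\}\cup\{Y<y_0\}$ together with $P(X<F_X^{-1}(w))\le w$ gives
\begin{equation*}
P(X+Y < x_0+y_0) \;\le\; P(X<x_0)+P(Y<y_0) \;\le\; w+(v-w) \;=\; v.
\end{equation*}
Therefore $F_{X+Y}(t) < v$ for all $t < x_0+y_0$ after taking left limits, so $F_{X+Y}^{-1}(v)\ge x_0+y_0$; supremum over $w\in[0,v]$ delivers the second inequality. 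The main care-point here is bookkeeping between strict versus weak inequalities and left- versus right-continuity of $F_{X+Y}$, which is the step most likely to require a short technical argument.

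For pointwise sharpness, at a given $v$ I would construct the extremal coupling that attains each bound. For the upper bound at $v$, couple $X$ and $Y$ through a shuffle-of-$\min$-type copula that concentrates the mass of $(X,Y)$ on the support where the Fréchet--Hoeffding upper bound $\min(F_X(x)+F_Y(y),1)$ is achieved at $x+y = F_{X+Y}^{-1}(v)$; the symmetric construction based on the lower Fréchet--Hoeffding bound handles the other inequality. The chief obstacle is exhibiting the copula concretely in a way that realises the infimum/supremum at the chosen $v$, which requires piecing together a countermonotone arrangement on the relevant quantile ranges $[w, 1+v-w]$ (respectively $[0,w]\cup[w,v]$) together with an arbitrary coupling elsewhere. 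Once these couplings are written down, direct computation shows the corresponding $F_{X+Y}^{-1}(v)$ equals the bound, completing the sharpness claim.
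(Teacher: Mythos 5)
The paper does not prove this lemma; it is cited verbatim from Makarov (1981), so there is no in-paper argument to compare against, and any complete proof you supply is new material relative to the paper. Your upper-bound derivation is correct and complete: the inclusion $\{X\le x_0\}\cap\{Y\le y_0\}\subseteq\{X+Y\le x_0+y_0\}$, Bonferroni, and $F_X(F_X^{-1}(w))\ge w$ give $F_{X+Y}(x_0+y_0)\ge v$, whence $F_{X+Y}^{-1}(v)\le x_0+y_0$ directly from the definition of the generalized inverse.

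Your lower-bound argument has a genuine gap at its final step. You establish $P(X+Y<x_0+y_0)\le v$ and then assert $F_{X+Y}(t)<v$ for all $t<x_0+y_0$. That implication is not valid: $P(X+Y<x_0+y_0)\le v$ only yields $F_{X+Y}(t)\le v$ for $t<x_0+y_0$, which is consistent with $F_{X+Y}$ plateauing at height exactly $v$ on an interval just to the left of $x_0+y_0$, in which case $F^{-1}_{X+Y}(v)<x_0+y_0$ even though your premise holds. You flag this as a ``care-point'' but do not resolve it; resolving it requires using the specific form $s=F_X^{-1}(w)+F_Y^{-1}(v-w)$ rather than just the bound on $P(X+Y<s)$. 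Concretely, for any $t<s$ choose $a$ with $t-F_Y^{-1}(v-w)<a<F_X^{-1}(w)$ (this interval is nonempty precisely because $t<s$); then $F_X(a)<w$ and $F_Y(t-a)<v-w$ hold \emph{strictly} by the definition of the generalized inverse, and $\{X+Y\le t\}\subseteq\{X\le a\}\cup\{Y\le t-a\}$ gives $F_{X+Y}(t)\le F_X(a)+F_Y(t-a)<v$, which is the strict inequality you need. Finally, the sharpness half remains a sketch: you name the right object (a countermonotone arrangement on the relevant quantile ranges, spliced onto an arbitrary coupling elsewhere) but neither write it down nor verify that the resulting $F_{X+Y}^{-1}(v)$ attains the bound, so that part of the lemma is unproved as presented.
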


	\begin{proof}[Proof of theorem \ref{theorem:evtNoisy}]

	
	Let $H_T$ be the CDF of $\vartheta_{i, T}$.	 Fix $\tau\in(0, \infty)$. 
	Let $N$ be large enough so that $N\tau>1$. In the statement of  Makarov's inequalities (lemma \ref{lemma:makarovInequalities}), take $X=\theta_i$ and $Y= T^{-p}\varepsilon_{i, T}$,
	and $v=1-1/N\tau$ and subtract $F^{-1}(1-1/N\tau)$ on all sides to obtain
	\begin{align}
		& 	\sup_{w\in \left[0, 1-\frac{1}{N\tau}\right]}  \left( F^{-1}(w) + \dfrac{1}{T^p} G_T^{-1}\left(1-\frac{1}{N\tau}-w\right)  -F^{-1}\left(1- \dfrac{1}{N\tau} \right)   \right) 
		\\  \leq &  H^{-1}_T\left(1-\frac{1}{N\tau} \right) - F^{-1}\left(1- \frac{1}{N\tau} \right)  \label{appendix:equation:noisyEVTineqality} \\
		\leq &  	\inf_{w\in \left[1-\frac{1}{N\tau}, 1\right]} \left( F^{-1}(w)  +\dfrac{1}{T^p} G_T^{-1}\left(1+1-\frac{1}{N\tau}-w\right) - F^{-1}\left(1- \dfrac{1}{N\tau} \right) \right)~.
	\end{align}
	First,	by  definitions of $U_T$ and $U_F$ (eq. \eqref{appendix:equation:definitionOfUT}) and   eq. \eqref{equation:UHlink}, the middle of eq \eqref{appendix:equation:noisyEVTineqality} is equal to  $U_T(N \tau) - U_F(N\tau)$.
	%
	Second,  in the supremum condition define $u=1-1/(N\tau)-w$ to write the condition as
	\begin{equation}
		\sup_{u\in \left[0, 1-\frac{1}{N\tau}\right]}   \left(   F^{-1}\left(1- \dfrac{1}{N\tau}-u \right)  -F^{-1}\left(1- \dfrac{1}{N\tau} \right) + \dfrac{1}{T^p} G_T^{-1}\left(u\right)  \right)  .
	\end{equation}
		Note that eventually $[0, \epsilon] \subset [0, 1-1/(N\tau)]$, and so the expression in eq. \eqref{appendix:equation:noisyEVTineqality} may further be lower-bounded as	 $\sup_{u\in[0, \epsilon]}\curl{\dots}\leq \sup_{u\in [0, 1-1/(N\tau)]}\curl{\dots}$. Third,  define  $u=-[1-1/(N\tau)-w]	$ in the infimum condition in eq. \eqref{appendix:equation:noisyEVTineqality} to represent it as \eqref{equation:tailEquivalenceInf}.  
	 Suppose that $a_N>0$  (if not, simply reverse all inequalities below).  Combining the above arguments and multiply all sides   by $a_N^{-1}$, we obtain
		\begin{align}
		&  	\sup_{u\in \left[0, \epsilon\right]}  \dfrac{1}{a_N}\left( F^{-1}\left(1- \dfrac{1}{N\tau}-u \right)  -F^{-1}\left(1- \dfrac{1}{N\tau} \right) + \dfrac{1}{T^p} G_T^{-1}\left(u\right)  \right)
		\\  \leq &  \dfrac{U_T(N\tau)- U_F(N\tau)}{a_N}   \label{appendix:equation:noisyEVTinequalityNew} \\
		\leq &   	\inf_{u\in \left[0, \frac{1}{N\tau} \right]}   \dfrac{1}{a_N}\left(F^{-1}\left(1- \dfrac{1}{N\tau} + u\right) -  F^{-1}\left(1- \dfrac{1}{N\tau} \right) +\dfrac{1}{T^p} G_T^{-1}\left(1-u\right) \right) ~.
	\end{align}
	Then conditions \eqref{equation:tailEquivalenceInf}  and \eqref{equation:tailEquivalenceSup} imply that  $
	(U_T(N \tau) - U_F(N\tau))/{a_N} \to 0$
	for all $\tau \in (0, \infty)$. 
	By  lemma \ref{lemma:weakConvergenceSameSequenceCloseQuantiles} it follows that $(\vartheta_{N, N, T}-b_N)/{a_N}\Rightarrow X$. 
	
	
	Now we turn to the second assertion.
	First, the above shows that   \eqref{equation:tailEquivalenceSup} \eqref{equation:tailEquivalenceInf} together imply  \eqref{equation:tailEquivalenceSupGeneral}.  Now suppose that  at least one of \eqref{equation:tailEquivalenceInf} and \eqref{equation:tailEquivalenceSupGeneral} fails. We show that   the limit properties of $a_N^{-1}({\vartheta_{N, N, T}-b_N})$  differ from the limit properties of $a_N^{-1}({\theta_{N, N}-b_N})$
	for some sequence of joint distributions of $(\theta_i, \varepsilon_{i, T})$.
	Suppose it is the infimum condition  \eqref{equation:tailEquivalenceInf} that fails to hold for some $\tau$; the argument for \eqref{equation:tailEquivalenceSupGeneral} is identical.  Then along some subsequence  of $(N, T)$ it holds that	$\inf_{u\in \left[1-{1}/{N\tau}, 1\right]} a_N^{-1}\left(\cdots \right) = \delta_{N, T}$ such that $\delta_{N, T}$ are bounded away from zero. Suppose that it is possible to extract a further subsequence such that along it  $\delta_{N, T}$ converges to some $\delta\neq 0$.  Pass to that subsubsequence.
	Theorem 2 in	\cite{Makarov1981} establishes that  for each $(N, T)$ there exists a joint distribution of $\theta_i$ and $\varepsilon_{i, T}$ such that  the resulting $H_T$ attains the upper bound in inequality \eqref{appendix:equation:noisyEVTineqality}, and so
	\begin{align}
		H^{-1}_T\left( 1- \dfrac{1}{N\tau}\right) &  
		= a_N \delta_{N, T} + F^{-1}\left(1-\frac{1}{N\tau} \right).
	\end{align}
	If $({U_F(N\tau)-b_N})/{a_N}\to Q(\exp(-1/\tau))$, then for such a sequence of $H_T$ it holds that 
	$(U_T(N\tau)- b_N)/{a_N} \to Q^{-1}(\exp(-1/\tau)) + \delta$.
	Suppose that $(\vartheta_{N, N, T}-b_N)/a_N$ converges to a random variable with distribution $\tilde{Q}$ along the same subsequence. The above discussion shows that   $\tilde{Q}^{-1}(\exp(-1/\tau)) = Q^{-1}(\exp(-1/\tau)) + \delta\neq Q^{-1}(\exp(-1/\tau))$. Thus, either the limit distribution of $\vartheta_{N, N, T}$ is different from that of $\theta_{N, N}$, or $\vartheta_{N, N, T}$ does not converge.

	If we cannot extract a subsequence of $\delta_{N, T}$ converging to some finite $\delta$,  then $\delta_{N, T}$ is unbounded. In this case it is possible to extract a further monotonically increasing subsequence. Proceeding as above, we obtain that along that subsequence  it holds that
	${(U_T(N\tau)- b_N)}/{a_N} \to Q^{-1}(\exp(-1/\tau)) + \infty$, 
	and so $\left(\vartheta_{N, N, T}- b_N \right)/a_N$ does not converge.
\end{proof} 

\subsection{Proof of Proposition \ref{proposition:ratesExtremeHalfUniformity}}

Before proving proposition \ref{proposition:ratesExtremeHalfUniformity}, we state two useful results. 
First, let 
$RV_{\gamma}$ be the class of non-negative functions of regular variation with parameter $\gamma$, that is, those measurable 	$f:\R_+\to \R_+$ that satisfy $\lim_{t\to\infty}  {f(tx)}/{f(t)}= x^{\gamma}$ for any $x> 0$. $RV_{0}$ is the class of slowly varying functions.  
\begin{lem}[Karamata Characterization Theorem; Theorem  1.4.1 in \cite{Bingham1987}]
	\label{appendix:lemma:karamataCharacterization}
	Let $f\in RV_{\gamma}$. Then there exists a slowly varying function $L$ (that is, $L\in RV_{0}$) such that for all $x$ it holds that  $
	f(x)= x^{\gamma} L(x)$.
\end{lem}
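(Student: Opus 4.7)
The plan is to recognize that this statement is, in essence, a definitional factorization rather than a deep structural result. Given $f \in RV_\gamma$, I would simply define $L(x) := f(x)/x^\gamma$ for $x > 0$ and then verify that $L \in RV_0$. Because $f$ is positive on $\R_+$ and $x^\gamma > 0$ for $x > 0$, this $L$ is well-defined, positive, and measurable on $\R_+$, and by construction the identity $f(x) = x^\gamma L(x)$ holds on the entire domain.

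The only real verification step is to check the regular variation property of $L$ at index $0$. For any fixed $x > 0$, I would compute the ratio $L(tx)/L(t) = [f(tx)/(tx)^\gamma]/[f(t)/t^\gamma] = [f(tx)/f(t)] \cdot x^{-\gamma}$, and then pass to the limit $t \to \infty$. Invoking the hypothesis $f \in RV_\gamma$, which gives $\lim_{t \to \infty} f(tx)/f(t) = x^\gamma$, the ratio converges to $x^\gamma \cdot x^{-\gamma} = 1 = x^0$. This is precisely the defining property of membership in $RV_0$, completing the argument.

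There is essentially no obstacle here — the whole content is a relabeling that cleanly separates the $x^\gamma$ scaling from a slowly varying remainder. I would remark that this lemma corresponds to the easy half of Karamata's theory; the deeper content is the integral representation theorem, which characterizes every slowly varying $L$ as $L(x) = c(x)\exp(\int_a^x \epsilon(u)/u \, du)$ with $c(x) \to c > 0$ and $\epsilon(u) \to 0$, and that deeper result is not required here. Since the paper only needs the factorization in order to manipulate $F^{-1}$ and related tail quantities in the subsequent rate calculations (notably in proposition \ref{proposition:ratesExtremeHalfUniformity}), the elementary statement established above is sufficient.
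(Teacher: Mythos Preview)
Your proof is correct. The paper does not provide its own proof of this lemma; it is simply stated as a citation to Theorem 1.4.1 in \cite{Bingham1987} and used as a tool in the subsequent arguments. Your elementary verification --- defining $L(x) = f(x)/x^{\gamma}$ and checking directly that $L(tx)/L(t) \to 1$ --- is exactly the standard argument and is entirely adequate for the factorization the paper needs.
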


 Second,  assumption \ref{assumption:EV} is equivalent to the following statement: for some sequences $\alpha_N, \beta_N$ and all $x>0$ 
 \begin{equation}\label{equation:noiselessFirstOrderAuxiliaryLimit}
 \lim_{  N\to \infty} \dfrac{ U_F(N x)-U_F(N)}{\alpha_N} = \dfrac{x^{\gamma}-1}{\gamma}  \text{ (meaning $\log(x)$ for  $\gamma=0$)},
 \end{equation}
 Since the left hand side is monotonic in $x$ and the right hand side is continuous, convergence in \eqref{equation:noiselessFirstOrderAuxiliaryLimit} is locally uniform in   $x$ (that is, for any $0<a, b<\infty$  convergence in \eqref{equation:noiselessFirstOrderAuxiliaryLimit}  is uniform on $[a, b]$). 
 See theorem 1.1.6  and corollary 1.2.4 in \cite{DeHaan2006}.
 \label{appendix:page:noiselessEVTconstants} 
We call the constants the $\alpha_N, \beta_N$ \emph{canonical} normalization constants.

\begin{proof}[Proof of proposition \ref{proposition:ratesExtremeHalfUniformity}]

Let $(a_N, b_N)$ be such that $(\theta_{N, N}- b_N)/a_N\Rightarrow X$ for some  non-degenerate random variable $X$.
	Fix $\tau\in (0, \infty)$ and define $s_{\tau, N, T}$ and $S_{\tau, N, T}$ as
		\begin{align} 
		S_{\tau, N, T}(u)  & = \dfrac{1}{a_N}\left(F^{-1}\left(1-\dfrac{1}{N\tau}+ u \right)  - F^{-1}\left(1- \dfrac{1}{N\tau} \right)  +\dfrac{1}{T^p} G_T^{-1}\left(1-u\right) \right),\\
		s_{\tau, N, T}(u)  & = \dfrac{1}{a_N}\left( F^{-1}\left( 1 - \dfrac{1}{N\tau} - u\right)  -F^{-1}\left(1- \dfrac{1}{N\tau} \right) + \dfrac{1}{T^p} G_T^{-1}\left(u\right)  \right) .
	\end{align}
 	Condition \eqref{equation:tailEquivalenceInf} for $F$ and $G_T$ can be written as $\inf_{u\in \left[0, {1}/{N\tau}\right]} S_{\tau, N, T}(u)\to 0$, while condition \eqref{equation:tailEquivalenceSup} can be written as $\sup_{u\in \left[0, \epsilon\right]} s_{\tau, N, T}(u)\to 0$ for some $\epsilon\in (0, 1)$

By	lemma \ref{lemma:makarovInequalities} and the discussion after eq. \eqref{appendix:equation:noisyEVTineqality}	\begin{align*}
		\sup_{u\in \left[0, 1-\frac{1}{N\tau}\right]}  \left[   F^{-1}\left(1-\dfrac{1}{N\tau}-u \right) + \dfrac{1}{T^p} G_T^{-1}\left(u\right)    \right] 
	\leq  	\inf_{u\in \left[0, \frac{1}{N\tau}\right]}  \left[ F^{-1}\left(1-\dfrac{1}{N\tau}+u \right)  +\dfrac{1}{T^p} G_T^{-1}\left(1-u\right)\right].
	\end{align*}
	Suppose that $a_N>0$ (if not, simply reverse all inequalities below). The above inequality implies that  $
	 	\sup_{u\in \left[0, 1-{1}/{N\tau}\right]} s_{\tau, N, T}(u)\leq 	\inf_{u\in \left[0, {1}/{N\tau}\right]} S_{\tau, N, T}(u) $.  Further, fix an arbitrary $\epsilon\in (0, 1)$. 
	 	Then eventually $[0, \epsilon]\subset [0, 1-1/N\tau]$, and correspondingly $	\sup_{u\in \left[0, \epsilon\right]} s_{\tau, N, T}(u)\leq 	\sup_{u\in \left[0, 1-{1}/{N\tau}\right]} s_{\tau, N, T}(u)$.
	 	 If $u_{s, \tau, N, T}\in [0, \epsilon]$ and $u_{S, \tau, N, T}\in [0, 1/N\tau]$, then the  following chain of inequalities  holds:
	 	\begin{equation}\label{appendix:equation:evtSufficientInequalityChain}
	 	s_{\tau, N, T}(u_{s, \tau, N, T})\leq 	\sup_{u\in \left[0, \epsilon\right]} s_{\tau, N, T}(u)\leq 	\inf_{u\in \left[0, \frac{1}{N\tau}\right]} S_{\tau, N, T}(u) \leq S_{\tau, N, T}(u_{S, \tau, N, T})~.
	 	\end{equation} 
	 	
	Let  $F$  satisfy assumption \ref{assumption:EV} with EV index $\gamma>\gamma'$.  Define 
	\begin{equation}
	u_{S, \tau, N, T}  = \frac{1}{N\tau} \frac{1}{\log (T)+1} \in \left[ 0,  \dfrac{1}{N\tau}  \right].
	\end{equation}
	Suppose that   under the conditions of the proposition it holds that
	\begin{align}
	\dfrac{1}{a_N}\left(F^{-1}\left( 1- \dfrac{1}{N\tau}+ 	u_{S, \tau, N, T} \right)    - F^{-1}\left(1- \dfrac{1}{N\tau} \right)\right) & \to 0 \label{appendix:equation:evtSufficientF}, \\
	\dfrac{1}{a_N}\dfrac{1}{T^p}G_T^{-1}\left(1-  	u_{S, \tau, N, T}  \right) & \to 0.\label{appendix:equation:evtSufficientG}
	\end{align} 
	Under  \eqref{appendix:equation:evtSufficientF} and \eqref{appendix:equation:evtSufficientG} it holds that    $S_{\tau, N, T}(u_{S, \tau, N, T})\to 0$ as $N, T\to\infty$. 
By inequality \eqref{appendix:equation:evtSufficientInequalityChain} we conclude that $\limsup_{N, T\to\infty} \inf_{u\in \left[0, {1}/{N\tau}\right]} S_{\tau, N, T}(u)\leq 0$. 
An identical argument shows that   $s_{\tau, N, T}(u_{s, \tau, N, T})\to 0$ where  $u_{s, \tau,N, T} =  ({1}/{N\tau}) (1/{\log (T)})$ eventually lies in  $[0, \epsilon]$ for any $\epsilon\in(0, 1)$; hence $\liminf_{N, T\to\infty} \sup_{u\in \left[0, \epsilon\right]} s_{\tau, N, T}(u)\geq 0$.  
Further,  $\limsup_{N, T\to\infty}  \sup_{u\in \left[0, \epsilon\right]} s_{\tau, N, T}(u)\leq \liminf_{N, T\to\infty} \inf_{u\in \left[0,  {1}/{N\tau}\right]} S_{\tau, N, T}(u) $ by inequality \eqref{appendix:equation:evtSufficientInequalityChain}. Combining the results,  we obtain that both $\inf_{u\in \left[0, {1}/{N\tau}\right]} S_{\tau, N, T}(u)$ and  $\sup_{u\in \left[0, \epsilon\right]} s_{\tau, N, T}(u)$ tend to 0, proving the result.
	
	\air 
	
	It remains to establish  \eqref{appendix:equation:evtSufficientF} and \eqref{appendix:equation:evtSufficientG}. 
		We split the proof of \eqref{appendix:equation:evtSufficientF} by sign of $\gamma$.

			Suppose $\gamma>0$. We begin by making  two observations.
			First,  by the convergence to types theorem \citep[proposition 0.2]{Resnick1987} $a_N\sim \alpha_N$ for the canonical scaling $\alpha_N$ of eq. \eqref{equation:noiselessFirstOrderAuxiliaryLimit}.   Further, by lemma 1.2.9 of \cite{DeHaan2006} $\alpha_N\sim U_F(N)$. 
			 Second, 						   by corollary 1.2.10 in \cite{DeHaan2006}  $(U_F(x))^{-1}  \in RV_{-\gamma}$. 
						   Then by lemma \ref{appendix:lemma:karamataCharacterization} we can write $1/U_F(x)  =   x^{-\gamma} L(x)$ where $L$ is a slowly varying function (that depends on $F$).
			Combining the above observations, definition of $u_{S, \tau, N, T}$ and  eq.  \eqref{equation:UHlink}, we see that 
			\begin{align}
			& 	\dfrac{1}{a_N}\left(F^{-1}\left(1- \frac{1}{N\tau} \frac{\log(T)}{\log (T)+1} \right)    - F^{-1}\left(1- \dfrac{1}{N\tau} \right)\right)\\
			& \sim  \dfrac{1}{U_F(N)}\left(U_F\left(N\tau\dfrac{\log(T)+1}{\log(T)} \right)- U_F(N\tau) \right)\\
			&=  N^{-\gamma}L(N)\left(  \left(N\tau \dfrac{\log (T)+1}{\log (T)} \right)^{\gamma} \dfrac{1}{L\left( N\tau \frac{\log (T)+1}{\log (T)} \right)}     - (N\tau)^{\gamma} \dfrac{1}{L(N\tau)} \right)\\
			& \propto 	\left( \dfrac{\log(T)+1}{\log (T)} \right)^{\gamma} \dfrac{L(N) }{L\left( N\tau \frac{\log (T)+1}{\log (T)} \right)} - \dfrac{L(N)}{L(N\tau)}   \to  0.
			\end{align}
		Convergence follows since $L$ is slowly varying on infinity: $ {L(N)}/{L(N\tau)}\to 1$.  By local  uniform convergence (proposition 0.5 in \cite{Resnick1987}) $ {L(N) }/{L\left( N\tau  {(\log (T)+1)}/{\log (T)} \right)} \to 1$.

	Suppose $\gamma<0$. As above, $a_N\sim \alpha_N$. By lemma 1.2.9 in \cite{DeHaan2006} in turn $\alpha_N\sim (U_F(\infty)-U_F(N))$ (note that $U_F(\infty)<1$ when $\gamma<0$).  By corollary 1.2.10 in \cite{DeHaan2006}, $ (U_F(\infty)- U(x))^{-1}\in RV_{-\gamma}$   and we can write $1/(U_F(\infty)- U(x))= x^{-\gamma}L(x)$ for some slowly varying function $L$ by lemma \ref{appendix:lemma:karamataCharacterization}.  Hence proceeding as for $\gamma>0$
	\begin{align}
	& 	\dfrac{1}{a_N}\left(F^{-1}\left(1- \frac{1}{N\tau} \frac{\log(T)}{\log (T)+1} \right)    - F^{-1}\left(1- \dfrac{1}{N\tau} \right)\right)\\
	  &\sim  \dfrac{1}{U_F(\infty)-U_F(N)}\left( \left( U_F\left(N\tau\dfrac{\log(T)+1}{\log(T)} \right)- U_F(\infty) \right)- ( U_F(N\tau)- U(\infty)) \right)\\
	& = N^{-\gamma}L(N)\left(  \left(N\tau \dfrac{\log (T)+1}{\log (T)} \right)^{\gamma} \dfrac{1}{L\left( N\tau \frac{\log (T)+1}{\log (T)} \right)}     - (N\tau)^{\gamma} \dfrac{1}{L(N\tau)} \right)\\
	 &  \propto	\left( \dfrac{\log(T)+1}{\log (T)} \right)^{\gamma} \dfrac{L(N) }{L\left( N\tau \frac{\log (T)+1}{\log (T)} \right)} - \dfrac{L(N)}{L(N\tau)} \to  0.
	\end{align}

	Suppose $\gamma=0$. As above, $a_N\sim \alpha_N$. By eq. \eqref{equation:noiselessFirstOrderAuxiliaryLimit} 	$\lim_{N\to\infty} {(U_F(Nx)-U_F(N))}/{\alpha_N} = \log (x)$ locally uniformly in $x$.  Then 
	\begin{align} 
	& 	\dfrac{1}{a_N}\left(F^{-1}\left(1- \frac{1}{N\tau}  \frac{\log(T)}{\log (T)+1} \right)    - F^{-1}\left(1- \dfrac{1}{N\tau} \right)\right)\\
	& \sim \dfrac{1}{\alpha_N} \left(U_F\left( N\tau \frac{\log (T)+1}{\log (T)} \right)- U_F(N\tau) \right) \to   \log\left(\lim_{T\to\infty} \dfrac{\log(T)+1}{\log(T)}  \right)  =0.
	\end{align} 
	
	Now  we focus on the $G^{-1}_T$ term in eq. \eqref{appendix:equation:evtSufficientG}.
	First suppose that  $\sup_T\E\abs*{\varepsilon_{i, T}}^{\beta}<\infty$. Then by Markov's inequality we obtain for any $\tau\in (0, 1)$ that  $G_T^{-1}(\tau) \leq \left({\sup_T \E\abs{\varepsilon_{i, T}}^{\beta} }/({1-\tau}) \right)^{1/\beta}$. Hence  \begin{equation}\label{appendix:equation:sufficientExtremeGorder}
	 \frac{1}{T^p}G^{-1}_T \left( 1 - \frac{1}{N\tau} \frac{1}{\log (T)+1}   \right) = O\left(  \frac{N^{1/\beta} (\log (T))^{1/\beta}}{T^{p}} 	\right)
	 \end{equation}
	 where the  $O$ term is uniform in $T$.
	First suppose that $\gamma\neq 0$. 
	As above,  $a_{N}^{-1}\sim x^{-\gamma}L(x)$ where $L$ is a slowly varying function (that depends on $F$). For eq. \eqref{appendix:equation:evtSufficientG} to hold   it is sufficient that  ${N^{1/\beta- \gamma}L(x)(\log (T))^{1/\beta}}/{T^{p}}\to 0$ (by eq. \eqref{appendix:equation:sufficientExtremeGorder}). Write $\gamma= \gamma'+\delta$, $\delta>0$. Then
	 \begin{equation} 
	\frac{N^{1/\beta- \gamma}L(x)(\log (T))^{1/\beta}}{T^{p}} = \left[ \frac{N^{1/\beta- \gamma'}(\log (T))^{1/\beta}}{T^{p}}\right] \left[ \frac{L(x)}{N^{\delta}}\right] \to 0
	\end{equation}since the condition holds for $\gamma'$ and  ${L(x)}/{x^{\delta}}\to 0$ for any $\delta>0$ ($L$ is slowly varying).
	Second, 	
let $\gamma=0$. 
	Then \eqref{appendix:equation:evtSufficientG} holds  if ${N^{1/\beta}  (\log (T))^{1/\beta}}/(\alpha_N T^{p})\to 0$.  Fix an arbitrary $x>0$ and write this as
	\begin{align}
	 	\dfrac{ N^{1/\beta} (\log T)^{1/\beta}}{\alpha_N T^p} = \dfrac{  N^{-1/\beta-\gamma'}(\log (T))^{1/\beta}  }{T^{p}}  \dfrac{U_F(Nx) -U_F(N)}{\alpha_N} \dfrac{1}{N^{-\gamma'} (U_F(Nx) - U_F(N))}
	\end{align}
	By assumption, the first fraction tends to zero. The second fraction tends to $\log(x)$ by eq. \eqref{equation:noiselessFirstOrderAuxiliaryLimit}. Last, there are two possibilities for $U_F(Nx)-U_F(N)$. The first one is that it is bounded away from zero. The second one is that it converges to zero (possibly along a subsequence); in this case convergence is slower than $N^{-\kappa}$ for all $\kappa>0$ by problem 1.1.1(b) in \cite{Resnick1987} (recall that $U_F(N) = F^{-1}(1-1/N)$ by eq. \eqref{equation:UHlink}). In both cases the last fraction converges to zero, as $\gamma'<0$.

	The proof is identical  if $G_T\sim N(\mu_T, \sigma_T^2)$. Assumption \ref{assumption:tightness} implies that $\mu_T$ and $\sigma_T^2$ are bounded. In this case
	\begin{equation}
	\dfrac{1}{T^p}G^{-1}_T\left(1- \dfrac{1}{N\tau}\left(1-\frac{\log(T)}{\log(T)+1} \right) \right) = O\left(\dfrac{ \sqrt{\log (N)}}{T^p} \right),
	\end{equation}
	where the $O$ term is uniform in $T$.  The rest of the argument proceeds as above.
\end{proof}

\subsection{Proof of Theorem \ref{theorem:intermediateNormality}}

\begin{proof}[Proof of theorem \ref{theorem:intermediateNormality}]
	
	By  theorem 2.2.1 in \cite{DeHaan2006}
	\begin{equation}\label{appendix:equation:intermediateNormality}
		\sqrt{k}\dfrac{ \theta_{N-k, N} - U_F\left(\frac{N}{k}\right)  }{\frac{N}{k}U'_F\left(\frac{N}{k} \right)} \Rightarrow N(0, 1).
	\end{equation}
	where $({N}/{k})U'_F\left({N}/{k} \right) = (N/k)\times \left(\left({1}/({1-F})  \right)^{-1}\right)'\left( N/k \right)\equiv c_N$ (by eqs. \eqref{appendix:equation:definitionOfUT} and \eqref{equation:UHlink}).
	We now transfer this convergence property to $\vartheta_{N-k, N}$.

	It is convenient for the purposes of the proof to replace the uniform random variables $U_i$ with $1/U_i$. 	Let $Y_1, \dots, Y_N$ be IID random variables with CDF $F_Y(y)=1-1/y, y>1$, $F_Y(y)=0$ for $y\leq 1$. Observe that $Y_i \overset{d}{=} 1/U_i$ where $U_i$ is Uniform[0, 1].
	Let $Y_{1, N}\leq \dots \leq Y_{N, N}$ be the order statistics, then $U_{k, N} \overset{d}{=} 1/Y_{N-k, N}$.  As pointed out by \cite{DeHaan2006} (p. 50)
	\begin{align}
		\theta_{N-k, N} & \overset{d}{=} U_F(Y_{N-k, N}), \quad \label{appendix:equation:ivtProofUFtheta}
		\vartheta_{N-k, N, T}  \overset{d}{=} U_T(Y_{N-k, N}).
	\end{align}
	Let $c_N = ({N}/{k})U'_F\left({N}/{k} \right)$, as in the statement of the theorem.  Then
	\begin{align} \label{appendix:equation:intermediateDecomposition}
		\sqrt{k}\dfrac{\vartheta_{N-k, N, T} - U_F\left(\frac{N}{k}\right)  }{c_N} 
		& \overset{d}{=} \sqrt{k}\dfrac{ U_T\left(  Y_{N-k, N} \right) - U_F\left(\frac{N}{k}\right)  }{c_N} \\
		& =  \sqrt{k}\dfrac{ U_F\left(  Y_{N-k, N} \right)- U_F\left(\frac{N}{k}\right) }{c_N}   + \dfrac{\sqrt{k}}{c_N}\left( U_T\left(  Y_{N-k, N} \right) -   U_F\left(  Y_{N-k, N} \right)    \right)
	\end{align}
	By eqs.  \eqref{appendix:equation:intermediateNormality} and \eqref{appendix:equation:ivtProofUFtheta} it follows that as $N\to\infty$ the first term converges weakly to a $N(0, 1)$ variable.
	The conclusion of the theorem follows if  
	\begin{equation}\label{appendix:equation:ivtProofTailZero}
		\dfrac{\sqrt{k}}{c_N }\left( U_T\left(  Y_{N-k, N} \right) -   U_F\left(  Y_{N-k, N} \right)    \right) \xrightarrow{p} 0~. 
	\end{equation}	
	We establish eq. \eqref{appendix:equation:ivtProofTailZero} by an argument similar to the one used in the proof of theorem \ref{theorem:evtNoisy}. We use lemma 
	\ref{lemma:makarovInequalities} to bound $U_T(Y_{N-k, N})= F^{-1}\left(1- 1/Y_{N-k, N} \right)$.  In lemma    \ref{lemma:makarovInequalities} take $X=\theta_i$, $Y= T^{-p}\varepsilon_{i, T}$, $v= 1-1/Y_{N-k, N}$, 
	subtract $U_F(Y_{N-k, n})= F^{-1}\left(1-1/Y_{N-k, N} \right)$ on all sides  and multiply by $\sqrt{k}/c_N$ to obtain
	\begin{align}
		&	\sup_{w\in \left[0, 1-\frac{1}{Y_{N-k, N}}\right]}  \dfrac{\sqrt{k}}{c_N}  \left( F^{-1}(w) + \dfrac{1}{T^p} G^{-1}_T\left(1-\frac{1}{Y_{N-k, N}}-w\right)  -F^{-1}\left(1- \dfrac{1}{Y_{N-k, N}} \right) \right) \\
		& \leq   \dfrac{\sqrt{k}}{c_N} \left( U_T(Y_{N-k, N})- U_F(Y_{N-k, N}) \right)\\
		& \leq  	\inf_{w\in \left[1-\frac{1}{Y_{N-k, N}}, 1\right]}  \dfrac{\sqrt{k}}{c_N}  \left(F^{-1}(w)  +\dfrac{1}{T^p} G^{-1}_T\left(1+1-\frac{1}{Y_{N-k, N}}-w\right) - F^{-1}\left(1- \dfrac{1}{Y_{N-k, N}} \right)\right),
	\end{align}
	if  $c_N$ is non-negative; the opposite inequalities hold if $c_N$  is negative.
	Since $1/Y_{N-k, N}\overset{d}{=} U_{k, N}$, we obtain 
	\begin{align}
		& 	\inf_{w\in \left[1-\frac{1}{Y_{N-k, N}}, 1\right]}  \dfrac{\sqrt{k}}{c_N}  \left(F^{-1}(w)  +\dfrac{1}{T^p} G^{-1}_T\left(1+1-\frac{1}{Y_{N-k, N}}-w\right) - F^{-1}\left(1- \dfrac{1}{Y_{N-k, N}} \right)\right)\\
		& \overset{d}{=}	\inf_{w\in \left[1-U_{k, N}, 1\right]}  \dfrac{\sqrt{k}}{c_N }  \left(F^{-1}(w)  +\dfrac{1}{T^p} G_T^{-1}\left(1+1-U_{k, N}-w\right) - F^{-1}\left(1-  U_{k, N} \right)\right)   \xrightarrow{p} 0
	\end{align}
 Define $u = -[1-1/Y_{N-k, N} -w]$ to write the above   as condition \eqref{equation:intermediateConditionRandomInf}.
	For the supremum, instead define $u= 1- U_{k, N}-u$ and proceed as in the proof of theorem \ref{theorem:evtNoisy}, noting that $U_{k, N}\xrightarrow{p}0$ and hence with probability approaching 1 $[0, \epsilon]\subset [0, 1- U_{k, N}]$.
	 Eq. \eqref{appendix:equation:ivtProofTailZero} follows as in the proof of theorem \ref{theorem:evtNoisy}.
	
	Sharpness of conditions \eqref{equation:intermediateConditionRandomSup} and \eqref{equation:intermediateConditionRandomInf}  is established as in the proof of theorem \ref{theorem:evtNoisy}.  Suppose that  condition \eqref{equation:intermediateConditionRandomInf} fails (the case for condition \eqref{equation:intermediateConditionRandomSup} is analogous). There is some subsequence of $(N, T)$ and some $\delta_{N, T}$ such  that $\inf_{u\in \left[1-U_{k, N}, 1\right]} \sqrt{k}c_N^{-1}(\cdot) = \delta_{N, T}$ and  $\delta_{N, T}$ is bounded away from zero. Suppose that it is possible to extract a further subsequence such that $\delta_{N, T}$ converges to some $\delta \neq 0$. By theorem 2 in \cite{Makarov1981}, there exists a joint distribution of $\theta_i$ and $\varepsilon_{i, T}$ such that the infimum is attained. Then along this subsequence for this joint distribution $
	{\sqrt{k}}c_N^{-1}\left( U_T\left(  Y_{N-k, N} \right) -   U_F\left(  Y_{N-k, N} \right)    \right) \xrightarrow{p} \delta$. Then from equations \eqref{appendix:equation:intermediateNormality}, \eqref{appendix:equation:ivtProofUFtheta}, and \eqref{appendix:equation:intermediateDecomposition}  it follows that $\sqrt{k}c_N^{-1}[ \vartheta_{N-k, N,  T} - U_F(N/k) ]\Rightarrow N(\delta, 1)$.
	This convergence result may or may not hold for the overall original sequence. 
	If no convergent subsequence of $\delta_{N, T}$ can be extracted, $\delta_{N, T}$ is unbounded. Extract a further monotonically increasing subsequence. There exists a sequence of joint distributions of $\theta_i$ and $\varepsilon_{i, T}$ such that along that subsequence  $\vartheta_{N-k, N, T}$ diverges.
\end{proof}

\subsection{Proof of Proposition \ref{proposition:ratesIntermediateHalfUniformity}}

\begin{proof}[Proof of proposition \ref{proposition:ratesIntermediateHalfUniformity}]
	
	The proof proceeds similarly to that of proposition \ref{proposition:ratesExtremeHalfUniformity}.  As in the proof  of proposition \ref{proposition:ratesExtremeHalfUniformity}, it is sufficient to prove that for some $\tilde{u}_{S, N, T}\in \left[0, U_{k, N}\right]$ and $\tilde{u}_{s,  N, T}$ that with probability approaching 1 lies in  $[0, \epsilon]$ for some $\epsilon\in(0, 1)$.
	\begin{align} 
& 	\dfrac{\sqrt{k}}{c_N }  \left(F^{-1}(1-U_{k, N} + \tilde{u}_{S, N, T}) - F^{-1}\left(1-  U_{k, N} \right)    +\dfrac{1}{T^p} G_T^{-1}\left(1-\tilde{u}_{S, N, T}\right)  \right) \xrightarrow{p} 0, \label{appendix:equation:intermediateSufficientI}\\
 & \dfrac{\sqrt{k}}{c_N} \left( F^{-1}(1-U_{k, N}- \tilde{u}_{s, N, T})  -F^{-1}\left(1- U_{k, N} \right)   + \dfrac{1}{T^p} G_T^{-1}\left(\tilde{u}_{s, N, T}\right)\right) \xrightarrow{p} 0. \label{appendix:equation:intermediateSufficientS}
		\end{align}
	We only show that    eq. \eqref{appendix:equation:intermediateSufficientI} holds, eq. \eqref{appendix:equation:intermediateSufficientS} follows analogously.

	Let $\rho=\delta/2+\nu$, and set 
	\begin{equation}
	\tilde{u}_{S, N , T} =  U_{k, N} \frac{1}{N^{\rho}+1} \in [0, U_{k, N}].
	\end{equation}
	As in the proof of proposition \ref{proposition:ratesExtremeHalfUniformity}, we first show that the scaled $F^{-1}$ terms in eq. \eqref{appendix:equation:intermediateSufficientI} decay, and then that  the scaled $G^{-1}_T$ term decays.
	First we establish that
	\begin{equation}\label{appendix:equation:intermediateSufficientF}
	\dfrac{\sqrt{k}}{c_N}\left(  F^{-1}\left(  1- U_{k, N} \frac{N^\rho}{N^{\rho}+1} \right) - F^{-1}\left(1- U_{k, N} \right) \right)\xrightarrow{p} 0.
	\end{equation}
	Let $Y_1, \dots, Y_N$ be IID random variables with CDF $F_Y(y)=1-1/y, y>1$, $F_Y(y)=0$ for $y\leq 1$. We will use that $Y_i\overset{d}{=} 1/U_i$, and correspondingly $Y_{N-k, N}\overset{d}{=} 1/U_{k, N}$, as in the proof of theorem \ref{theorem:intermediateNormality}.
	Observe that $c_N$ can be written as $c_N= (N/k)U'_F(N/k)$. Then using eq. \eqref{equation:UHlink} we obtain that
	\begin{align*}
		& 	\dfrac{\sqrt{k}}{c_N}\left(  F^{-1}\left(  1- U_{k, N} \frac{N^\rho}{N^{\rho}+1} \right) - F^{-1}\left(1- U_{k, N} \right) \right)\\
			& \overset{d}{=} \dfrac{\sqrt{k}}{\frac{N}{k}U'_F\left(\frac{N}{k} \right)} \left(U_F\left( Y_{N-k, N} \frac{N^{\rho}+1}{N^{\rho}}  \right)  - U_F(Y_{N-k, N})  \right)		\\
			& = \dfrac{\sqrt{k}}{\frac{N}{k}U'_F\left(\frac{N}{k} \right)} \left(U_F\left(\frac{N}{k} \left(\frac{k}{N} Y_{N-k, N}\right) \frac{N^{\rho}+1}{N^{\rho}}  \right)  - U_F\left(\frac{N}{k} \left(\frac{k}{N} Y_{N-k, N}\right) \right)  \right)	\\
			& =   \sqrt{k} \left(\dfrac{N^{\rho}+1}{N^{\rho}}-1 \right)\left(\dfrac{k}{N}Y_{N-k, N} \right)\dfrac{\frac{N}{k} U_F'\left(\frac{N}{k}x_N \right)}{\frac{N}{k}U'_F\left(\frac{N}{k} \right)}, \quad x_N\in \left[\frac{k}{N} Y_{N-k, N}, \left(\frac{k}{N} Y_{N-k, N}\right)  \frac{N^{\rho}+1}{N^{\rho}}    \right]
	\end{align*} 
	where the last line follows by the mean value theorem.   We now deal with the last two terms in the above expression.
By corollary 2.2.2 in \cite{DeHaan2006}, $ ({k}/{N})Y_{N-k, N}\xrightarrow{p} 1$.
	By corollary 1.1.10 in \cite{DeHaan2006}  under assumptions \ref{assumption:independence} and \ref{assumption:vonMises} it holds that $
	\lim_{t\to\infty} {U'_F(tx)}/{U'_F(t)} = x^{\gamma-1}$
	locally uniformly in $x$. Since $x_N \to 1$ as $N\to\infty$ and $k\to\infty, k=o(N)$, we conclude that $
	{ U_F'\left(\frac{N}{k}x_N \right)}/{ U'_F\left(\frac{N}{k} \right)}\to 1.$
	Combining these observations, 
	we obtain that
	\begin{align}
& 	\dfrac{\sqrt{k}}{\frac{N}{k}U'_F\left(\frac{N}{k} \right)} \left(U_F\left(\frac{N}{k} \left(\frac{k}{N} Y_{N-k, N}\right) \frac{N^{\rho}+1}{N^{\rho}}  \right)  - U_F\left(\frac{N}{k} \left(\frac{k}{N} Y_{N-k, N}\right) \right)  \right)\\
	& =  O_p\left(    \sqrt{k}\left[  \dfrac{N^{\rho}+1}{N^{\rho} }-1  \right]  \right) = O_p(N^{\delta/2} N^{-\left(\delta/2+\nu \right)})= O_p(N^{-\nu})=o_p(1).
	\end{align}
	where we apply   the assumption that $k=N^{\delta}$ in the third equality.
 
	Now we show that
	\begin{equation}\label{appendix:equation:intermediateSufficientG}
	\dfrac{\sqrt{k}}{c_N} \dfrac{1}{T^p} G_T^{-1}\left( 1-\tilde{u}_{S, N, T} \right) \xrightarrow{p} 0.
	\end{equation}
	Suppose that  $\sup_T \E\abs{\varepsilon_{i, T}}^{\beta}<\infty$ (proof 
 is analogous if $G_T\sim N(\mu_T, \sigma^2_T)$).
	As in eq. \eqref{appendix:equation:sufficientExtremeGorder}
 \begin{equation}
 \dfrac{1}{T^p}G_T^{-1}\left(1- U_{k, N}\frac{1}{N^{\rho}+1}  \right) \sim  O\left(  \dfrac{N^{\rho/\beta} }{U_{k, N}^{1/\beta} T^p}  \right)
 \end{equation}
 By corollary 1.1.10 in \cite{DeHaan2006} $U'_F\in RV_{\gamma-1}$, hence for some slowly varying function $L$ we can write $U'_F(x)=x^{\gamma-1}L(x)$ (be lemma \ref{appendix:lemma:karamataCharacterization}). Then $c_N= ({N}/{k})U_F'\left({N}/{k} \right)= \left( {N}/{k}\right)^{\gamma}L\left( {N}/{k} \right)$. 
Hence	
	\begin{align}
 & 	\dfrac{\sqrt{k}}{c_N} \dfrac{1}{T^p} G_T^{1}\left( 1-\tilde{u}_{S, N, T} \right)  = O\left(\sqrt{k} \frac{1}{\left(\frac{N}{k}\right)^{\gamma}L\left(\frac{N}{k} \right) } \dfrac{N^{\rho/\beta} }{U_{k, N}^{1/\beta} T^p}   \right)\\
& =  O_p\left( k^{1/2+\gamma-1/\beta}  N^{-\gamma+\rho/\beta+1/\beta} \dfrac{1}{L\left( \frac{N}{k}\right)} \dfrac{1}{T^p} \right) 
 = O_p\left( 	\dfrac{ N^{\delta/2(1+1/\beta) + (1-\delta)(-\gamma'+1/\beta) + \nu/\beta }}{T^{p}}\dfrac{ N^{-\varkappa(1-\delta)}}{L\left(N^{1-\delta} \right)}    \right)
	\end{align}
	where  in the second equality we again use corollary 2.2.2 in \cite{DeHaan2006} to conclude that $ ({N}/{k}) \times 1/ U_{k, N}\overset{d}{=}\left(  {k}/{N}\right)Y_{N-k, N} \xrightarrow{p}1$; in the third line we write $\gamma=\gamma'+\varkappa$ where $\varkappa>0$ by assumption.  The  above expression is now $o_p(1)$ by assumptions of the proposition and since $L$ is slowly varying.

	Combining together equations \eqref{appendix:equation:intermediateSufficientF} and \eqref{appendix:equation:intermediateSufficientG} shows that \eqref{appendix:equation:intermediateSufficientI} holds. To prove that \eqref{appendix:equation:intermediateSufficientS} holds, proceed as above $
		\tilde{u}_{s, N, T} =  U_{k, N}/N^{\rho}$; observe that $\tilde{u}_{s, N, T}$ lies in $[0, \epsilon]$ with probability approaching 1 for any $\epsilon\in (0, 1)$. 
\end{proof}

\section{Inference}

%

\subsection{Proof of Lemmas \ref{corollary:evtNoisyCenteredQuantile}- \ref{theorem:jointEVT}, Theorems \ref{theorem:feasibleEVT}-\ref{theorem:subsampling}, Remark \ref{remark:gammaEstimation}}

\begin{proof}[Proof of lemma \ref{corollary:evtNoisyCenteredQuantile}]
	
	We split the proof by sign of $\gamma$. First,  let $\gamma>0$. Let $Fr$ be  an RV  with $P(Fr\leq x) = \exp(-x^{-1/\gamma})$ for $x\geq 0$ and $0$ for $x<0$; note that $(E_1^*)^{-\gamma}\overset{d}{=} Fr$. Then
	\begin{align}
		& 	\dfrac{1}{ F^{-1}\left(  1 - \frac{1}{N} \right)} \left[  \vartheta_{N, N, T}- F^{-1}\left( 1-\dfrac{l}{N} \right)\right]
	 = 	\dfrac{1}{U_F(N)} \left[  \vartheta_{N, N, T}- U_F\left( \dfrac{N}{l} \right)\right]\\
	& = \dfrac{1}{U_F(N)}\vartheta_{N, N, T} - \dfrac{U_F(N \times l^{-1})}{U_F(N)} \Rightarrow Fr -( l^{-1})^{\gamma}= Fr -\dfrac{1}{l^{\gamma}},
	\end{align}
	since  by corollary 1.2.10 in \cite{DeHaan2006}  $U_F(x)\in RV_{\gamma}$ and $\vartheta_{N, N, T}/U_F(N)\Rightarrow Fr$ by corollary 1.2.4 in \cite{DeHaan2006}.
	
	For $\gamma=0$ and let $Gu$ be an RV  with $P(Gu\leq x) = \exp(-e^{-x})$ for  $x\in \R$; note that $-\log(E_1^*)\overset{d}{=}Gu$. By corollary 1.2.4 in \cite{DeHaan2006} there exists a $\hat{f}(\cdot)$ such that $\alpha_N  = \hat{f}(F^{-1}(1-1/N))$ for $\alpha_N$ of eq. \eqref{equation:noiselessFirstOrderAuxiliaryLimit}. For this $\hat{f}$ we have
	\begin{align*}
	& 	\dfrac{1}{\hat{f}\left(F^{-1}\left(  1 - \frac{1}{N} \right)\right)} \left( \vartheta_{N, N, T} - F^{-1}\left(  1 - \frac{l}{N} \right) \right)\\
	&  = 	\dfrac{1}{\alpha_N} \left( \vartheta_{N, N, T} - U_F(N)\right)  + 	\dfrac{1}{\alpha_N} \left(  U_F\left(\dfrac{N}{l} \right) - U_F(N) \right)\Rightarrow Gu - \log(l),
	\end{align*}
	where the first term converges by corollary 1.2.4 in \cite{DeHaan2006} and the second term by eq. \eqref{equation:noiselessFirstOrderAuxiliaryLimit}.

		For $\gamma<0$ necessarily $F^{-1}(1) = U_F(\infty)<\infty$. Let $W$ be an RV  with $P(W\leq x) = \exp( -(-x)^{-1/\gamma} )$ for  $x<0$ and $P(W\leq x)=1$ for $x\geq 0$, note that $-(E_1^*)^{-\gamma} \overset{d}{=}W$.
		 Then
	\begin{align} 
& 	\dfrac{1}{F^{-1}(1) - F^{-1}\left(1- \frac{1}{N} \right) }\left(\vartheta_{N, N, T} -  F^{-1}\left(  1- \frac{l}{N}   \right) \right)\\
 & =  	\dfrac{1}{U_F(\infty) - U_F(N) }\left(\vartheta_{N, N, T}- U_F(\infty) \right) + \dfrac{U_F(\infty) - U_F(N/l)}{U_F(\infty) - U_F(N)}	\Rightarrow  W +  \frac{1}{l^{\gamma}},
	\end{align} where the first term converges by corollary 1.2.4 in \cite{DeHaan2006} and  $U_F(\infty)-U_F(x)\in RV_\gamma$ by corollary 1.2.10 in \cite{DeHaan2006}.
\end{proof}

\begin{proof}[Proof of lemma \ref{theorem:jointEVT}]
	
	Let $\alpha_N$ be as in eq. \eqref{equation:noiselessFirstOrderAuxiliaryLimit} and let $\beta_N = U_F(N)$.
%
	First we show that  for the  constants $\alpha_N, \beta_N$ it holds that
	\begin{multline}\label{equation:jointEVTcanonical}
	\begin{pmatrix}
	\dfrac{\vartheta_{N, N, T}-\beta_N}{\alpha_N}, \dfrac{\vartheta_{N-1, N, T} -\beta_N}{\alpha_N}, \dots, \dfrac{\vartheta_{N-q, N, T} -\beta_N}{\alpha_N}
	\end{pmatrix} \\ \Rightarrow 
\gamma^{-1}\begin{pmatrix}
	 {(E_1^*)^{-\gamma}-1 },  {(E_1^*+E_2^*)^{-\gamma}-1}, \dots,  {(E_1^*+E_2^*+ \dots + E^*_{q+1})^{-\gamma} -1 }
\end{pmatrix}.
	\end{multline}	
	
	Let $E_1, \dots, E_{q+1}$ be IID standard exponential RVs, and $E_1^*, \dots, E_{q+1}^*$ another IID set of standard exponential RVs. Observe that  $
	P\left( U_T\left(1/(1-\exp({-E_i}))\right)\leq x \right) = H_T(x).$
	Then 
		\begin{align*} 
	& (\vartheta_{N, N, T}, \vartheta_{N-1, N, T}, \dots, \vartheta_{N-q, N,T}) \\ \overset{d}{=}  &  \left( U_T\left(\frac{1}{1-\exp({-E_{1, n}})} \right), U_T\left(\frac{1}{1-\exp{(-E_{2, n})}} \right),  \dots, U_T\left(\frac{1}{1-\exp({-E_{q+1, n}}) } \right)\right)\\
	\overset{d}{=} & \left( U_T\left(\dfrac{1}{1- \exp\left( {-\frac{E_1^*}{N}}\right)} \right) ,  U_T\left(\dfrac{1}{1- \exp\left({-\frac{E_1^*}{N} - \frac{E_2^*}{N-1} }\right)} \right) , \dots, \right.\\
	& \quad, \dots, \left.  U_T\left(\dfrac{1}{1- \exp\left({-\frac{E_1^*}{N} - \frac{E_2^*}{N-1} - \dots - \frac{E_{q+1}^*}{N-q} }\right)} \right) \right),
	\end{align*}
	where the second equality follows by the \cite{Renyi1953} representation  of order statistics from an exponential sample (see expression (1.9) in \cite{Renyi1953}). We conclude that
	\begin{align}%
	& \left(\dfrac{\vartheta_{N, N, T}-\beta_N}{\alpha_N}, 
	\dots, 
	\dfrac{\vartheta_{N-q, N}-\beta_N}{\alpha_N} \right) \\ 
	 \overset{d}{=}  & \begin{pmatrix}
\dfrac{U_T\left(\frac{1}{1- \exp\left( {-\frac{E_1^*}{N}}\right)} \right) -\beta_N}{\alpha_N} , 
	&   \dots, & \dfrac{U_T\left(\frac{1}{1- \exp\left({-\frac{E_1^*}{N} - \frac{E_2^*}{N-1} - \dots - \frac{E_{q+1}^*}{N-q} }\right)} \right)-\beta_N}{\alpha_N}   
	  \end{pmatrix} . \label{appendix:equation:jointEVTnoisyThetasExponentials}
	\end{align}
	Examine the first coordinate in the above vector:
	\begin{align}
& \dfrac{U_T\left(\frac{1}{1- \exp\left( {-\frac{E_1^*}{N}}\right)} \right) -\beta_N}{\alpha_N}  
   =  \dfrac{U_T\left(\frac{N}{N\left( 1- \exp\left( {-\frac{E_1^*}{N}}\right) \right)} \right) -\beta_N}{\alpha_N} \\
	& = \dfrac{U_F\left(\frac{N}{N\left( 1- \exp\left( {-\frac{E_1^*}{N}}\right) \right)} \right) -\beta_N}{\alpha_N} +  \dfrac{ U_T\left(\frac{N}{N\left( 1- \exp\left( {-\frac{E_1^*}{N}}\right) \right)} \right)- U_F\left(\frac{N}{N\left( 1- \exp\left( {-\frac{E_1^*}{N}}\right) \right)} \right) }{\alpha_N}.
	\end{align}
	We can rewrite each term in eq. \eqref{appendix:equation:jointEVTnoisyThetasExponentials} as above by decomposing  it into a $U_F$ a component and a difference term involving $U_T$ and $U_F$.
	
	 We separately analyze the two terms.
		First,  by theorem 2.1.1 in  \cite{DeHaan2006} \begin{multline}\label{appendix:equation:jointEVTUFconvergence}
\begin{pmatrix} \dfrac{U_F\left(\frac{1}{1- \exp\left( {-\frac{E_1^*}{N}}\right)} \right) -\beta_N}{\alpha_N} , 
& \dots, & \dfrac{U_F\left(\frac{1}{1- \exp\left({-\frac{E_1^*}{N} - \frac{E_2^*}{N-1} - \dots - \frac{E_{q+1}^*}{N-q} }\right)} \right)-\beta_N}{\alpha_N}   
\end{pmatrix} \\ \Rightarrow  
\gamma^{-1}\begin{pmatrix}
	{(E_1^*)^{-\gamma}-1 },  {(E_1^*+E_2^*)^{-\gamma}-1}, \dots,  {(E_1^*+E_2^*+ \dots + E^*_{q+1})^{-\gamma} -1 }
\end{pmatrix}.
		\end{multline}
		Second, the difference terms converge to zero. We show this for the first term only, the result follows analogously for the other terms.
		First, write the difference term as   
		\begin{align} 
&	\dfrac{ U_T\left(\frac{N}{N\left( 1- \exp\left( {-\frac{E_1^*}{N}}\right) \right)} \right)- U_F\left(\frac{N}{N\left( 1- \exp\left( {-\frac{E_1^*}{N}}\right) \right)} \right) }{\alpha_N} \label{appendix:equation:jointEVTdifferenceTerm}  \\
&	=\dfrac{ U_T\left(\frac{N}{N\left( 1- \exp\left( {-\frac{E_1^*}{N}}\right) \right)} \right)-U_T(N) }  {\alpha_N}   - \dfrac{U_F\left(\frac{N}{N\left( 1- \exp\left( {-\frac{E_1^*}{N}}\right) \right)} \right) - U_F(N)}{\alpha_N} + \dfrac{U_F(N) - U_T(N)}{\alpha_N} . 
		\end{align}
	
We show that above expression  is $o(1)$ in two steps.	First, we show that the difference of the first two terms in the above display tends to zero. 	Define $	\tilde{h}_{N, T}(x) = 	 ( U_T\left(Nx\right)-U_T(N) )/{\alpha_N}$. 
			$\tilde{h}_{N, T}$ converges pointwise to $\tilde{h}(x)= {(x^{\gamma}-1)}/{\gamma}$  as $N, T\to\infty $ by theorem \ref{theorem:evtNoisy}, lemma \ref{appendix:lemma:convergenceMaximumQuantiles},  and eq. \eqref{equation:noiselessFirstOrderAuxiliaryLimit} with $U_T$ in place of $U_F$.
			Since the limit is continuous, and $\tilde{h}_{N, T}(x)$ is monotonic in $x$, convergence is locally uniform in $y$ (see section 0.1 in \cite{Resnick1987}). Define $x_N=N(1-\exp({-E_1^*/N}))$. Then $x_N\to  E_1^*$ and  $x_N$ is     a bounded sequence. Then as $N, T\to\infty$ it holds that $\tilde{h}_{N, T}(x_N^{-1})\to \tilde{h}((E_1^*)^{-1})$ (observe that $E_1^*$ does not depend on $N$ or $T$). 
		Similarly, define $
		\tilde{f}_N(x) = ({U_F\left(Nx\right) - U_F(N)})/{\alpha_N}$. 
		$\tilde{f}_N(x)$ converges to the same limit $\tilde{h}(x)$ by  eq. \eqref{equation:noiselessFirstOrderAuxiliaryLimit}. As $\tilde{f}_N(x)$ is   monotonic,  convergence is also locally uniform in $x$, so analogously $\tilde{f}_N(x_N^{-1})\to \tilde{h}((E_1^*)^{-1})$. Thus the difference between the first two terms in eq. \eqref{appendix:equation:jointEVTdifferenceTerm} tends to zero. 
		Second,  ${(U_F(N) - U_T(N))}/{\alpha_N}=o(1)$  as  in the proof of theorem \ref{theorem:evtNoisy} (take $\tau=1$ and recall that conditions of theorem \ref{theorem:evtNoisy} are assumed to hold).

Combining the above argument with eq. \eqref{appendix:equation:jointEVTUFconvergence}, we obtain  eq. 	\eqref{equation:jointEVTcanonical}.

	Last, we  translate from the   constants $(\alpha_N, \beta_N)$ to the  constants of the  statement of the lemma. We only show this for the first coordinate in eq.  	\eqref{equation:jointEVTcanonical}, the argument for the other coordinates is identical, and we split the proof by sign of $\gamma$. 
	For $\gamma<0$
	\begin{align} 
	\dfrac{\vartheta_{N, N, T}-U_F(\infty)}{U_F(\infty)-U(N)} 
	= \dfrac{\vartheta_{N, N, T} - \beta_N}{\alpha_N} \dfrac{\alpha_N}{U_F(\infty)-U(N)}   -1 \Rightarrow -\dfrac{(E_1^*)^{-\gamma}-1}{\gamma}\times \gamma  -1. 
	\end{align}
 where the result follows from lemma  1.2.9 in \cite{DeHaan2006} .
	For $\gamma>0$
	\begin{align}
		\dfrac{\vartheta_{N, N, T}}{U_F(N)}  
		= \dfrac{\vartheta_{N, N, T} - \beta_N}{\alpha_N} \dfrac{\alpha_N}{U_F(N)} +1
		 \Rightarrow \dfrac{(E_1^*)^{-\gamma}-1}{\gamma}\times \gamma  +1 = (E_1^*)^{-\gamma}.
	\end{align}
	where the result follows from  lemma 1.2.9 in \cite{DeHaan2006}.
	For $\gamma=0$ the constants in the lemma statement are  in fact $\alpha_N, \beta_N$ (corollary 1.2.4 in \cite{DeHaan2006}) We only need to represent $((E_1^*)^{-\gamma}-1))/\gamma$ in the form given in the theorem statement. Observe that that for $x>0$ ${(x^{-\gamma}-1)}/{\gamma} \to -\log(x)$ as $\gamma\to 0$ (note the minus). The conclusion follows.
\end{proof}

 \begin{proof}[Proof of theorem \ref{theorem:feasibleEVT}]

Follows   immediately from lemmas \ref{corollary:evtNoisyCenteredQuantile} and  \ref{theorem:jointEVT} and the continuous mapping theorem.\end{proof}

 
 We now turn towards the proof of theorem \ref{theorem:subsampling}.
We begin by computing the order of the difference between the noisy and the noiseless maximum.		
Define $EA_{N, T}  = \max\curl*{    
T^{-p}\abs{\varepsilon_{1, T}}, \dots, T^{-p}\abs{\varepsilon_{N, T}}}$.
Since $\vartheta_{i, T}= \theta_i + T^{-p} \varepsilon_{i, T}$, the following elementary inequality holds:
\begin{equation}\label{appendix:equation:EAmaxInequality}
\theta_{N, N}- EA_N \leq \vartheta_{N, N, T}\leq \theta_{N, N}+ EA_N   \text{ or }  \abs{\vartheta_{N, N, T} -\theta_{N, N} }\leq EA_{N, T}.
\end{equation} 

\begin{lem}
	
	\begin{enumerate}[noitemsep,topsep=0pt,parsep=0pt,partopsep=0pt, label={(\arabic*)}, leftmargin=*]
		\item If  $\sup_{T}\E\abs{\varepsilon_{i, T}}^{\beta}<\infty$ for some  $\beta> 0$, then $\vartheta_{N, N, T} -\theta_{N, N}$ is $O_p ( {N^{1/\beta}}/{T^{p}}  )$. 
		\item Let assumption \ref{assumption:tightness} hold and   $\varepsilon_{i, T}\sim N(\mu_T, \sigma_T^2)$ for all $T$.  Then $\vartheta_{N, N, T}-\theta_{N, N}=O_p( \sqrt{\log (N)}/T^p)$.
	\end{enumerate}
	
	\label{lemma:stochasticOrderOfDifference}

\end{lem}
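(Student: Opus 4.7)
My plan is to reduce both parts to controlling the order of the auxiliary maximum $EA_{N,T} = T^{-p}\max_{i\leq N}|\varepsilon_{i,T}|$, via the elementary sandwich inequality $|\vartheta_{N,N,T}-\theta_{N,N}|\leq EA_{N,T}$ given in \eqref{appendix:equation:EAmaxInequality}. Since assumption \ref{assumption:independence} makes $\varepsilon_{1,T},\dots,\varepsilon_{N,T}$ iid within each row of the triangular array, the only work is to combine a marginal tail bound with a union bound, uniformly in $T$, and then divide by $T^p$.

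For part (1), I would invoke Markov's inequality of order $\beta$: setting $C=\sup_T \E|\varepsilon_{i,T}|^{\beta}<\infty$ gives $P(|\varepsilon_{i,T}|>x)\leq C x^{-\beta}$ for all $x>0$ and all $T$. A union bound over the $N$ iid coordinates then yields
\begin{equation*}
P\!\left(\max_{i\leq N}|\varepsilon_{i,T}|>M N^{1/\beta}\right)\leq \frac{N\cdot C}{(MN^{1/\beta})^{\beta}}=\frac{C}{M^{\beta}},
\end{equation*}
which can be made arbitrarily small by choosing $M$ large, uniformly in $T$. Hence $\max_i|\varepsilon_{i,T}|=O_p(N^{1/\beta})$ and dividing by $T^p$ delivers the stated order $O_p(N^{1/\beta}/T^p)$.

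For part (2), the normality assumption lets me replace the polynomial Markov bound by a Gaussian tail bound. Assumption \ref{assumption:tightness} ensures that $(\mu_T,\sigma_T^2)$ is bounded, so it suffices to bound $\max_i|Z_i|$ for standard normals (absorbing the uniformly bounded shift and scaling into the constant). Using $P(|Z|>x)\leq 2\exp(-x^2/2)$ for $x>0$ together with a union bound,
\begin{equation*}
P\!\left(\max_{i\leq N}|Z_i|>\sqrt{c\log N}\right)\leq 2N\exp(-c\log N/2)=2N^{1-c/2},
\end{equation*}
which vanishes for any fixed $c>2$. Therefore $\max_i|\varepsilon_{i,T}|=O_p(\sqrt{\log N})$, and dividing by $T^p$ gives the claimed rate.

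There is no real obstacle here beyond bookkeeping; the only subtlety is ensuring uniformity in $T$ of the marginal tail bounds, which is supplied directly by the hypothesis $\sup_T \E|\varepsilon_{i,T}|^{\beta}<\infty$ in part (1) and by the tightness of $\{G_T\}$ (controlling $\mu_T,\sigma_T^2$) in part (2).
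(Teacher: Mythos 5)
Your proof is correct and follows essentially the same route as the paper's: reduce to the envelope $EA_{N,T}$ via the sandwich inequality \eqref{appendix:equation:EAmaxInequality}, then combine a uniform-in-$T$ marginal tail bound (Markov of order $\beta$ in part (1), a Gaussian tail in part (2)) with a union bound over the $N$ iid coordinates. The only difference is cosmetic: the paper's part (2) merely asserts the $O_p(\sqrt{\log N})$ rate for the maximum of bounded-parameter normals, whereas you write out the Chernoff-plus-union-bound calculation explicitly.
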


\begin{proof} Consider (1),
Let $t>0$. We compare $EA_{N, T}$ to  ${N^s}/{T^p}$ for $s\geq 0$:
	\begin{align*}
	P\left( \dfrac{EA_{N, T}}{N^s/T^p} \geq t\right)  & = P\left(\bigcup_{i=1}^N \curl*{\dfrac{1}{T^p}\dfrac{\abs{\varepsilon_{i, T}}}{N^s/T^p}\geq t   } \right) \leq \sum_{i=1}^NP\left( \dfrac{\abs{\varepsilon_{i, T}}}{N^s/T^p} \geq t T^p \right) \leq NP\left( \abs{\varepsilon_{i, T}} \geq t N^s \right)\\
	& \leq  N  \dfrac{ \E\abs{\varepsilon_{i, T}}^{\beta} }{t^{\beta} N^{\beta s}   }  \leq  \dfrac{\sup_T\E\abs{\varepsilon_{i, T}}^{\beta}}{t^{\beta}} N^{1-\beta s},
	\end{align*} 
where we use Markov's inequality in the penultimate   step. Setting $s=1/\beta$ shows that these probabilities are bounded, uniformly in $T$, hence 
	$EA_{N, T}= O_p\left(  {N^{1/\beta}}/{T^p} \right)$. The result follows by inequality \eqref{appendix:equation:EAmaxInequality}.

Consider (2).
	By assumption \ref{assumption:tightness}, $(\mu_T, \sigma_T^2)$ is a bounded sequence. $EA_{N, T}$  is a maximum of independent normal variables of bounded mean and variance. Then  $EA_{N, T}=O_p\left(\sqrt{\log (N)}/T^p\right)$. The result then follows by inequality \eqref{appendix:equation:EAmaxInequality}.   
\end{proof}

 \begin{proof}[Proof of theorem \ref{theorem:subsampling}]
 	
 	The proof changes  depending on whether $l=0$ or $l>0$. 
	We begin with $l=0$ and $\gamma<0$.
	Label 
	\begin{align*} 
	J(x)  & = P\left(  \dfrac{  (E_1^* + \dots + E_{r+1}^*)^{-\gamma} }{(E_1^*+ \dots + E_{q+1}^*)^{-\gamma}-(E_1^*)^{-\gamma} }\leq x \right),  	J_{N, T}(x) & = P\left(\dfrac{\vartheta_{N-r, N, T} -F^{-1}(1)}{\vartheta_{N-q, N, T} - \vartheta_{N, N, T}} \leq x  \right),
	\end{align*}
	using notation of theorem \ref{theorem:feasibleEVT}. Theorem \ref{theorem:feasibleEVT} shows that  $J_N \Rightarrow J$.

	Add and subtract $F^{-1}(1)$ in $L_{b, N, T}$ to obtain
	\begin{equation}
	L_{b, N, T}(x) = \dfrac{1}{\binom{N}{b}}\sum_{s=1}^{\binom{N}{b}}  \I\curl*{ \dfrac{ \vartheta_{b-r, b, T}^{(s)} - F^{-1}(1) }{\vartheta_{b-q, b, T}^{(s)}  - \vartheta_{b, b, T}^{(s)} }  + \dfrac{ F^{-1}(1) - \vartheta_{N, N, T} }{\vartheta_{b-q, b, T}^{(s)}  - \vartheta_{b, b, T}^{(s)} } \leq x   }.
	\end{equation}
	Fix  an arbitrary $\epsilon>0$ and define the event $	E_{N, T} = \curl*{ \abs*{  { ( F^{-1}(1) - \vartheta_{N, N, T}) }/{(\vartheta_{b-q, b, T}^{(s)}  - \vartheta_{b, b, T}^{(s)}  )} }\leq  \epsilon   }.$
	The goal is to show that $P(E_{N, T})\to 1$ for any $\varepsilon>0$ as $N, T\to\infty$. 
	Write
	\begin{align} 
	\dfrac{ F^{-1}(1) - \vartheta_{N, N, T} }{\vartheta_{b-k, b, T}^{(s)}  - \vartheta_{b, b, T}^{(s)} }  & =  \dfrac{ F^{-1}(1) - \theta_{N, N} }{\vartheta_{b-q, b, T}^{(s)}  - \vartheta_{b, b, T}^{(s)}  }+  \dfrac{ \theta_{N, N} - \vartheta_{N, N, T} }{  \vartheta_{b-q, b, T}^{(s)}  - \vartheta_{b, b, T}^{(s)} }. \label{equation:subsamplingProofTail}
	\end{align}
	We show that both terms are $o_p(1)$, which allows us to conclude that $P(E_{N, T})\to 1$.
	Focus on the first   term in eq. \eqref{equation:subsamplingProofTail}. Recall that $F^{-1}(1)= U_F(\infty)$ and  write the term as 
	\begin{equation} 
	\dfrac{ F^{-1}(1) - \theta_{N, N} }{\vartheta_{b-q, b, T}^{(s)}  - \vartheta_{b, b, T}^{(s)}  }= \dfrac{U_F(\infty)- \theta_{N, N} }{ U_F(\infty)-U_F(N) } \dfrac{U_F(\infty)-U_F(b) }{ \vartheta_{b-q, b, T}^{(s)}  - \vartheta_{b, b, T}^{(s)}  }  \dfrac{ U_F(\infty) - U_F(N)}{U_F(\infty)-U_F(b)}.
	\end{equation}
	\begin{enumerate}[noitemsep,topsep=0pt,parsep=0pt,partopsep=0pt, label={(\arabic*)}, leftmargin=*]
		\item The first term is $O_p(1)$ under assumption \ref{assumption:EV} by corollary 1.2.4 in \cite{DeHaan2006}. This term does not depend on $T$.
		
		\item Second term is $O_p(1)$ by lemma \ref{theorem:jointEVT}. Lemma \ref{theorem:jointEVT} applies  to subsample $s$, since $(N, T)$ satisfy conditions of proposition \ref{proposition:ratesExtremeHalfUniformity}, and $b=o(N)$. 
		
		\item  Last,  $(U_F(\infty)-U_F(t))\in RV_{\gamma}$ by corollary 1.2.10 in \cite{DeHaan2006}.  By proposition 0.5 in \cite{Resnick1987} $ {(U_F(\infty)- U_F(xt))}/{(U_F(\infty)- U_F(t))}\to x^{\gamma}$ uniformly on intervals of the form $(b, \infty)$. Hence, using
		$b= N^m$, $m<1, \gamma<0$, we obtain
		\begin{equation} 
		\frac{U_F(\infty)-U_F(N)}{U_F(\infty)-U_F(N^m)} =   \frac{U_F(\infty)-U_F((N^{1-m}N^m))}{U_F(\infty)-U_F(N^m)}  \sim (N^{1-m})^{\gamma} \to 0.
		\end{equation}
 The last term is $o(1)$.
	\end{enumerate} 
	Overall the first  term  in eq. \eqref{equation:subsamplingProofTail} is   $o_p(1)$

	Now focus on the second term in eq. \eqref{equation:subsamplingProofTail}.   Let condition (1) of proposition \ref{proposition:ratesExtremeHalfUniformity} hold (the proof is analogous if condition (2) holds instead). 
	Since $\sup_T \E\abs{\varepsilon_{i, T} }^{\beta}<\infty$, by lemma \ref{lemma:stochasticOrderOfDifference} we conclude that	
	 $\theta_{N, N}-\vartheta_{N, N, T} = O_p({N^{1/\beta}}/{T^{p}})$. 
	  By lemma \ref{theorem:jointEVT}, ${(\vartheta_{b-q, b, T}^{(s)}   - \vartheta_{b, b, T}^{(s)})  }/{(U_F(\infty) - U_F(b))}$ is $O_p(1)$.  In addition,  by corollary 1.2.10 in \cite{DeHaan2006} ${1}/{(U_F(\infty)-U_F(t))}$ is $RV_{-\gamma}$, so by lemma \ref{appendix:lemma:karamataCharacterization} we can write ${1}/{(U_F(\infty)-U_F(t))} = t^{-\gamma} L(t)$ for some slowly varying $L$.   Since  $b=N^m$, we obtain
	\begin{align}
	\dfrac{ \theta_{N, N} - \vartheta_{N, N, T} }{  \vartheta_{b-q, b, T}^{(s)}   - \vartheta_{b, b, T}^{(s)} }  &  = \left( \theta_{N, N} - \vartheta_{N, N, T}\right) 	\dfrac{ U_F(\infty) - U_F(b)}{  \vartheta_{b-q, b, T}^{(s)}   - \vartheta_{b, b, T}^{(s)}}  \dfrac{1}{U_F(\infty)- U_F(b)} \\
	& =  O_p\left(\frac{N^{1/\beta}}{T^{p}}\right) O_p\left(   N^{-\gamma m} L(N^m) \right)  . \label{equation:subsamplingProofTailSecondTerm}
	\end{align}  
	 $L(N^m)$ diverges at rate slower than any power of $N^m$. Then the expression in  \eqref{equation:subsamplingProofTailSecondTerm} is $o_p(1)$ if   for some   $\kappa>0$ it holds that 
	$N^{1/\beta -\gamma m+\kappa}{T^{-p}} \to 0$. 
	However, such  a $\kappa>0$ exists  since assumptions of proposition \ref{proposition:ratesExtremeHalfUniformity} hold and $\gamma<\gamma m$.
	

	The remainder of the proof now proceeds as in \cite{Politis1994}.
	Define  
	\begin{equation} 
	\tilde{L}_{b, N, T} = \dfrac{1}{\binom{N}{b}}\sum_{s=1}^{\binom{N}{b}}  \I\curl*{ \dfrac{ \vartheta_{b, b, T}^{(s)} - U_F(\infty) }{\vartheta_{b-q, b, T}^{(s)} - \vartheta_{b, b, T}^{(s)}  } \leq x   }.
	\end{equation}
	On the event $E_{N, T}$ it holds that $
	\tilde{L}_{b, N, T}(x-\epsilon) \leq L_{b, N, T}(x)\leq \tilde{L}_{b, N, T}(x+\epsilon)$. 
	Since $P(E_{N, T})\to 1$, the above also holds with probability approaching one.	Observe that $
	\E(\tilde{L}_{n, b}(x)) = J_{b, T}(x) \Rightarrow J(x)$. 
	$\tilde{L}_{b, N, T}$ is a U-statistic of order $b$ with kernel bounded between 0 and 1. By theorem A on p. 201 in \cite{Serfling1980} it holds that $\tilde{L}_{b, N, T}(x) - J_{b, T}(x)\xrightarrow{p} 0$.
	 Then, as in \cite{Politis1994}, for any $\epsilon>0$ with probability approaching it holds that   $
	J(x-\epsilon)-\epsilon\leq L_{b, N, T}\leq J(x+\epsilon) +\epsilon$. 
	Letting $\epsilon\to 0$  shows that $L_{b, N, T}(x)\to J(x)$ at all continuity points $x$ of $J(x)$. 
	This also shows that $\hat{c}_{\alpha} = L^{-1}_{b, N, T}(\alpha)\to J^{-1}(\alpha)=c_{\alpha}$ since weak convergence of CDFs  is equivalent to weak convergence of quantiles.

	\air 
	
		Now consider the case of $l>0$. Add and subtract $U_F(b/l)$ in the subsampling estimator:
		\begin{equation} 
	L_{b, N, T}(x) = \dfrac{1}{\binom{N}{b}}\sum_{s=1}^{\binom{N}{b}}  \I\curl*{ \dfrac{ \vartheta_{b-r, b, T}^{(s)} - U_F(b/l) }{\vartheta_{b-q, b, T}^{(s)}  - \vartheta_{b, b, T}^{(s)} }  + \dfrac{ U_F(b/l) -\vartheta_{N- {Nl}/{b}, N, T} }{\vartheta_{b-q, b, T}^{(s)}  - \vartheta_{b, b, T}^{(s)} } \leq x   }.
	\end{equation}
		First, since $b$ satisfies the conditions of theorem \ref{theorem:feasibleEVT}
	\begin{equation}
	\dfrac{ \vartheta_{b-r, b, T}^{(s)} - U_F(b/l) }{\vartheta_{b-q, b, T}^{(s)}  - \vartheta_{b, b, T}^{(s)} }\Rightarrow    \dfrac{  (E_1^*+ \dots + E_{r+1}^*)^{-\gamma} +   l^{-\gamma}}{(E_1^*+ \dots + E_{q+1}^*)^{-\gamma}-(E_1^*)^{-\gamma} }. 
	\end{equation}
	Similarly to the above, 
	fix some $\epsilon>0$ and define the event
	\begin{equation} \label{appendix:equation:subsamplingEntNotMaximum}
	E_{N, T} = \curl*{ \abs*{ \dfrac{ U_F(b/l) - \vartheta_{N- {Nl}/{b}, N, T}}{\vartheta_{b-q, b, T}^{(s)}  - \vartheta_{b, b, T}^{(s)} } }\leq  \epsilon   }.
	\end{equation}
	The goal is is to show that $P(E_{N, T})\to 1$ for any $\epsilon>0$ under the assumptions of the theorem, the proof will then proceed as above.
  We show this separately for different signs of $\gamma$. 
	Let $\gamma<0$ and write the expression under the absolute value as
	\begin{align} 
	& \label{appendix:equation:subsamplingExtraTermLPositive}    \left[ \sqrt{  \frac{Nl}{b} } \frac{U_F(b/l) -\vartheta_{N- {Nl}/{b}, N, T}    }{\frac{b}{l}U'_F(\frac{b}{l})} \right] \left[ \frac{U_F(\infty)- U_F(b)}{ \vartheta_{b-q, b, T}^{(s)}  - \vartheta_{b, b, T}^{(s)}} \right] \left[\frac{  \frac{b}{l}U'_F\left(\frac{b}{l}\right)    }{\sqrt{\frac{Nl}{b}} (U_F(\infty)-U_F(b))}\right].
	\end{align}
	The first term is $O_p(1)$ by theorem \ref{theorem:intermediateNormality} taken with $k=Nl/b$.  Conditions of theorem \ref{theorem:intermediateNormality} hold, since  $k={Nl}/{b}\sim N^{1-m}$ and conditions of proposition \ref{proposition:ratesIntermediateHalfUniformity} are assumed to hold for $\delta=1-m$.  The second term is $O_p(1)$ by lemma    \ref{theorem:jointEVT} as the conditions of proposition \ref{proposition:ratesExtremeHalfUniformity} hold for $N$ (and hence for $b$).  Finally,  by corollary 1.1.14 in \cite{DeHaan2006} under assumption \ref{assumption:vonMises}  ${  ({b}/{l})U'_F\left(\frac{b}{l}\right)    }/{  (U_F(\infty)-U_F(b))}\to -\gamma$. Multiplying this by $\left(Nl/b \right)^{-1/2}\to0$ shows that overall the last term is $o(1)$.
%
%
We conclude that  overall the expression in eq. \eqref{appendix:equation:subsamplingExtraTermLPositive} is $o_p(1)$.

	For $\gamma>0$ instead write the expression under the absolute value in eq.  \eqref{appendix:equation:subsamplingEntNotMaximum} as 
	\begin{equation} 
	\dfrac{  U_F(b/l) -\vartheta_{N-\frac{Nl}{b}, N, T} }{\vartheta_{b-q, b, T}^{(s)}  - \vartheta_{b, b, T}^{(s)} }  =  \left[\sqrt{ \frac{Nl}{b} } \frac{U_F(b/l) -\vartheta_{N-\frac{Nl}{b}, N, T}    }{\frac{b}{l}U'_F(\frac{b}{l})} \right]\left[  \frac{ U_F(b)}{ \vartheta_{b-q, b, T}^{(s)}  - \vartheta_{b, b, T}^{(s)}}  \right] \left[\frac{  \frac{b}{l}U'_F\left(\frac{b}{l}\right)    }{\sqrt{\frac{Nl}{b}} U_F(b)}\right].
	\end{equation}
	The last term is $o(1)$ by corollary 1.1.12 in \cite{DeHaan2006},   other terms are as above.
	
	For $\gamma=0$   write the term of interest as 
	\begin{equation}
	\dfrac{  U_F(b/l) -\vartheta_{N-\frac{Nl}{b}, N, T} }{\vartheta_{b-q, b, T}^{(s)}  - \vartheta_{b, b, T}^{(s)} }  =  \left[ \sqrt{ \frac{Nl}{b} } \frac{U_F(b/l) -\vartheta_{N-\frac{Nl}{b}, N, T}    }{\frac{b}{l}U'_F(\frac{b}{l})}  \right]\left[ \frac{ \frac{b}{l}U'_F\left(\frac{b}{l}\right)   }{ \vartheta_{b-q, b, T}^{(s)}  - \vartheta_{b, b, T}^{(s)}} \right]\left[ \frac{  1 }{\sqrt{\frac{Nl}{b}} }\right].
	\end{equation}
	The first term is $O_p(1)$ as above. The second term is $O_p(1)$ by lemma   \ref{theorem:jointEVT} as by corollaries 1.1.10 and 1.2.4 in \cite{DeHaan2006} under assumption \ref{assumption:vonMises}  we may  take $\hat{f}\left(U_F(N)\right) = NU'_F(N)$.  Finally, the last term is $o(1)$.\end{proof}



 
 \begin{proof}[Proof of remark \ref{remark:gammaEstimation}]
 	\label{appendix:page:proof-convergence-gamma-hat}
 	
 Let $\gamma>0$ and	consider $\hat{\gamma}_H$.  Under conditions of theorems \ref{theorem:evtNoisy}  eq. \eqref{equation:noiselessFirstOrderAuxiliaryLimit} holds with $U_T$ in place of $U_F$. By lemma 1.2.10 in \cite{DeHaan2006}, this convergence is equivalent to $U_T(tx)/U_T(x)\to x^{\gamma}$ for all $x>0$.  Now proof of theorem 3.2.2 in \cite{DeHaan2006} applies with $U_T$ in place of $U_F$ ($U$ in their notation).
 
 Consistency of $\hat{\gamma}_{PWM}$ for $\gamma<1$ holds by theorem 3.6.1 in \cite{DeHaan2006} as eq.  \eqref{equation:noiselessFirstOrderAuxiliaryLimit} holds with $U_T$ in place of $U_F$ under the conditions of theorems \ref{theorem:evtNoisy} and \ref{theorem:intermediateNormality}.
  \end{proof}
 
\subsection{Proof of Theorem \ref{theorem:feasibleIVTnoiseless}}

We begin by establishing several supporting lemmas.
\begin{lem}
	
\label{lemma:uniformSameLimitPair}  Let $U_{1, N}\leq \dots \leq U_{N, N}$ be the order statistics from an IID sample of size $N$ from a Uniform$[0, 1]$ distribution.  If $k=o(N), s=\floor{\sqrt{k}}$, and $k\to \infty$, then
	\begin{align}
	\begin{pmatrix}
	\sqrt{k}\left(\frac{N}{k}U_{k+1, N}-1 \right) \\
	\sqrt{k}\left(\frac{N}{k+s}U_{k+s+1, N}-1 \right) 
	\end{pmatrix}	\Rightarrow  N\left(0, \begin{pmatrix}
	1 & 1\\
	1 & 1
	\end{pmatrix}  \right).
	\end{align}  

\end{lem}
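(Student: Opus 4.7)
The plan is to use the Rényi representation of uniform order statistics, reduce the problem to sums of iid exponentials, and then show that the difference of the two coordinates is $o_p(1)$, so that the joint limit is degenerate.

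First, I will use the fact that if $E_1, E_2, \ldots$ are iid standard exponential random variables and $S_j = E_1 + \cdots + E_j$, then $(U_{k+1,N}, U_{k+s+1,N}) \stackrel{d}{=} (S_{k+1}/S_{N+1}, S_{k+s+1}/S_{N+1})$. Writing $R_N = N/S_{N+1}$ and applying the SLLN and CLT to $S_{N+1}$ gives $R_N = 1 + O_p(N^{-1/2})$. Substituting yields
\begin{align}
\sqrt{k}\left(\tfrac{N}{k}U_{k+1,N}-1\right) &= \sqrt{k}\left(\tfrac{S_{k+1}}{k}-1\right) + O_p\!\left(\sqrt{k/N}\right) \equiv A_N + o_p(1),\\
\sqrt{k}\left(\tfrac{N}{k+s}U_{k+s+1,N}-1\right) &= \sqrt{k}\left(\tfrac{S_{k+s+1}}{k+s}-1\right) + O_p\!\left(\sqrt{k/N}\right) \equiv B_N + o_p(1),
\end{align}
where the remainder terms are $o_p(1)$ because $k = o(N)$.

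Second, I will argue that $A_N \Rightarrow N(0,1)$. By the CLT applied to the iid sum $S_{k+1}$ and direct computation of mean and variance, $\sqrt{k}(S_{k+1}/k-1)$ has mean $1/\sqrt{k}\to 0$ and variance $(k+1)/k\to 1$, so $A_N \Rightarrow N(0,1)$.

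The main step, and the only delicate one, is to show $A_N - B_N \xrightarrow{p} 0$, which forces the joint limit to be $(Z,Z)$ with $Z\sim N(0,1)$ and hence to have the claimed degenerate covariance. Setting $T_s = S_{k+s+1}-S_{k+1}$, which is a sum of $s$ iid standard exponentials independent of $S_{k+1}$, algebraic manipulation gives
\begin{equation}
A_N - B_N = \frac{\sqrt{k}}{k(k+s)}\bigl(sS_{k+1} - k T_s\bigr).
\end{equation}
The numerator $sS_{k+1}-kT_s$ has mean $s(k+1)-ks = s$ and variance $s^2(k+1)+k^2 s$ by independence, whose dominant term is $k^2 s$ since $s=\lfloor\sqrt{k}\rfloor$. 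Hence $sS_{k+1}-kT_s = O_p(k\sqrt{s}) = O_p(k^{5/4})$, and $A_N - B_N = O_p(\sqrt{k}\cdot k^{5/4}/k^2) = O_p(k^{-1/4}) = o_p(1)$.

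Combining the three steps via Slutsky's theorem yields the joint convergence
\begin{equation}
\begin{pmatrix} \sqrt{k}(\tfrac{N}{k}U_{k+1,N}-1) \\ \sqrt{k}(\tfrac{N}{k+s}U_{k+s+1,N}-1) \end{pmatrix}
\Rightarrow \begin{pmatrix} Z \\ Z \end{pmatrix},\qquad Z\sim N(0,1),
\end{equation}
which has the stated covariance matrix. The only obstacle worth paying attention to is the bookkeeping in the difference bound: one must verify that $s=\lfloor\sqrt{k}\rfloor$ is the right scaling to make $A_N - B_N$ vanish; had $s$ grown like $k^{a}$ for $a \geq 1/2$, the variance contribution of order $k^2 s$ combined with the prefactor $\sqrt{k}/[k(k+s)]$ would still give $O_p(k^{(a-1)/2})$, which is only $o_p(1)$ precisely in the range $a<1$, so the choice $s=\lfloor\sqrt{k}\rfloor$ is comfortable but the bound is what drives the degeneracy.
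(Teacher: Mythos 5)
Your proof is correct, and it takes a genuinely different route from the paper's. The paper works directly with the uniform order statistics: it cites lemma 2.2.3 of \cite{DeHaan2006} for the marginal, and for the key step it decomposes the difference $\sqrt{k}(\tfrac{N}{k+s}U_{k+s+1,N}-1) - \sqrt{k}(\tfrac{N}{k}U_{k+1,N}-1)$ into three pieces and controls each by exploiting that differences and individual uniform order statistics have Beta distributions, applying Chebyshev to those Betas. You instead pass to the R\'enyi representation $(U_{k+1,N},U_{k+s+1,N})\stackrel{d}{=}(S_{k+1}/S_{N+1},S_{k+s+1}/S_{N+1})$ with $S_j$ a cumulative sum of iid exponentials, peel off the $N/S_{N+1}$ factor as a $o_p(1)$ perturbation (using $k=o(N)$), and then obtain the degeneracy from the clean identity $A_N-B_N=\sqrt{k}(sS_{k+1}-kT_s)/(k(k+s))$ with $T_s$ independent of $S_{k+1}$; a direct mean--variance bookkeeping gives $A_N-B_N=O_p(k^{-1/4})$. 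What your approach buys is transparency: the independence structure is explicit, the variance of the numerator is immediate ($s^2(k+1)+k^2s$), and it is easy to see exactly how large $s$ can grow before the coupling breaks (any $s=o(k)$ still works, as your closing remark correctly shows). What the paper's Beta approach buys is that it never leaves the uniform sample, so it slots directly into the De Haan--Ferreira machinery used elsewhere in the appendix without introducing an auxiliary exponential sample. Both proofs are valid and roughly equal in length; the underlying estimate is the same Chebyshev-type bound, just expressed in different coordinates.
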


\begin{proof}
	By lemma 2.2.3 in \cite{DeHaan2006} $
	\sqrt{k}\left( \frac{N}{k}U_{k+1, N}-1 \right)\Rightarrow N(0, 1) \equiv Z.$
	To show the result, we only need to show that the suitable scaled difference between
	$U_{k+1, N}$ and $U_{k+s-1, N}$  converges to zero in probability.   Consider
	\begin{align} 
	& 	\sqrt{k}\left(\dfrac{N}{k+s}U_{k+s+1, N}-1 \right) - 	\sqrt{k}\left(\dfrac{N}{k}U_{k+1, N}-1 \right)\\ & =  \sqrt{k}N\left(  \dfrac{k}{k} \dfrac{1}{k+s} U_{k+s+1, N} - \dfrac{1}{k} U_{k+1, N}     \right) 	\\
	&  = 	 \dfrac{N}{\sqrt{k}}\left(   U_{k+s+1}-  U_{k+1} - \dfrac{s}{N+1} - \dfrac{s}{k+s}U_{k+s+1, N} + \dfrac{s}{N+1}   \right)\\
&
  = 	 \dfrac{N}{\sqrt{k}}\left(   U_{k+s+1}-  U_{k+1} - \dfrac{s}{N+1} \right) -\dfrac{N}{\sqrt{k}}\left( \dfrac{s}{k+s}U_{k+s+1, N} - \dfrac{s}{k+s} \dfrac{k+s+1}{N+1} \right)\\
 & \quad  + \dfrac{N}{\sqrt{k}}\left(\dfrac{s}{N+1} - \dfrac{s}{k+s}\dfrac{k+s+1}{N+1} \right). \label{appendix:equation:uniformSameLimitPairDecomposition}
	\end{align}
We show that each of  the  terms in the last equality in eq. \eqref{appendix:equation:uniformSameLimitPairDecomposition}   is $o_p(1)$.  The last term:
	\begin{equation}
	\dfrac{N}{\sqrt{k}} \dfrac{s}{N+1}\left(1- \dfrac{k+s+1}{k+s} \right) \sim \dfrac{s}{\sqrt{k}}\left( 1- \dfrac{k+s+1}{k+s} \right) \to 0 \text{ as }s\sim\sqrt{k}, k\to\infty.
	\end{equation}
Consider the first term. A difference of order statistics from the uniform distribution follows a beta distribution: if $p>r$, then $U_{p, N}-U_{r, N} \sim $ Beta($p-r, N-p+r+1)$.  
Let $\delta>0$, then
	\begin{align}
& 	P\left( \abs*{ \dfrac{N}{\sqrt{k}}\left( U_{k+s+1, N}-  U_{k+1, N}  - \dfrac{s}{N+1} \right)} \geq \delta  \right) \\ 
	& = P\left( \abs*{\text{Beta}(s, N-s+1) - \E\left(\text{Beta}(s, N-s+1) \right) }\geq  \dfrac{\sqrt{k}}{N}\delta  \right)\\
	 &\leq  \dfrac{ \var(\text{Beta}(s, N-s+1) )   }{\delta^2 {k}/{N^2}}
	=  \dfrac{  \frac{ s(N-s+1)}{(N+1)^2 (N+2)}  }{\delta^2 \frac{k}{N^2}}    
\sim  \dfrac{\frac{s}{N^2}}{\delta^2 \frac{k}{N^2}} 
	=	 \frac{s}{\delta^2 k} \to 0.
	\end{align}
Last, turn to the second term. Since $U_{k+s+1, N}\sim $Beta($k+s+1, n-k-s$), for $\delta>0$ we have 
	\begin{align} 
& 	P\left( \dfrac{N}{\sqrt{k}}\abs*{\dfrac{s}{k+s}U_{k+s+1, N} -\dfrac{s}{k+s}\dfrac{k+s+1}{N+1}}\geq \delta    \right) \\
	& = P\left(\abs*{ \text{Beta}(k+s+1, N-k-s) - \E\left(\text{Beta}(k+s+1, N-k-s) \right) }\geq \delta \dfrac{\sqrt{k}(k+s)  }{Ns}   \right)\\
	& \leq \var(\text{Beta}(k+s+1, N-k-s) )\dfrac{ N^2s^2}{ \delta^2 k(k+s)^2  }   =  \dfrac{(k+s+1)(N-k-s) }{(N+1)^2(N+2)}\dfrac{ N^2s^2}{ \delta^2 k(k+s)^2  } 
	\\
	& \sim  \dfrac{(k+s+1)s^2 }{k(k+s)^2}   \sim \dfrac{1}{k+s} \to 0. 
	\end{align}The assertion of the lemma now follows.
\end{proof}

\begin{lem}
	\label{lemma:intermediatePairNoiseless} 
Let $\theta$ be sampled IID from $F$, let assumption	 \ref{assumption:vonMises} hold.  Let $k=o(N), k\to\infty, s=\floor{\sqrt{k}}$.
  Let $\theta_{N-k, N}, \theta_{N-k-s, N}$ be the order statistics from $F$.  Then as $N\to\infty$
	\begin{equation} 
	\sqrt{k}	\begin{pmatrix}
	\dfrac{ \theta_{N-k, N}- U_F\left(\frac{N}{k} \right) }{\frac{N}{k}U_F'\left(\frac{N}{k} \right)}
	\\
	\dfrac{ \theta_{N-k-s, N}- U_F\left(\frac{N}{k+s} \right) }{\frac{N}{k+s}U_F'\left(\frac{N}{k+s} \right)}
	\end{pmatrix} \Rightarrow N\left(0, \begin{pmatrix}
	1 & 1\\
	1 & 1
	\end{pmatrix}  \right).
	\end{equation}

\end{lem}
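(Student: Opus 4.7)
The plan is to reduce the joint convergence statement for the two intermediate order statistics to the joint uniform-order-statistic result already established in Lemma \ref{lemma:uniformSameLimitPair}, via a Taylor expansion justified by the regular variation of $U_F'$ under the von Mises assumption.

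\textbf{Step 1 (Quantile representation).} Use the standard probability integral transform together with the distributional identity $1 - U_{N-j, N} \overset{d}{=} U_{j+1, N}$ and the definition of $U_F$ in \eqref{appendix:equation:definitionOfUT} to write, jointly,
\begin{equation}
\bigl(\theta_{N-k, N}, \theta_{N-k-s, N}\bigr) \overset{d}{=} \bigl(U_F(1/U_{k+1, N}), \ U_F(1/U_{k+s+1, N})\bigr),
\end{equation}
where $U_1, \dots, U_N$ are iid Uniform$[0, 1]$ as in Lemma \ref{lemma:uniformSameLimitPair}. This preserves the dependence between the two statistics.

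\textbf{Step 2 (Linearization via MVT).} Apply the mean value theorem coordinate-wise. For the first coordinate, there exists $\xi_{1, N}$ between $1/U_{k+1, N}$ and $N/k$ such that
\begin{equation}
\sqrt{k}\,\frac{U_F(1/U_{k+1, N}) - U_F(N/k)}{(N/k)\, U_F'(N/k)} = \frac{U_F'(\xi_{1, N})}{U_F'(N/k)} \cdot \sqrt{k}\left(\frac{k}{N\, U_{k+1, N}} - 1\right).
\end{equation}
Write $\frac{N}{k} U_{k+1, N} = 1 + W_{1, N}/\sqrt{k}$ where $W_{1, N} = \sqrt{k}(\frac{N}{k} U_{k+1, N} - 1)$; expanding the reciprocal and using $W_{1, N} = O_p(1)$ (Lemma \ref{lemma:uniformSameLimitPair}) yields
\begin{equation}
\sqrt{k}\left(\frac{k}{N\, U_{k+1, N}} - 1\right) = -W_{1, N} + o_p(1).
\end{equation}
Proceed identically for the second coordinate, obtaining the analogous expression with $W_{2, N} = \sqrt{k}(\frac{N}{k+s} U_{k+s+1, N} - 1)$ and intermediate point $\xi_{2, N}$ between $1/U_{k+s+1, N}$ and $N/(k+s)$.

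\textbf{Step 3 (Handling the derivative ratios and the mismatched normalization).} Under assumption \ref{assumption:vonMises}, corollary 1.1.10 in \cite{DeHaan2006} gives $U_F' \in RV_{\gamma - 1}$, hence $U_F'(tx)/U_F'(t) \to x^{\gamma - 1}$ locally uniformly in $x$. Since $\frac{k}{N}\,Y_{N-k, N} = 1/(\frac{N}{k} U_{k+1, N}) \xrightarrow{p} 1$, we have $\xi_{1, N}/(N/k) \xrightarrow{p} 1$, so $U_F'(\xi_{1, N})/U_F'(N/k) \xrightarrow{p} 1$; the analogous statement holds in the second coordinate. Moreover, since $s/k \to 0$, the ratio $(N/(k+s))U_F'(N/(k+s))$ to $(N/k)U_F'(N/k)$ tends to $1$, so replacing the natural normalization by the normalization in the lemma statement introduces only a $1 + o(1)$ factor in the second coordinate.

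\textbf{Step 4 (Assemble).} Combining Steps 1--3 yields
\begin{equation}
\sqrt{k}\begin{pmatrix} \dfrac{\theta_{N-k, N} - U_F(N/k)}{(N/k)U_F'(N/k)} \\[1ex] \dfrac{\theta_{N-k-s, N} - U_F(N/(k+s))}{(N/(k+s))U_F'(N/(k+s))} \end{pmatrix} = -\begin{pmatrix} W_{1, N} \\ W_{2, N} \end{pmatrix} + o_p(1),
\end{equation}
and Lemma \ref{lemma:uniformSameLimitPair} gives $(W_{1, N}, W_{2, N}) \Rightarrow N(0, \mathbf{1}\mathbf{1}')$, where $\mathbf{1} = (1,1)'$. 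The claim follows by the continuous mapping theorem and Slutsky's lemma (the sign flip does not change the degenerate limit).

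\textbf{Main obstacle.} The delicate point is Step 3: one must justify $U_F'(\xi_{N})/U_F'(N/k) \xrightarrow{p} 1$ despite $\xi_N$ being random. This requires local uniform convergence of the regular variation relation and control of the fluctuations of $\xi_N/(N/k)$ around $1$, which is delivered by the pointwise convergence in Lemma \ref{lemma:uniformSameLimitPair} combined with the monotonicity of the quantile transform. A secondary technical point is that the standard intermediate-order CLT (theorem 2.2.1 in \cite{DeHaan2006}) only gives marginal convergence of each coordinate; the joint convergence with degenerate covariance is not a direct application and must be engineered through the shared uniform sample as above.
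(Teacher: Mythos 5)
Your proof is correct and arrives at the result by a genuinely different route than the paper's. The paper applies the Cramer--Wold device, then rewrites each coordinate as an integral $\sqrt{k}\int_1^{k/(NU_{k+1, N})} U_F'((N/k)t)/U_F'(N/k)\,dt$, and squeezes the integrand between $(1\pm\varepsilon)t^{\gamma-1\pm\varepsilon'}$ via Potter's inequalities before integrating and invoking the delta method. You instead linearize each coordinate with an exact mean-value expansion, so that the deviation of $\theta_{N-k,N}$ factors as a derivative ratio $U_F'(\xi_{1,N})/U_F'(N/k)$ times $-W_{1,N} + o_p(1)$, and then kill the derivative ratio by locally uniform convergence of the regular-variation limit $U_F'(tx)/U_F'(t)\to x^{\gamma-1}$ evaluated at the random point $\xi_{1,N}/(N/k)\xrightarrow{p}1$; joint convergence then follows directly from Lemma \ref{lemma:uniformSameLimitPair} and Slutsky with no need for Cramer--Wold. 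Both routes lean on the same regular-variation structure of $U_F'$ delivered by the von Mises condition, but yours replaces the integral-plus-Potter control by a pointwise MVT linearization, which is shorter and more transparent; the paper's Potter-based squeeze is somewhat more robust in that it furnishes explicit deterministic envelopes valid uniformly over the domain of integration, whereas your argument needs $\xi_N$ to concentrate around $N/k$, which it does but which you correctly flag as the delicate step. Two small notes: the remark in your Step 3 about a ``mismatched normalization'' in the second coordinate is unnecessary, since the lemma statement already normalizes that coordinate by $(N/(k+s))U_F'(N/(k+s))$, which is exactly what your MVT produces; and you should mention that assumption \ref{assumption:vonMises} guarantees $U_F$ is differentiable in a right neighborhood of $\infty$, so the MVT applies on the relevant (random) interval with probability tending to one.
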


\begin{proof}
	
	We use the Cramer-Wold device together with a technique used by \cite{DeHaan2006} in proving an asymptotic normality result for a single statistic  under a von Mises condition (see proof of theorem 2.2.1 therein).
	Observe that  (in notation of lemma \ref{lemma:uniformSameLimitPair}) $
	(\theta_{N-k, N}, \theta_{N-k-s, N})\overset{d}{=} \left(U_F\left(1/{U_{k+1, N}} \right), U_F\left( {1}/{U_{k+s+1, N}} \right) \right).$
	Let $(c_1, c_2)\in \R^2$ and examine
	\begin{align} 
	& c_1\sqrt{k}\dfrac{ \theta_{N-k, N}- U_F\left(\frac{N}{k} \right) }{\frac{N}{k}U_F'\left(\frac{N}{k} \right)} + 	 c_2\sqrt{k}\dfrac{ \theta_{N-k-s, N}- U_F\left(\frac{N}{k+s} \right) }{\frac{N}{k+s}U_F'\left(\frac{N}{k+s} \right)}\\ \overset{d}{=} & 	 c_1\sqrt{k}\dfrac{ U_F\left(\frac{N}{k} \frac{k}{N U_{k+1, N}} \right)- U_F\left(\frac{N}{k} \right) }{\frac{N}{k}U_F'\left(\frac{N}{k} \right)} + 	 c_2\sqrt{k}\dfrac{ U_F\left(\frac{N}{k+s} \frac{k+s}{N U_{k+s+1, N}} \right)- U_F\left(\frac{N}{k+s} \right) }{\frac{N}{k+s}U_F'\left(\frac{N}{k+s} \right)}\\
	=  & c_1\sqrt{k}\int_1^{k/(NU_{k+1 N})} \dfrac{U_F'\left( \frac{N}{k}t \right)}{U'_F\left(\frac{N}{k} \right)}dt +  c_2\sqrt{k}\int_1^{(k+s)/(NU_{k+s+1, N})} \dfrac{U_F'\left( \frac{N}{k+s}t \right)}{U'_F\left(\frac{N}{k+s} \right)}dt. \label{appendix:equation:feasibleIVTintegralLemma}
	\end{align}
	
	Under assumption \ref{assumption:vonMises} $U'_F\in RV_{\gamma-1}$ by corollary 1.1.10 in \cite{DeHaan2006} (up to sign). Then  	by Potter's inequalities (proposition B.1.9 (5) in \cite{DeHaan2006})  for any $\varepsilon, \varepsilon'>0$ starting from some $N_0$ for $t\geq 1$ it holds that
	\begin{align} 
	& (1-\varepsilon) t^{\gamma-1-\varepsilon'} < \dfrac{U_F'\left(\frac{N}{k} t\right)}{U'_F\left(\frac{N}{k} \right)}< (1+\varepsilon)t^{\gamma-1+\varepsilon'},\\
	& (1-\varepsilon) t^{\gamma-1-\varepsilon'} < \dfrac{U_F'\left(\frac{N}{k+s} t\right)}{U'_F\left(\frac{N}{k+s} \right)}< (1+\varepsilon)t^{\gamma-1+\varepsilon'}.
	\end{align} 
Multiplying by $\sqrt{k}$ and taking integrals with limits of integration as in in eq. \eqref{appendix:equation:feasibleIVTintegralLemma},  we obtain
	for $c_1, c_2\geq 0$
	\begin{multline}
	c_1 (1-\varepsilon)\sqrt{k} \dfrac{  \left(\frac{k}{NU_{k+1, N}} \right)^{\gamma-\varepsilon'}-1}{\gamma-\varepsilon'} + c_2 	 (1-\varepsilon)\sqrt{k} \dfrac{  \left(\frac{k+s}{NU_{k+s+1, N}} \right)^{\gamma-\varepsilon'}-1}{\gamma-\varepsilon'} \\
	\leq c_1  \sqrt{k}\dfrac{U_F\left(\frac{1}{U_{k+1, N}} \right)- U_F\left(\frac{N}{k} \right) }{\frac{N}{k}U_F'\left(\frac{N}{k} \right)}+ c_2\sqrt{k}\dfrac{U_F\left(\frac{1}{U_{k+s+1, N}} \right)- U_F\left(\frac{N}{k+s} \right) }{\frac{N}{k}U_F'\left(\frac{N}{k+s} \right)} \\
	\leq c_1 (1+\varepsilon)\sqrt{k} \dfrac{  \left(\frac{k}{NU_{k+1, N}} \right)^{\gamma+\varepsilon'}-1}{\gamma+\varepsilon'}+ c_2 (1+\varepsilon)\sqrt{k} \dfrac{  \left(\frac{k+s}{NU_{k+s+1, N}} \right)^{\gamma+\varepsilon'}-1}{\gamma+\varepsilon'}.
	\end{multline}
	Similar inequalities apply for different combinations of signs of $c_1, c_2$, though with $(1-\varepsilon)$ replaced by $(1+\varepsilon)$ for the terms with $c_i<0$.
	 By lemma  \ref{lemma:uniformSameLimitPair} and  the delta method 
	 we get that for any $\varepsilon'>0$ that satisfies $\varepsilon'\neq \pm\gamma$ 
	\begin{equation}
	\sqrt{k} \begin{pmatrix}
	\frac{  \left(\frac{k}{NU_{k+1, N}} \right)^{\gamma+\varepsilon'}-1}{\gamma+\varepsilon'}\\
	\frac{  \left(\frac{k}{NU_{k+s+1, N}} \right)^{\gamma+\varepsilon'}-1}{\gamma+\varepsilon'}
	\end{pmatrix} \Rightarrow N\left(0, \begin{pmatrix}
	1 & 1\\
	1 & 1
	\end{pmatrix}  \right), \quad \sqrt{k} \begin{pmatrix}
	\frac{  \left(\frac{k}{NU_{k+1, N}} \right)^{\gamma-\varepsilon'}-1}{\gamma-\varepsilon'}\\
	\frac{  \left(\frac{k}{NU_{k+s+1, N}} \right)^{\gamma-\varepsilon'}-1}{\gamma-\varepsilon'}
	\end{pmatrix} \Rightarrow N\left(0, \begin{pmatrix}
	1 & 1\\
	1 & 1
	\end{pmatrix}  \right).
	\end{equation}
	with convergence being joint for the two vectors.
	Since $\varepsilon>0$ is arbitrary, we obtain that  
\begin{align} 
	c_1  \sqrt{k}\frac{ \theta_{N-k, N}- U_F\left(\frac{N}{k} \right) }{\frac{N}{k}U_F'\left(\frac{N}{k} \right)}+ c_2\sqrt{k}\frac{ \theta_{N-k-s, N}- U_F\left(\frac{N}{k+s} \right) }{\frac{N}{k}U_F'\left(\frac{N}{k+s} \right)}\end{align}
	has the same asymptotic distribution as \begin{align}
	c_1  \sqrt{k} \dfrac{  \left(\frac{k}{NU_{k+1, N}} \right)^{\gamma\pm \varepsilon'}-1}{\gamma\pm\varepsilon'}+ c_2 \sqrt{k} \dfrac{  \left(\frac{k+s}{NU_{k+s+1, N}} \right)^{\gamma\pm\varepsilon'}-1}{\gamma\pm\varepsilon'}.\end{align}
	Finally, the conclusion of the lemma follows by the Cramer-Wold device,
\end{proof}

\begin{proof}[Proof of theorem \ref{theorem:feasibleIVTnoiseless}]
	
	First, we focus on part (1) and establish the result for  $\theta$.
Since $F$ is differentiable, so is $U_F$, and  $
	U'_F(t)  = [{(1-F(U_F(t)))^2}]/[{f(U_F(t))}]$.
	 Since $F(U_F(t))$ is monotonic, the monotonicity assumption on $f$ implies that eventually also $U'_F$ is non-increasing/non-decreasing.
	Let $s=\floor{\sqrt{k}}$.
	Recall $F^{-1}(1-k/N) = U_F(N/k)$, then
	\begin{align} 
 	& \dfrac{\theta_{N-k, N}- F^{-1}\left(1- \frac{k}{N} \right)}{\theta_{N-k, N}-\theta_{N-k-s, N}}
 		  =    \dfrac{\frac{\sqrt{k}}{\frac{N}{k}U'_F(N/k)} \left(  \theta_{N-k, N}- U_F(N/k) \right) }{\frac{\sqrt{k}}{\frac{N}{k}U'_F(N/k)} \left(\theta_{N-k, N}-\theta_{N-k-s, N}  \right) }. \label{appendix:equation:feasibleIVTnoiselessSelfNormalizedDecomposition}
	\end{align}
	By lemma \ref{lemma:intermediatePairNoiseless} the numerator weakly converges to $Z\equiv N(0, 1)$
	
	We show that the denominator converges to 1 in probability.
	 Rewrite the denominator as 
	\begin{align} 
\label{appendix:equation:feasibleIVTnoiselessDenominatorDecomposition}	& \frac{\sqrt{k}}{\frac{N}{k}U'_F(N/k)} \left(\theta_{N-k, N}-\theta_{N-k-s, N}  \right) \\
	= &  \frac{\sqrt{k}}{\frac{N}{k}U'_F(N/k)} \left(\theta_{N-k, N} -U_F\left(\dfrac{N}{k} \right) \right)  - \frac{\sqrt{k}}{\frac{N}{k}U'_F(N/k)} \left(\theta_{N-k-s, N} - U_F\left(\dfrac{N}{k+s} \right) \right)  \\ &  + \frac{\sqrt{k}}{\frac{N}{k}U'_F(N/k)} \left(U_F\left(\dfrac{N}{k} \right) - U_F\left(\dfrac{N}{k+s} \right) \right).
	\end{align} 
	The second term can be written as
	\begin{align}
	&    \frac{\sqrt{k}}{\frac{N}{k+s}U'_F(N/(k+s))} \left(\theta_{N-k-s, N} - U_F\left(\dfrac{N}{k+s} \right) \right)  \dfrac{\frac{N}{k+s}U'_F(N/(k+s))}{\frac{N}{k}U'_F(N/k)}.	
	\end{align}
	By assumption \ref{assumption:vonMises} and corollary 1.1.10 in \cite{DeHaan2006} 
	 ${U'_F(tx)}/{U'_F}(t) \to x^{\gamma-1}$ as $t\to\infty$ locally uniformly in $(0, \infty)$. Hence 
	\begin{equation}
	\dfrac{\frac{N}{k+s}U'_F(N/(k+s))}{\frac{N}{k}U'_F(N/k)} = \dfrac{k}{k+s} \dfrac{U'_F\left(\frac{N}{k}\frac{k}{k+s}\right)}{U'_F\left(  \frac{N}{k}\right)}  \to 1.
	\end{equation}
	since ${k}/{k+s} \to 1$. 
	Thus, by lemma \ref{lemma:intermediatePairNoiseless}
	\begin{equation} 
	\frac{\sqrt{k}}{\frac{N}{k}U'_F(N/k)} \left(\theta_{N-k, N} -U_F\left(\dfrac{N}{k} \right) \right)  - \frac{\sqrt{k}}{\frac{N}{k}U'_F(N/k)} \left(\theta_{N-k-s, N} - U_F\left(\dfrac{N}{k+s} \right) \right)\xrightarrow{p} 0.
	\end{equation}
	Last, we examine the residual. Observe that $U_F'\geq 0$. First  suppose $U'_F$ is eventually non-increasing, in which case
	\begin{align}
	\left(U_F\left(\dfrac{N}{k} \right)  - U_F\left(\dfrac{N}{k+s} \right) \right) & = \int_{(N/k)\times k/(k+s)}^{N/k} U'_F\left(t \right)dt \leq \dfrac{s}{k+s} \frac{N}{k} U'_F\left(\dfrac{N}{k} \dfrac{k}{k+s} \right).
	\end{align}
	Using the above expression,   we obtain an upper bound for the residual term
	\begin{align} 
	\frac{\sqrt{k}}{\frac{N}{k}U'_F\left(\frac{N}{k} \right)} \left(U_F\left(\dfrac{N}{k} \right) - U_F\left(\dfrac{N}{k+s} \right) \right)  & \leq \frac{\sqrt{k}}{\frac{N}{k}U'_F\left(\frac{N}{k} \right)} \left( \dfrac{s}{k+s} \frac{N}{k} U'_F\left(\dfrac{N}{k} \dfrac{k}{k+s} \right)   \right)
	\\& =    \dfrac{\sqrt{k}s}{k+s} \frac{U'_F\left(\frac{N}{k} \frac{k}{k+	s} \right)   }{U'_F\left(\frac{N}{k} \right)} \to 1
	\end{align}
	since $s= \floor{\sqrt{k}}$ and by local uniform convergence of the ratio of $U'_F$. 
	At the same time, since $U_F'$ is eventually non-increasing, we obtain a lower bound
	\begin{equation} 
	\int_{(N/k)\times k/(k+s)}^{N/k} U'_F\left(t \right)dt \geq \dfrac{s}{k+s} \frac{N}{k} U'_F\left(\dfrac{N}{k}   \right)
	\end{equation}
which shows that	
	\begin{equation} 
	\frac{\sqrt{k}}{\frac{N}{k}U'_F\left(\frac{N}{k} \right)} \left(U_F\left(\dfrac{N}{k} \right) - U_F\left(\dfrac{N}{k+s} \right) \right) \geq  \dfrac{\sqrt{k}s}{k+s} \to 1.
	\end{equation}
	Hence, the residual term converges to 1.
	If instead $U'_F$ is eventually non-decreasing, swap the $N/(k+s)$ and $N/k$ terms. 
	
	By combining the above arguments and eq. \eqref{appendix:equation:feasibleIVTnoiselessDenominatorDecomposition}, we conclude that the denominator of eq.  \eqref{appendix:equation:feasibleIVTnoiselessSelfNormalizedDecomposition} converges to 1 i.p.
We conclude that
	\begin{equation} 
	\dfrac{\theta_{N-k, N}- U_F(N/k)}{\theta_{N-k, N}-\theta_{N-k-s, N}} \Rightarrow \dfrac{Z}{1}=Z\sim N(0, 1)
	\end{equation}

 Now turn to part (2) and consider the noisy estimates $\vartheta_i$.  If the conclusion of lemma  \ref{lemma:intermediatePairNoiseless}  holds with $\vartheta$ in place of $\theta$,  
	then  the proof of part (1)  applies with  order statistics of $\vartheta$ replacing their noiseless counterparts $\theta$. 
	In light of this, it is sufficient to establish the result of lemma \ref{lemma:intermediatePairNoiseless} for $\vartheta$.
	We proceed similarly to proof of theorem \ref{theorem:intermediateNormality} and we apply the Cramer-Wold device.

Observe that $
	(\vartheta_{N-k, N, T}, \vartheta_{N-k-s, N, T})\overset{d}{=} \left( U_T(Y_{N-k, N}),  U_T(Y_{N-k-s, N})\right) $
	for $Y$ as in the proof of theorem  \ref{theorem:intermediateNormality}.
	Let 	 $c_1, c_2\in \R$. Then  as in eq. \eqref{appendix:equation:intermediateDecomposition}
	\begin{align} 
	& c_1\sqrt{k}\dfrac{ \vartheta_{N-k, N, T}- U_F\left(\frac{N}{k} \right) }{\frac{N}{k}U_F'\left(\frac{N}{k} \right)} + 	 c_2\sqrt{k}\dfrac{ \vartheta_{N-k-s, N, T}- U_F\left(\frac{N}{k+s} \right) }{\frac{N}{k+s}U_F'\left(\frac{N}{k+s} \right)}\\ 
	\overset{d}{=} &  c_1\sqrt{k}\dfrac{ U_F\left(Y_{N-k, N} \right)- U_F\left(\frac{N}{k} \right) }{\frac{N}{k}U_F'\left(\frac{N}{k} \right)} + 	 c_2\sqrt{k}\dfrac{ U_F\left(Y_{N-k-s, N} \right)- U_F\left(\frac{N}{k+s} \right) }{\frac{N}{k+s}U_F'\left(\frac{N}{k+s} \right)} \\ &+ c_1\sqrt{k}\dfrac{ U_T\left(Y_{N-k, N} \right)- U_F\left(Y_{N-k, N} \right) }{\frac{N}{k}U_F'\left(\frac{N}{k} \right)} + 	 c_2\sqrt{k}\dfrac{ U_T\left(Y_{N-k-s, N} \right)- U_F\left(Y_{N-k-s, N} \right) }{\frac{N}{k+s}U_F'\left(\frac{N}{k+s} \right)} \\ 
		\overset{d}{=} &  c_1\sqrt{k}\dfrac{ \theta_{N-k, N}- U_F\left(\frac{N}{k} \right) }{\frac{N}{k}U_F'\left(\frac{N}{k} \right)} + 	 c_2\sqrt{k}\dfrac{ \theta_{N-k-s, N}- U_F\left(\frac{N}{k+s} \right) }{\frac{N}{k+s}U_F'\left(\frac{N}{k+s} \right)} \\ &+ c_1\sqrt{k}\dfrac{ U_T\left(Y_{N-k, N} \right)- U_F\left(Y_{N-k, N} \right) }{\frac{N}{k}U_F'\left(\frac{N}{k} \right)} + 	 c_2\sqrt{k}\dfrac{ U_T\left(Y_{N-k-s, N} \right)- U_F\left(Y_{N-k-s, N} \right) }{\frac{N}{k+s}U_F'\left(\frac{N}{k+s} \right)}  .
	\end{align}
	The result now follows:  the first two terms converge to the desired limit by lemma \ref{lemma:intermediatePairNoiseless}; the third and the fourth term converge to zero i.p., this convergence follows as in the proof of theorem \ref{theorem:intermediateNormality} applied at $k$ and $k+s$.
\end{proof}

\end{document}